\newtheorem{theorem}{Theorem}
\newtheorem{lemma}{Lemma}
\newtheorem{definition}{Definition}
\newtheorem{corollary}{Corollary}
\newtheorem{remark}{Remark}
\newcommand{\mR}{\mathbb{R}}
\newcommand{\mC}{\mathbb{C}}
\newcommand{\mN}{\mathbb{N}}
\newcommand{\mE}{\mathbb{E}}
\newcommand{\mS}{\mathbb{S}}
\newcommand{\mQ}{\mathbb{Q}}
\newcommand{\cD}{\mathcal{D}}
\newcommand{\cM}{\mathcal{M}}
\newcommand{\cH}{\mathcal{H}}
\newcommand{\cF}{\mathcal{F}}
\newcommand{\cP}{\mathcal{P}}
\newcommand{\cC}{\mathcal{C}}
\newcommand{\cS}{\mathcal{S}}
\newcommand{\cG}{\mathcal{G}}
\newcommand{\cW}{\mathcal{W}}
\newcommand{\cO}{\mathcal{O}}
\newcommand{\cV}{\mathcal{V}}
\newcommand{\cN}{\mathcal{N}}
\newcommand{\cU}{\mathcal{U}}
\newcommand{\ux}{\underline{x}}
\newcommand{\uxb}{\underline{x} \grave{}}
\newcommand{\uyb}{\underline{y} \grave{}}
\newcommand{\uy}{\underline{y}}
\newcommand{\uv}{\underline{v}}
\newcommand{\ua}{\underline{\alpha}}
\newcommand{\uom}{\underline{\omega}}
\newcommand{\osp}{\mathfrak{osp}(m|2n)}
\begin{document}
%%%%%%%%%%%%%%%%%%%%%%%%%%%%%%%%%%%%%%%%%%%%%%%%%%%%%%%%
\title{The orthosymplectic supergroup in harmonic analysis}

\author{K.\ Coulembier\thanks{Ph.D. Fellow of the Research Foundation - Flanders (FWO), E-mail: {\tt Coulembier@cage.ugent.be}}}

\date{\small{Department of Mathematical Analysis}\\
\small{Faculty of Engineering -- Ghent University\\ Krijgslaan 281, 9000 Gent,
Belgium}\\
\vspace{4mm}
\small{School of Mathematics and Statistics}\\
\small{University of Sydney\\ Sydney, Australia}}

\maketitle

\begin{abstract}
The orthosymplectic supergroup $OSp(m|2n)$ is introduced as the supergroup of isometries of flat Riemannian superspace $\mR^{m|2n}$ which stabilize the origin. It also corresponds to the supergroup of isometries of the supersphere $\mS^{m-1|2n}$. The Laplace operator and norm squared on $\mR^{m|2n}$, which generate $\mathfrak{sl}_2$, are orthosymplectically invariant, therefore we obtain the Howe dual pair $(\mathfrak{osp}(m|2n),\mathfrak{sl}_2)$. This Howe dual pair solves the problems of the dual pair $(SO(m)\times Sp(2n),\mathfrak{sl}_2)$, considered in previous papers. In particular we characterize the invariant functions on flat Riemannian superspace and show that the integration over the supersphere is uniquely defined by its orthosymplectic invariance. The supersphere manifold is also introduced in a mathematically rigorous way. Finally we study the representations of $\mathfrak{osp}(m|2n)$ on spherical harmonics. This corresponds to the decomposition of the supersymmetric tensor space of the $m|2n$-dimensional super vectorspace under the action of $\mathfrak{sl}_2\times\osp$. As a side result we obtain information about the irreducible $\osp$-representations $L_{(k,0,\cdots,0)}^{m|2n}$. In particular we find branching rules with respect to $\mathfrak{osp}(m-1|2n)$.
\end{abstract}

\textbf{MSC 2000 :}   17B10, 58C50, 17B15\\
\noindent
\textbf{Keywords :} Howe dual pair, orthosymplectic supergroup and superalgebra, not completely reducible representations, invariant integration on supermanifolds, supersphere, 

\section{Introduction}

In recent work, we have been developing a new approach to the study of supergeometry, by means of harmonic analysis, see e.g. \cite{MR2539324, CDBS3, Mehler, DBS9, DBE1} and references therein. We consider a superspace $\mR^{m|2n}$ generated by $m$ commuting or bosonic variables and $2n$ anti-commuting or fermionic variables. The main feature of this approach is the introduction of a super Laplace operator $\nabla^2$ and a super norm squared $R^2$. These generate the Lie algebra $\mathfrak{sl}_2$ and are invariant under the action of the product of the orthogonal and the symplectic group $O(m)\times Sp(2n)$. This leads to the pair $\left(SO(m)\times Sp(2n),\mathfrak{sl}_2\right)$ as a generalization of the Howe dual pair $(SO(m),\mathfrak{sl}_2)$ for harmonic analysis on $\mR^m$, see \cite{MR0986027}. However this pair is not complete. The consequences of this incompleteness are listed at the beginning of section \ref{introOSp}. In short, there are too many invariant functions and linear maps on $\mR^{m|2n}$ and the spaces of spherical harmonics (i.e. polynomial null-solutions of $\nabla^2$ of a fixed degree) are not irreducible $SO(m)\times Sp(2n)$-representations. This implies that the polynomials do not have a multiplicity free decomposition into irreducible pieces under the joint action of the dual pair, which is a fundamental property of Howe dual pairs, see \cite{MR2028498, MR0986027, MR1272070, MR1827078}. A similar situation occurs in the theory of Dunkl operators, with pair $(G,\mathfrak{sl}_2)$ where $G <O(m)$ is a Coxeter group, see e.g. \cite{MR1827871}.

These problems will be solved by considering the orthosymplectic superalgebra $\mathfrak{osp}(m|2n)$, which has not been studied previously in our approach to super harmonic analysis. We introduce Lie supergroups in the sheaf-theoretical approach of Berezin and Leites, see e.g. \cite{Berezin, MR0998351, Leites}. This category is equivalent with the category of super Harish-Chandra pairs, see e.g. \cite{MR0580292, Vishnyakova}. We explicitly prove that the notions of invariance of functions and linear maps with respect to Lie supergroups, and irreducibility of corepresentations of Lie supergroups can be characterized in the expected way by the Harish-Chandra pair.

In classical harmonic analysis, the orthogonal group $O(m)$ corresponds to the group of isometries of $\mR^m$ which stabilize the origin. We use the definition in \cite{MR2434470} of the supergroup of isometries of a Riemannian supermanifold. By doing so we obtain the Lie supergroup $OSp(m|2n)$, corresponding to the Harish-Chandra pair $(O(m)\times Sp(2n), \mathfrak{osp}(m|2n))$ as the isometry supergroup to consider on $\mR^{m|2n}$. Since the generators of $\mathfrak{sl}_2$ are $OSp(m|2n)$-invariant we obtain the dual pair $(\mathfrak{osp}(m|2n),\mathfrak{sl}_2)$ or $(SOSp(m|2n),\mathfrak{sl}_2)$. Because this dual pair solves the aforementioned problems it is the correct Howe dual pair for super harmonic analysis on $\mR^{m|2n}$. The bigger algebra into which the dual pair is embedded is $\mathfrak{osp}(4n+1|2m)\supset \osp\times \mathfrak{sl}_2$. A more general theory of Howe dual pairs with orthosymplectic algebras can be found in \cite{MR2028498, MR1272070, MR1827078}. 

We prove that the integration over the supersphere as introduced in \cite{DBE1, MR2539324} is orthosymplectically invariant, which yields a nice unique characterization in Theorem \ref{Pizuniekosp}. The explicit expression for an invariant integration over the supersphere is important for the generalization of field theories on the sphere to superspace, see e.g. \cite{MR1897209, MR2125586} and for the theory of invariant integration on supergroups, see e.g. \cite{MR2667819, MR1845224, MR2172158}. In the approach of the present paper the supersphere manifold $\mS^{m-1|2n}$ is constructed in a mathematically rigorous way and imbedded in flat superspace. The supersphere integral can then also be seen as the unique orthosymplectically invariant integral on $\mS^{m-1|2n}$. The existence and unicity of this integral can also be deduced from the theory in \cite{MR2667819}. 

In \cite{MR2395482} it was proven that the spaces of spherical harmonics $\cH_k$ of fixed degree are irreducible $\mathfrak{osp}(m|2n)$-modules if $m-2n\ge 2$. In Theorem \ref{irrH} we prove that this result still holds for all $(m,n)$ such that $m-2n\not\in-2\mN$. This gives a multiplicity free irreducible direct sum decomposition for the supersymmetric tensor space $T(V)$ under the action of $\mathfrak{sl}_2\times \osp$, with $V=L^{m|2n}_{(1,0,\cdots,0)}$ the natural representation space for $\osp$. This implies that $(\osp,\mathfrak{sl}_2)$ is a Howe dual pair for $T(V)\cong \cP$ with $\cP$ the polynomials on $\mR^{m|2n}$ if $m-2n\not\in-2\mN$.

Also the $\mathfrak{osp}(m|2n)$-representations of spherical harmonics for the case $m-2n\in-2\mN$ are studied. They are always indecomposable, but the irreducibility depends on the degree $k$ of the spherical harmonics $\cH_k$. Using all these results we obtain polynomial realizations of the simple $\osp$-module $L^{m|2n}_{(k,0,\cdots,0)}$ (\cite{MR051963}) with highest weight $(k,0,\cdots,0)$ for all values of $(m,n,k)\in\mN^3$. This generalizes the construction of $L^{m|2n}_{(k,0,\cdots,0)}$ as the tracefree supersymmetric part of tensor products of $L^{m|2n}_{(1,0,\cdots,0)}$ of \cite{MR0621253} to the case were $\cH_k$ is not irreducible in Theorem \ref{Lkrep}. As a side result this gives the dimension of the irreducible representation $L^{m|2n}_{(k,0,\cdots,0)}$ and the possibility to obtain the character formula, which are non trivial since the representations are almost all atypical, see e.g. \cite{MR051963}. We also obtain results for branching of $L^{m|2n}_{(k,0,\cdots,0)}$ as an $\mathfrak{osp}(m-1|2n)$-module, for all distinct cases in Theorem \ref{branchingThm}.

The paper is organized as follows. First a short introduction to supermanifolds, Lie supergroups and harmonic analysis on superspace is given. The supergroup of isometries of $\mR^{m|2n}$ which stabilize the origin is calculated to be $OSp(m|2n)$. Next, we show that the introduction of the Howe dual pair $(\mathfrak{osp}(m|2n),\mathfrak{sl}_2)$ solves the three problems of the dual pair $(SO(m)\times Sp(2n),\mathfrak{sl}_2)$. A transparent and mathematically rigorous definition of the supersphere manifold is given. Subsequently the unicity of the supersphere integral as an orthosymplectically invariant functional is proven for functions on $\mR^{m|2n}$ or on $\mS^{m-1|2n}$. Finally, the $(S)OSp(m|2n)$-corepresentations of spherical harmonics and the representations $L^{m|2n}_{(k,0,\cdots,0)}$ are studied and it is shown that the full space of polynomials on $\mR^{m|2n}$ is an irreducible $\mathfrak{osp}(4n+1|2m)$-module.

\section{Preliminaries}

\label{preliminaries}

\subsection{Supermanifolds and Lie supergroups}
We consider the definition of a supermanifold as in the approach of Berezin, see a.o. \cite{Berezin, MR0580292, Leites}. A sheaf $\cO$ of superalgebras with unity on an $m$-dimensional manifold $\cM_0$, maps each open subset $U$ of $\cM_0$ into a superalgebra $\cO(U)$. In particular, this defines a sheaf of superalgebras on each open subset $V$ of $\cM_0$, $\cO^{V}$ by $\cO^{V}(U)=\cO(U)$ for each open subset $U$ of $V$. We denote the sheaf of smooth functions on a manifold $\cM_0$ by $\cC^\infty_{\cM_0}$ and also use the notation $\cC^\infty(\cM_0)=\cC^\infty_{\cM_0}(\cM_0)$. For a manifold morphism $f:\cM_0\to \cN_0$ and a sheaf $\cO$ on $\cM_0$ the push forward of $\cO$ is a sheaf on $\cN_0$ defined by $f_{\ast}\cO (U)=\cO (f^{-1}(U))$, for each open $U\subset\cN_0$.

\begin{definition}
\label{supmandef}
A supermanifold of super dimension $D|N$ is a ringed space $\cM=(\cM_0,\cO_{\cM})$, with $\cM_0$ the underlying smooth $D$-dimensional manifold and $\cO_{\cM}$ the structure sheaf which is a sheaf of $\mR$-superalgebras with unity on $\cM_0$. The sheaf $\cO_{\cM}$ satisfies the local triviality condition: there exists an open cover $\{U_i\}_{i\in I}$ of $\cM_0$ and isomorphisms of sheaves of superalgebras $T_i:\cO_{\cM}^{U_i}\to \cC^\infty_{U_i}\otimes\Lambda_{N}$, where $\Lambda_{N}$ is the Grassmann algebra generated by $N$ anti-commuting variables.
\end{definition}

The sections of the structure sheaf (the elements of the superalgebra $\cO_{\cM}(\cM_0)$) are referred to as superfunctions on $\cM$. We also use the notation $\cO(\cM)=\cO_{\cM}(\cM_0)$. A morphism of supermanifolds $\Phi:\cM\to\cN$ is a morphism of ringed spaces $(\phi,\phi^\sharp)$. The mapping $\phi$ is a manifold morphism $\cM_0\to\cN_0$ and $\phi^\sharp:\cO_{\cN}\to\phi_{\ast}\cO_{\cM}$ is a morphism of sheaves on $\cN$. The morphism $\Phi$ is entirely determined by its induced mapping of sections $\phi^\sharp_{\cN_0}:\cO(\cN)=\cO_\cN(\cN_0)\to\cO_{\cM}\left(\phi^{-1}(\cN_0)\right)$. This was proven in \cite{MR0580292}. In order to keep the notations simpler we will also denote $\phi^\sharp_{\cN_0}$ by $\phi^\sharp$.

Each point $p\in\cM_0$ defines a morphism $\delta_p:(\{\ast\},\mR)\to(\cM_0,\cO_{\cM}$), with $\{\ast\}$ the manifold consisting of one point. This is defined by 
\begin{eqnarray}
\label{evalpunt}
\delta_p^{\sharp}(f)&=&[f]_0(p)
\end{eqnarray}
where $[\cdot]_0$ is the canonical projection $\cO_{\cM}(\cM_0)\to\cC^\infty_{\cM_0}(\cM_0)$, also denoted by $\delta^\sharp_\cM$. The identity morphism on a supermanifold $\cM$ is given by 
\begin{eqnarray*}
id_\cM=(id_{\cM_0},id^\sharp_{\cM}):&&\cM\to\cM.
\end{eqnarray*}
We define the supermanifold morphisms $\rho:\cM\to\cM\otimes \cM$ given by $\rho_0(u)=u\otimes u$ and $\rho^\sharp(f\otimes g)=fg$ and $C:\cM\to(\ast,\mR)$ defined by $C^\sharp(\lambda)=\lambda 1_{\cM}$ with $1_{\cM}=1_{\cM_0}$ the unit function on $\cM_0$.

\begin{definition}
\label{defLiegr}
A Lie supergroup is a supermanifold $\cG=(\cG_0,\cO_\cG)$ equipped with the additional structure of a supermanifold morphism $\mu=(\mu_0,\mu^\sharp):\cG\otimes\cG\to\cG$, an involutive diffeomorphism $\nu=(\nu_0,\nu^\sharp):\cG\to\cG$ and a distinguished point $e_\cG\in\cG_0$. These satisfy
\begin{eqnarray}
\nonumber
\mu\circ\left(id_{\cG}\times \mu\circ \left(id_\cG\times id_\cG\right)\right)&=&\mu\circ\left(\mu\circ\left( id_\cG\times id_\cG\right)\times id_{\cG}\right)\\
\label{multid}
\mu\circ\left(id_\cG\times \delta_{e_\cG}\right)&=&id_\cG=\mu\circ\left(\delta_{e_\cG}\times id_\cG\right)\\
\nonumber
\mu\circ(id_\cG\times \nu)\circ\rho&=&\delta_{e_{\cG}}\circ C=\mu\circ( \nu\times id_\cG)\circ\rho.
\end{eqnarray}
\end{definition}
With this definition, the underlying manifold $\cG_0$ of a Lie supergroup $\cG$ becomes a Lie group with multiplication $\mu_0$ and inversion $\nu_0$.

The structure sheaf of a Lie supergroup is always globally split, see e.g. \cite{MR0998351, MR0760837}. This means that for a general Lie supergroup $\cG$ of dimension $D|N$, the structure sheaf satisfies
\begin{eqnarray}
\label{sheafsplit}
\cO_{\cG}&\cong &\cC^\infty_{\cG_0}\otimes \Lambda_N.
\end{eqnarray}
In other words, the anti-commuting variables correspond to global coordinates.

For each superalgebra we use the notation $|\cdot|$ for the gradation, $|a|$ is $0$ if $a$ is even and $1$ if $a$ is odd.

The action of a Lie supergroup $\cG=(\cG_0,\cO_{\cG})$ on a supermanifold $\cM=(\cM_0,\cO_{\cM})$ is given by a supermanifold morphism $\Psi:\cG\otimes \cM\to\cM$ which satisfies the conditions
\begin{eqnarray}
\label{multactie}
\Psi\circ\left(id_\cG\otimes \Psi\right)&=&\Psi\circ\left(\mu\otimes id_{\cM}\right),\\
\nonumber
\Psi\circ\left(\delta_{e_\cG}\otimes id_\cM\right)&=&id_\cM.
\end{eqnarray}

The Lie superalgebra $\mathfrak{g}$ corresponding to a Lie supergroup $\cG$ is given by the elements of $\mbox{Der} \cO(G)$ which are right invariant, see \cite{MR2069561}. Here $\mbox{Der} \cO(G)$ is the superalgebra of superderivations of the algebra $\cO(\cG)$.
\begin{definition}
\label{defLiealg1}
The Lie superalgebra $\mathfrak{g}$ corresponding to the Lie supergroup $\cG$ is the algebra with elements given by $\{X\in Der\cO({\cG})| \mu^\sharp \circ X =(X\otimes id_\cG^\sharp)\circ\mu^\sharp\}$ and with superbracket given by the supercommutator $[X,Y]=X\circ Y-(-1)^{|X||Y|}Y\circ X$.
\end{definition}
This invariance property immediately extends to the universal enveloping algebra:
\begin{eqnarray}
\label{invuniv}
\mu^\sharp \circ X &=&\left(X\times id_{\cG}^\sharp\right)\circ\mu^\sharp\qquad\forall X\in\cU(\mathfrak{g}).
\end{eqnarray}

The distinguished point $e_{\cG}$ of $\cG_0$ will also be denoted by $0$. The evaluation in that point \eqref{evalpunt} therefore is denoted by $\delta_0^\sharp$. The following lemma is an immediate consequence of definition \ref{defLiealg1}.
\begin{lemma}
\label{defLiealg2}
An $\mR$-vector space basis for the Lie superalgebra $\mathfrak{g}$ is given by the derivatives $Y_j=\left(\delta^\sharp_{0}\circ \cD_j\otimes id_{\cG}^\sharp\right)\circ\mu^\sharp$ for $\{\cD_j,j=1,\cdots,D\}$ the bosonic derivatives with respect to a set of local coordinates on $\cG_0$ in a neighborhood around the origin and $\{\cD_j,j=D+1,\cdots,D+N\}$ the $N$ global Grassmann derivatives.
\end{lemma}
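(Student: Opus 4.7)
The plan is to identify $\mathfrak{g}$ with the super tangent space of $\cG$ at the origin, via the evaluation map $X \mapsto \delta^\sharp_0 \circ X$, and then to recognize the $Y_j$ as the images of the obvious basis of this tangent space under the inverse of this map. This is the supergeometric analogue of the standard identification of a Lie algebra with right-invariant vector fields via their value at the identity.

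First I would verify that each $Y_j$ lies in $\mathfrak{g}$. The superderivation property is automatic, since $\cD_j$ is a superderivation and $\mu^\sharp$ and $\delta^\sharp_0 \otimes id^\sharp_\cG$ are algebra morphisms. The right-invariance condition $\mu^\sharp \circ Y_j = (Y_j \otimes id^\sharp_\cG)\circ\mu^\sharp$ reduces to the coassociativity identity $(\mu^\sharp \otimes id^\sharp_\cG)\circ\mu^\sharp = (id^\sharp_\cG \otimes \mu^\sharp)\circ\mu^\sharp$ obtained by dualizing the associativity axiom in Definition \ref{defLiegr}.

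Next I would dualize the unit axiom $\mu\circ(\delta_{e_\cG}\times id_\cG)=id_\cG$ to the counit-type identity $(\delta^\sharp_0 \otimes id^\sharp_\cG)\circ\mu^\sharp = id^\sharp_\cG$. Applied to both sides of the invariance condition satisfied by an arbitrary $X\in\mathfrak{g}$, this yields the reconstruction formula
\begin{eqnarray*}
X \;=\; (\delta^\sharp_0 \otimes id^\sharp_\cG)\circ\mu^\sharp\circ X \;=\; (\delta^\sharp_0\circ X \otimes id^\sharp_\cG)\circ\mu^\sharp.
\end{eqnarray*}
Thus $X$ is determined by the super tangent vector $v:=\delta^\sharp_0\circ X$ at the origin, so the evaluation map is injective. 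Conversely, for any such $v$ I would set $X_v:=(v\otimes id^\sharp_\cG)\circ\mu^\sharp$ and check right-invariance (again from coassociativity) and the superderivation property (from the Leibniz rule for $v$ at the origin combined with the counit identity). The counit identity also gives $\delta^\sharp_0\circ X_v=v$, proving surjectivity. A basis of the super tangent space at the origin is given by $\{\delta^\sharp_0\circ\cD_j\}_{j=1}^{D+N}$, thanks to the local triviality in Definition \ref{supmandef} and the global splitting \eqref{sheafsplit}, and these map under $v\mapsto X_v$ precisely to the $Y_j$. Linear independence and spanning for $\{Y_j\}$ then follow from the corresponding properties of $\{\delta^\sharp_0\circ\cD_j\}$.

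The main obstacle in carrying this out will be the careful bookkeeping, with the correct Koszul signs, in the verification that $X_v$ is a superderivation of $\cO(\cG)$; this requires expanding $\mu^\sharp(fg)=\mu^\sharp(f)\mu^\sharp(g)$ in Sweedler-type notation and applying the counit on the complementary tensor factor to collapse the resulting terms into $X_v(f)\,g \pm f\,X_v(g)$. Everything else is essentially formal manipulation with $\mu^\sharp$ together with its associativity and unit properties.
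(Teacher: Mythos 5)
Your argument is correct and is exactly the standard identification the paper has in mind: the paper itself offers no proof, merely asserting that the lemma is ``an immediate consequence of definition \ref{defLiealg1}'', and your reconstruction formula $X=(\delta^\sharp_0\circ X\otimes id^\sharp_\cG)\circ\mu^\sharp$ (from the counit identity in \eqref{multid}) together with the converse construction $v\mapsto X_v$ (right-invariant by coassociativity, a superderivation by the Sweedler-type computation you describe) is precisely the content being taken for granted. The identification of $\{\delta^\sharp_0\circ\cD_j\}$ as a basis of the super tangent space at the origin via the global splitting \eqref{sheafsplit} correctly completes the argument.
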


By means of definition \ref{defLiealg1} we can define a mapping from Lie supergroups to pairs of Lie groups and Lie superalgebras, $\cG\to(\cG_0,\mathfrak{g})$. The Lie group and Lie superalgebra satisfy the compatibility conditions of a super Harish-Chandra pair (also known as a supergroup pair):

\begin{definition}
\label{HaCh}
A Harish-Chandra pair is a pair $(\cG_0,\mathfrak{g})$, consisting of a  Lie group $\cG_0$ and a Lie superalgebra $\mathfrak{g}=\mathfrak{g}_0\oplus \mathfrak{g}_1$ with $\mathfrak{g}_0$ the Lie algebra of $\cG_0$ and a representation Ad of $\cG_0$ on $\mathfrak{g}$ such that
\begin{itemize}
\item Ad on $\mathfrak{g}_0$ is the usual adjoint action,
\item the differential of the action in the identity is equal to the Lie superbracket, restricted to $\mathfrak{g}_0\times\mathfrak{g}$.
\end{itemize}
\end{definition}

The mapping $\cG\to(\cG_0,\mathfrak{g})$ from Lie supergroups to Harish-Chandra pairs described above is an equivalence of categories, see \cite{MR0580292}. The adjoint representation of $\cG_0$ on $\mathfrak{g}$ (in order to complete the Harish-Chandra pair) is defined as 
\begin{eqnarray}
\label{Adj}
Ad(g)X&=&(\delta_g^\sharp\times id_\cG^\sharp)\circ\mu^\sharp\circ X\circ (\delta_{g^{-1}}^\sharp\times id_{\cG}^\sharp)\circ\mu^\sharp.
\end{eqnarray} 
This action of $\cG_0$ on $\mathfrak{g}$ is trivially extended to the universal enveloping algebra $\cU(\mathfrak{g})$. 

As in the classical case, the invariance of a function is defined as follows.
\begin{definition}
\label{definvf}
A superfunction $f\in\cO({\cM})$ is invariant for the action $\Psi$ of a Lie supergroup $\cG$ on the supermanifold $\cM$ if
\begin{eqnarray*}
\psi^{\sharp}(f)&=&1_{\cG}\times f,
\end{eqnarray*}
with $1_\cG\in\cO_{\cG}(\cG_0)$ the unit function on $\cG_0$.
\end{definition}

The definition of invariance of a linear map is given by
\begin{definition}
\label{definvmap}
Consider a supermanifold $\cM=(\cM_0,\cO_\cM)$ and a Lie supergroup $\cG=(\cG_0,\cO_\cG)$ with action $\Psi$ on $\cM$. A linear map $T: \cO(\cM)\to \cV$ for some vector space $\cV$ is invariant with respect to the action $\Psi$ if the equality
\begin{eqnarray*}
\left(id^\sharp_\cG\times T\right)\circ \psi^\sharp&=&1_{\cG}\times T
\end{eqnarray*}
holds as maps $\cO({\cM})\to \cO(\cG)\times \cV$.
\end{definition}

The definition of a corepresentation of a Lie group (corepresentation of the Hopf algebra of functions on the group) trivially extends to the super case.
\begin{definition}
\label{defrep1}
A corepresentation of a Lie supergroup $\cG$ on a graded vector space $\cV$ is an even linear map $\chi:\cV\to\cO(\cG)\times\cV$ satisfying $(\delta^\sharp_{e_\cG}\times id_{\cV})\circ\chi=id_{\cV}$ and $(\tau\times id_{\cV})\circ(id_{\cG}^\sharp\times\chi)\circ\chi=(\mu^\sharp\times id_{\cV})\circ\chi$ with $id_{\cV}$ the identity map on $\cV$, $\mu$ the multiplication of $\cG$ and $\tau$ the graded flip operator, $\tau(f\times g)=(-1)^{|f||g|}g\times f$.
\end{definition}

A corepresentation $\psi^\sharp$ is irreducible if the vector space $\cV$ does not have a true sub-vector space $\cW$ such that $\psi^\sharp$ restricted to $\cW$ forms a corepresentation. A representation of a super Harish-Chandra pair is defined as follows, see e.g. \cite{MR2207328}.
\begin{definition}
\label{defrep2}
A representation of a super Harish-Chandra pair $(\cG_0,\mathfrak{g})$ on a graded vector space $\cV$ is a pair $\Pi=(\pi_0,\rho^\pi)$,
\begin{itemize}
\item $\pi_0$: $\cG_0\to Aut(\cV_0)\times Aut(\cV_1)$ a group morphism
\item $\rho^\pi$: $\mathfrak{g}\to End(\cV)$ a Lie superalgebra morphism
\end{itemize}
such that $d\pi_0=\rho^\pi$ on $\mathfrak{g_0}$ and $\rho^\pi\left(Ad(g)X\right)=Ad(\pi_0(g))\rho^\pi(X)$ for $X\in\mathfrak{g}$ and $g\in\cG_0$.
\end{definition}
A representation $\Pi$ is irreducible if the vector space $\cV$ does not have a sub-vector space $\cW$ such that $\Pi$ restricted to $\cW$ forms a representation.

Definition \ref{defrep1} of a corepresentation $\chi$ of a Lie supergroup $\cG$ on a graded vector space $\cV$ induces a representation of the Harish-Chandra pair of $\cG$. Define
\begin{equation}
\label{indHCrep}
\begin{cases}
&\pi_0(g)=\left(\delta_{g}^\sharp\times id_{\cV}\right)\circ\chi\quad \mbox{for} \quad g\in\cG_0\\
&\rho^\pi(X)=\left(\delta_{e_{\cG}}^{\sharp}\circ X\times id_{\cV}\right)\circ \chi\quad \mbox{for} \quad X\in\mathfrak{g}.
\end{cases}
\end{equation}

\subsection{Harmonic analysis on Euclidean space}
\label{classHarm}
We consider the Euclidean space $\mR^m$ with $m$ variables $\ux=(x_1,\cdots,x_m)$. The standard orthogonal metric leads to the differential operators $\nabla^2_b=\sum_{j=1}^m\partial_{x_j}^2$, $r^2=\sum_{j=1}^mx_j^2$ and $\mE_b=\sum_{j=1}^mx_j\partial_{x_j}$ acting on functions on $\mR^m$. The operators $-\nabla^2_b/2$, $r^2/2$ and $\mE_b+m/2$ generate the Lie algebra $\mathfrak{sl}_2$. Since they are invariant under the action of the orthogonal group $O(m)$ we obtain the Howe dual pair $(SO(m),\mathfrak{sl}_2)$, see \cite{MR0986027}. This duality is captured in the Fischer decomposition of the space of polynomials
\begin{eqnarray*}
\mR[x_1,\cdots,x_m]&=&\bigoplus_{j=0}^\infty\bigoplus_{k=0}^\infty r^{2j}\cH_k^b
\end{eqnarray*}
with $\cH_k^b$ the polynomial null-solutions of $\nabla_b^2$ of degree $k$. The blocks $r^{2j}\cH_k^b$ are exactly the irreducible blocks of the representation of $SO(m)$ on $\mR[x_1,\cdots,x_m]$. The spaces $\mR[r^2]\cH_k^b=\bigoplus_{j=0}^\infty r^{2j}\cH_k^b$ are irreducible lowest weight modules for $\mathfrak{sl}_2$, with lowest weight $k+m/2$ and weight vectors $r^{2j}\cH_k^b$. This implies that $\mR[x_1,\cdots,x_m]$ has a multiplicity free irreducible direct sum decomposition under the action of $\mathfrak{sl}_2\times \mathfrak{o}(m)$.

The embedding of the unit sphere is given by
\begin{eqnarray*}
\pi_{\mS^{m-1}}:& &\mS^{m-1}\to\mR^m.
\end{eqnarray*}
This defines a surjective mapping from smooth functions on $\mR^m$ to smooth functions on $\mS^{m-1}$ by restriction, $\pi_{\mS^{m-1}}^\sharp:\cC^\infty_{\mR^m}\to\cC_{\mS^{m-1}}^\infty$. This evaluation can be symbolically denoted by
\begin{eqnarray*}
\left[\pi_{\mS^{m-1}}^{\sharp}f\right](\uom)=f(\uom)=f\left(\frac{\ux}{r}\right)=\left[f(\ux)\right]_{r=1},
\end{eqnarray*}
where $\uom$ is a unit vector in $\mR^m$, used as a symbolic notation for local coordinates on the unit sphere. The integral $\int_{\mS^{m-1}}$, where $\int_{\mS^{m-1}(u)}$ represents integration over the sphere with radius $u$, is the unique linear map
\begin{eqnarray*}
\int_{\mS^{m-1}}:& &\cC^\infty(\mR^m)\to\cC^\infty(\mR^+);\qquad f\to \int_{\mS^{m-1}(u)}f
\end{eqnarray*}
which is $SO(m)$-invariant and satisfies $\int_{\mS^{m-1}(u)}1=\sigma_mu^{m-1}=\frac{2\pi^{m/2}}{\Gamma(m/2)}u^{m-1}$ and 
\begin{eqnarray*}
\left[f\right]_{r=u_0}=0&\Rightarrow&\int_{\mS^{m-1}(u_0)}f=0.
\end{eqnarray*}
Note that we omit the measure $d\sigma$ on the sphere. The reason is that we see the unit sphere integration as an invariant functional on the space of functions rather that integration over a manifold with invariant measure.

\subsection{Harmonic analysis on superspace}

We repeat some results on the theory of harmonic analysis on the flat supermanifold $\mR^{m|2n}=(\mR^m,\cC^\infty\otimes\Lambda_{2n})$, as developed in \cite{DBS9, DBE1, CDBS3}. The supervector $\bold{x}$ is defined as

\[
\bold{x}=(X_1,\cdots,X_{m+2n})=(\ux,\uxb)=(x_1,\cdots,x_m,{x\grave{}}_1,\cdots,{x\grave{}}_{2n}).
\]

The commutation relations for the Grassmann algebra and the bosonic variables are captured in the relation $X_iX_j=(-1)^{[i][j]}X_jX_i$ with $[i]=0$ if $i\le m$ and $[i]=1$ otherwise. The superdimension is defined as $M=m-2n$. The orthosymplectic metric $g$ on $\mR^{m|2n}$ is defined as $g\in\mR^{(m+2n)\times(m+2n)}$
\begin{eqnarray}
\label{defg}
g&=&\left( \begin{array}{c|c} I_m&0\\ \hline \vspace{-3.5mm} \\0&J
\end{array}
 \right)
\end{eqnarray}
with $J\in\mR^{2n\times 2n}$ given by 
\begin{eqnarray}
\label{Jmatrix}
J&=&\frac{1}{2}\left( \begin{array}{cccccc} 0&-1&&&\\1&0&&&\\&&\ddots&&\\&&&0&-1\\&&&1&0 
\end{array}
 \right).
 \end{eqnarray}

Define $X^j=\sum_iX_ig^{ij}$. The square of the `radial coordinate' is given by
\begin{eqnarray*}
R^2=\langle \bold{x},\bold{x}\rangle=\sum_{j=1}^{m+2n}X^jX_j =\sum_{i=1}^mx_i^2-\sum_{j=1}^n{x\grave{}}_{2j-1}{x\grave{}}_{2j}=r^2+\theta^2,
\end{eqnarray*}
see \cite{CDBS3}. The super Laplace operator is given by $\nabla^2=\nabla^2_b-4\sum_{j=1}^n\partial_{{x\grave{}}_{2j-1}}\partial_{{x\grave{}}_{2j}}$. The super Euler operator is defined as $\mE=\sum_{i=1}^mx_i\partial_{x_i}+\sum_{j=1}^{2n}{x\grave{}}_j\partial_{{x\grave{}}_j}=\mE_b+\mE_f$. The operators $-\nabla^2/2$, $R^2/2$ and $\mE+M/2$, with $M=m-2n$, again generate the Lie algebra $\mathfrak{sl}_2$, see \cite{DBE1}:
\begin{eqnarray}
\nonumber
\left[\nabla^2/2,R^2/2\right]&=&\mE+M/2,\\
\label{sl2rel}
\left[\nabla^2/2,\mE+M/2\right]&=&2\nabla^2/2\quad \mbox{and}\\
\nonumber
\left[R^2/2,\mE+M/2\right]&=&-2R^2/2.
\end{eqnarray}

The space of super polynomials is given by 
\begin{eqnarray*}
\cP&=&\mR[x_1,\cdots,x_m]\otimes \Lambda_{2n}\subset \cO(\mR^{m|2n})=\cC^\infty(\mR^m)\otimes\Lambda_{2n}.
\end{eqnarray*}
The polynomials of degree $k$ are the elements $P\in\cP$ which satisfy $\mE P=kP$. The corresponding space is denoted by $\cP_k$. The null-solutions of the super Laplace operator are called harmonic superfunctions. The space of the spherical harmonics of degree $k$ is denoted by $\cH_k=\cP_k\cap\ker \nabla^2$. In the purely fermionic case we use the notation $\cH_{k}^{f}$. The Fischer decomposition holds in superspace when the superdimension is not even and negative or in the purely fermionic case, see \cite{DBE1}.

\begin{lemma}[Fischer decomposition]
If $M=m-2n \not \in -2 \mN$, $\cP$ decomposes as
\begin{eqnarray}
\cP = \bigoplus_{k=0}^{\infty} \cP_k= \bigoplus_{j=0}^{\infty} \bigoplus_{k=0}^{\infty} R^{2j}\cH_k.
\label{superFischer}
\end{eqnarray}
In case $m=0$, the decomposition is given by $\Lambda_{2n} = \bigoplus_{k=0}^{n} \left(\bigoplus_{j=0}^{n-k} \theta^{2j} \cH^f_k \right)$.
\label{superFischerLemma}
\end{lemma}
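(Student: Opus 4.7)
The plan is to establish the Fischer decomposition by induction on $k$, using the $\mathfrak{sl}_2$-relations \eqref{sl2rel}. The central computational input I would derive first is the identity
\begin{equation*}
\nabla^2 (R^{2j} h) = 4j(\ell + M/2 + j - 1)\, R^{2(j-1)} h \qquad \text{for every } h \in \cH_\ell,
\end{equation*}
proved by an easy induction on $j$ from $[\nabla^2, R^2] = 4\mE + 2M$ and $[\mE, R^2] = 2R^2$; equivalently, this is the standard formula $[F, E^j] = -j E^{j-1}(H + j - 1)$ in $\cU(\mathfrak{sl}_2)$ applied with $E = R^2/2$, $F = -\nabla^2/2$ and $H = \mE + M/2$.

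For the general case $M \notin -2\mN$, the base cases $k = 0$ (constants) and $k = 1$ (linear polynomials, manifestly in $\ker \nabla^2$) are immediate. Assuming the decomposition in all degrees less than $k$, any $p \in \cP_k$ has $\nabla^2 p \in \cP_{k-2}$ with a unique expansion $\sum_{j\ge 0} R^{2j} h_j$, $h_j \in \cH_{k-2-2j}$, by the induction hypothesis. I would then look for $q = \sum_j c_j R^{2j} h_j$ so that $\nabla^2 (R^2 q) = \nabla^2 p$; the displayed identity applied with $\ell = k - 2 - 2j$ forces the choice $c_j = [4(j+1)(k - 2 - j + M/2)]^{-1}$. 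A short arithmetic check shows that this denominator vanishes for some $(k,j)$ with $k \ge 2$ and $0 \le j \le (k-2)/2$ precisely when $M \in \{0, -2, -4, \dots\} = -2\mN$, so the hypothesis $M \notin -2\mN$ makes the $c_j$'s legitimate and yields $p - R^2 q \in \ker \nabla^2 \cap \cP_k = \cH_k$. Directness of the resulting sum then follows by applying $\nabla^2$ term-by-term to any vanishing relation $\sum_j R^{2j} h_{k-2j} = 0$ and invoking the inductive uniqueness on $\cP_{k-2}$ together with the same non-vanishing of denominators.

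For the purely fermionic case $m = 0$, the value $M = -2n$ lies in $-2\mN$ and the general argument fails at boundary terms, so I would instead run the same induction inside the finite-dimensional algebra $\Lambda_{2n}$, using the Grassmann nilpotency $\theta^{2(n+1)} = 0$ to force $\theta^{2j} h_k^f = 0$ whenever $2j + k > 2n$. This truncates the decomposition to exactly the stated ranges $0 \le k \le n$ and $0 \le j \le n - k$, and simultaneously removes the would-be vanishing denominator at $j = n+1-k$ from the range where it would create trouble, since the corresponding summand is already zero by degree. The main obstacle throughout is the careful bookkeeping of the admissible range of $j$ so as to verify that the coefficient $k - j - 1 + M/2$ is non-zero precisely where an inverse is required, plus the separate handling of the fermionic truncation in the case $M \in -2\mN$.
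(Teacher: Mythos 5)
Your argument for the generic case $M\notin-2\mN$ is correct and is the standard one: the identity $\nabla^2(R^{2j}h)=4j(\ell+M/2+j-1)R^{2(j-1)}h$ for $h\in\cH_\ell$ follows as you say from \eqref{sl2rel}, the coefficients $c_j=[4(j+1)(k-2-j+M/2)]^{-1}$ are exactly what is needed, and your check that the denominators vanish for some admissible $(k,j)$ precisely when $M\in-2\mN$ is right, as is the directness argument via applying $\nabla^2$ to a vanishing relation. Note that the paper itself does not prove this lemma but cites \cite{DBE1}; your inductive $\mathfrak{sl}_2$ argument is essentially the proof given there for the generic case.

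The purely fermionic case, however, contains a genuine gap. First, the truncation $\theta^{2j}\cH^f_k=0$ for $j>n-k$ does \emph{not} follow from Grassmann nilpotency or degree counting: a homogeneous element of $\Lambda_{2n}$ vanishes for degree reasons only when its degree exceeds $2n$, i.e.\ when $2j+k>2n$, which gives $j>n-k/2$, not $j>n-k$. For instance, with $n=2$ and $k=2$ the space $\theta^{2}\cH^f_2$ sits in degree $4=2n$ and its vanishing is a nontrivial consequence of the $\mathfrak{sl}_2$ relations (apply $\nabla_f^2$ and use $\nabla_f^2(\theta^2 h)=4(k+M/2)h=0$ for $h\in\cH^f_2$, $M=-4$), not of nilpotency. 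Second, and more importantly, the vanishing denominator is not ``removed from the range where it would create trouble'': at the critical index the term $\theta^{2j}h_j$ appearing in the expansion of $\nabla_f^2 p$ is the \emph{top} nonzero power $\theta^{2(n-\ell)}\cH^f_\ell$ (of degree $2n-\ell\le 2n$), so it is not zero by degree; what must be shown is that the component of $\nabla_f^2 p$ in that piece vanishes, i.e.\ that the image of $\nabla_f^2$ from degree $k$ misses the top weight space, and this does not follow from the induction hypothesis on lower degrees. The clean repair is to abandon the induction in the fermionic case and invoke complete reducibility of the finite-dimensional $\mathfrak{sl}_2$-module $\Lambda_{2n}$ (Weyl's theorem) for the triple $E=\theta^2/2$, $F=-\nabla_f^2/2$, $H=\mE_f-n$: every irreducible constituent is spanned by $h,\theta^2h,\dots,\theta^{2(n-k)}h$ with $h\in\cH^f_k$ a lowest weight vector of weight $k-n\le0$, which yields the stated ranges and the decomposition simultaneously. (There is also a small indexing slip: with $h_j\in\cH^f_{k-2-2j}$ the denominator vanishes at $j=k-n-2$, not at $j=n+1-k$.)
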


The dimension of $\cH_k$, if $m\not=0$, is given by
\begin{eqnarray}
\label{dimHk}
\dim\cH_k&=&\sum_{i=0}^{\min(k,2n)}\binom{2n}{i}\binom{k-i+m-1}{m-1}-\sum_{i=0}^{\min(k-2,2n)}\binom{2n}{i}\binom{k-i+m-3}{m-1},
\end{eqnarray}
see \cite{DBE1}.

Superfunctions are often defined as finite Taylor expansions.
\begin{definition}
\label{fTaylor}
Consider a function $f=f_A{x\grave{}}_A\in\cC^\infty(\mR^d)\otimes\Lambda_{2p}$ and even superfunctions $\alpha_j(\bold{x})\in\cC^\infty(U)\otimes[\Lambda_{2n}]_0$, $j=1,\cdots,d$ and odd superfunctions $\beta_k(\bold{x})\in\cC^\infty(U)\otimes[\Lambda_{2n}]_1$, $k=1,\cdots,2p$ with $U$ an open subset of $\mR^m$. We expand the $\alpha_j$ as $\alpha_j(\bold{x})=[\alpha_j]_0(\ux)+\widetilde{\alpha}_j(\bold{x})$. In this expansion, $[\alpha_j]_0(\ux)$ is the bosonic part and $\widetilde{\alpha}_j(\bold{x})$ is nilpotent (in particular $\widetilde{\alpha}_j(\bold{x})^{n+1}=0$). The superfunction 
\[f(\alpha_1(\bold{x}),\cdots,\alpha_d(\bold{x}),\beta_1(\bold{x}),\cdots,\beta_{2p}(\bold{x}))\in\cC^\infty(U)\otimes\Lambda_{2n}\]
is defined as
\begin{eqnarray*}
\sum_{A}\left[\sum_{i_1,\cdots,i_d=0}^{n}\frac{f_A^{(i_1,\cdots,i_d)}\left([\alpha_1]_0(\ux),\cdots,[\alpha_d]_0(\ux)\right)}{i_1!\cdots i_d!}\widetilde{\alpha}_1(\bold{x})^{i_1}\cdots\widetilde{\alpha}_d(\bold{x})^{i_d}\right]\beta_A(\bold{x}),
\end{eqnarray*}
where $\beta_A(\bold{x})$ is defined as $\beta_{1}(\bold{x})^{a_1}\cdots\beta_{2p}(\bold{x})^{a_{2p}}$.
\end{definition}

The inner product of two supervectors $\bold{x}$ and $\bold{y}$ is given by
\begin{eqnarray}
\label{inprod}
\langle \bold{x},\bold{y}\rangle=\sum_{i,j=1}^{m+2n}X_ig^{ij}Y_j&=&\sum_{i=1}^mx_iy_i-\frac{1}{2}\sum_{j=1}^n({x\grave{}}_{2j-1}{y\grave{}}_{2j}-{x\grave{}}_{2j}{y\grave{}}_{2j-1}).
\end{eqnarray}
The commutation relations for two supervectors are determined by the relation $X_iY_j=$ $(-1)^{[i][j]}Y_jX_i$. This implies that the inner product \eqref{inprod} is symmetric, i.e. $\langle \bold{x},\bold{y}\rangle=\langle\bold{y},\bold{x}\rangle$.

The orthosymplectic superalgebra $\mathfrak{osp}(m|2n)$ can be generated by the following differential operators on $\mR^{m|2n}$ (see \cite{MR2395482})
\begin{eqnarray}
\label{ospgen}
L_{ij}&=&X_i\partial_{X^j}-(-1)^{[i][j]}X_j\partial_{X^i}
\end{eqnarray}
for $1\le i \le j \le m+2n$. The Laplace-Beltrami operator is defined as
\begin{eqnarray}
\label{LB}
\Delta_{LB}&=&R^2\nabla^2-\mE(M-2+\mE).
\end{eqnarray}
The Laplace-Beltrami operator is a Casimir operator of $\mathfrak{osp}(m|2n)$ of degree $2$ and can be expressed as (see \cite{MR0546778})
\begin{eqnarray}
\label{LBosp}
\Delta_{LB}&=&-\frac{1}{2}\sum_{i,j,k,l=1}^{m+2n}L_{ij}g^{il}g^{jk}L_{kl}.
\end{eqnarray} 
Up to an additive constant this also corresponds to a Casimir operator for $\mathfrak{sl}_2$.

The matrix Lie group $O(m)\times Sp(2n)$ corresponds to all the matrices $S\in\mR^{(m+2n)\times(m+2n)}$ satisfying
\begin{eqnarray}
\label{OxSpdef}
\langle S\cdot\bold{x},S\cdot\bold{y}\rangle=\langle\bold{x},\bold{y}\rangle,
\end{eqnarray}
which is equivalent with $S^TgS=g$ or $S\cdot R^2=R^2$. Each such matrix $S$ is a block matrix $S=\left( \begin{array}{c|c} A&0\\ \hline \vspace{-3.5mm} \\0&B
\end{array}
 \right)$ with $A\in\mR^{m\times m}$ satisfying $A^TA=I_m$ and $B\in\mR^{2n\times 2n}$ satisfying $B^TJB=J$. The matrix $J$ is given in equation \eqref{Jmatrix}. The action of $O(m)\times Sp(2n)$ on $\cC^\infty(\mR^m)\times\Lambda_{2n}$ is given by
\begin{eqnarray}
\label{actieOxSp}
\left(S,f(\bold{x})\right)\to f(S^{-1}\cdot\bold{x})
\end{eqnarray}
such that $(S\cdot\bold{x})_j=\sum_{j=1}^{m+2n}S_{jk}X_k$. The action of the Lie algebra $\mathfrak{o}(m)\oplus\mathfrak{sp}(2n)$ on $\mR^{m|2n}$ in equation \eqref{ospgen} is the differential in the origin of the action of the Lie group $O(m)\times Sp(2n)$.

Since $R^2$, $\nabla^2$ and $\mE+M/2$ generate $\mathfrak{sl}_2$ and are $O(m)\times Sp(2n)$-invariant, we obtain the dual pair $\left(SO(m)\times Sp(2n),\mathfrak{sl}_2\right)$. This is again closely related to the Fischer decomposition \eqref{superFischer} for $M\not\in-2\mN$ or $m=0$. The blocks $\bigoplus_{j}R^{2j}\cH_k$ are irreducible $\mathfrak{sl}_2$ lowest weight representations with weight vectors $R^{2j}\cH_k$ and lowest weight $k+M/2$. The weight vectors $R^{2j}\cH_k$ are $SO(m)\times Sp(2n)$-representations. However, in full superspace ($m\not=0\not= n$), these representations are not irreducible. This also implies that $\cP$ does not correspond to a multiplicity free irreducible direct sum decomposition for $ \mathfrak{sl}_2\times (SO(m)\times Sp(2n))$.

 In \cite{DBE1} the $SO(m)\times Sp(2n)$-module $\cH_k$ was decomposed into irreducible pieces. First, the following polynomials need to be introduced.
\begin{lemma}
If $0\le q \le n$ and $0\le k\le n-q$, there exists a homogeneous polynomial $f_{k,p,q}=f_{k,p,q}(r^2,\theta^2)$ (unique up to a multiplicative constant) of total degree $k$ such that $f_{k,p,q} \cH_p^b \otimes \cH_q^f \neq 0$ and $\Delta (f_{k,p,q} \cH_p^b \otimes \cH_q^f) = 0$.
This polynomial is given explicitly by
\[
 f_{k,p,q}=\sum_{s=0}^ka_sr^{2k-2s}\theta^{2s} \quad \mbox{with}\quad a_s=\binom{k}{s}\frac{(n-q-s)!}{\Gamma (\frac{m}{2}+p+k-s)}\frac{\Gamma(\frac{m}{2}+p+k)}{(n-q-k)!}.
\]
\label{polythm}
\end{lemma}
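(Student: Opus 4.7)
The plan is to write an arbitrary polynomial in $r^2$ and $\theta^2$ of total degree $k$ as
\[
f_{k,p,q} \;=\; \sum_{s=0}^{k} a_s\, r^{2(k-s)}\,\theta^{2s},
\]
and to pin down the coefficients $a_s$ via the harmonicity requirement. Because $r^{2(k-s)}$ and $H_p\in\cH_p^b$ involve only the commuting variables while $\theta^{2s}$ and $H_q\in\cH_q^f$ involve only the Grassmann variables, the splitting $\nabla^{2}=\nabla^{2}_b+\nabla^{2}_f$ yields, termwise,
\[
\nabla^{2}\!\bigl(r^{2(k-s)}\theta^{2s}H_pH_q\bigr) \;=\; \bigl(\nabla_b^{2} r^{2(k-s)}H_p\bigr)\theta^{2s}H_q \;+\; r^{2(k-s)}H_p\bigl(\nabla_f^{2}\theta^{2s}H_q\bigr).
\]

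Next I would compute the two pieces using the relations \eqref{sl2rel} restricted to each sector. Both triples $(\nabla_b^{2}/2,\,r^{2}/2,\,\mE_b+m/2)$ and $(\nabla_f^{2}/2,\,\theta^{2}/2,\,\mE_f-n)$ realise $\mathfrak{sl}_2$ with the standard brackets $[E,F]=H$, $[H,F]=2F$, $[H,E]=-2E$, and $H_p$, $H_q$ are lowest-weight vectors for the respective copies of weights $p+m/2$ and $q-n$. From the identity $EF^{t}v = t(\lambda+t-1)F^{t-1}v$ at a lowest weight vector $v$ of weight $\lambda$, one obtains
\[
\nabla_b^{2}\,r^{2j}H_p \;=\; 4j\bigl(j+p+\tfrac{m}{2}-1\bigr)r^{2(j-1)}H_p,\qquad \nabla_f^{2}\,\theta^{2t}H_q \;=\; 4t\bigl(t+q-n-1\bigr)\theta^{2(t-1)}H_q.
\]

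Substituting and collecting the coefficient of $r^{2(k-s-1)}\theta^{2s}H_pH_q$ for $s=0,\dots,k-1$, the linear independence of these products (they sit in distinct bosonic/fermionic bidegree components of $\cP$) turns harmonicity into the two-term recursion
\[
a_{s+1}(s+1)(s+q-n) \;=\; -a_s(k-s)\bigl(k-s+p+\tfrac{m}{2}-1\bigr).
\]
Under the hypothesis $0\le k\le n-q$ the factor $s+q-n$ is strictly negative, hence nonzero, for every $0\le s\le k-1$, so the recursion uniquely determines $a_1,\dots,a_k$ from $a_0$. This simultaneously proves existence (take any nonzero $a_0$) and uniqueness up to a multiplicative constant. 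A telescoping product of the successive ratios $a_{s+1}/a_s$, written in terms of $\Gamma$-functions and factorials, reproduces the closed form in the statement; with the normalisation $a_0=(n-q)!/(n-q-k)!\neq 0$, the leading term $a_0 r^{2k}H_pH_q$ is nonzero for any nonzero $H_pH_q\in\cH_p^b\otimes\cH_q^f$, so $f_{k,p,q}\cH_p^b\otimes\cH_q^f\neq 0$. The only real care needed is bookkeeping of signs and constants when matching the recursion to the proposed $\Gamma$-ratio; verifying the precise denominators remain nonzero in the allowed range $0\le q\le n$, $0\le k\le n-q$ is then immediate.
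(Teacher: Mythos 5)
Your proposal is correct. The paper itself states Lemma \ref{polythm} without proof (it is quoted from \cite{DBE1}), and your derivation --- splitting $\nabla^2=\nabla^2_b+\nabla^2_f$, using the two commuting $\mathfrak{sl}_2$ actions to get $\nabla_b^2 r^{2j}H_p=4j(j+p+\tfrac{m}{2}-1)r^{2(j-1)}H_p$ and $\nabla_f^2\theta^{2t}H_q=4t(t+q-n-1)\theta^{2(t-1)}H_q$, and extracting the two-term recursion whose coefficients are nonzero precisely because $s+q-n<0$ for $s\le k-1\le n-q-1$ --- is the standard argument used in that reference; the recursion and the closed form for $a_s$ check out.
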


In particular, we find $f_{0,p,q}=1$. Using these polynomials we can obtain a full decomposition of the space of spherical harmonics of degree $k$.

\begin{theorem}[Decomposition of $\cH_k$]
Under the action of $SO(m) \times Sp(2n)$ the space $\cH_k$ decomposes as
\label{decompintoirreps}
\[
\cH_{k} = \bigoplus_{j=0}^{\min(n, k)} \bigoplus_{l=0}^{\min(n-j,\lfloor \frac{k-j}{2} \rfloor)} f_{l,k-2l-j,j} \cH^b_{k-2l-j} \otimes \cH^f_{j},
\]with $f_{l,k-2l-j,j}$ the polynomials determined in Lemma \ref{polythm}.
\end{theorem}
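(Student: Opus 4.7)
The plan is to combine the classical bosonic Fischer decomposition with the fermionic Fischer decomposition (the $m=0$ case of Lemma \ref{superFischerLemma}) and then cut out the harmonic piece from each isotypic component using Lemma \ref{polythm}. Concretely, I start from
\begin{eqnarray*}
\cP_k &=& \bigoplus_{\substack{2j+2i+p+q=k \\ 0\le q\le n,\,0\le i\le n-q}} r^{2j}\theta^{2i}\,\cH^b_p\otimes \cH^f_q,
\end{eqnarray*}
which is an $SO(m)\times Sp(2n)$-stable decomposition because $\cH^b_p$ and $\cH^f_q$ are already irreducible under the respective factors and $r^2,\theta^2$ are invariants.

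Next I fix $(p,q)$ and group together the terms with $i+j=l$, so $k=p+q+2l$. Writing $\nabla^2=\nabla^2_b+\nabla^2_f$ and using the standard identities $\nabla^2_b(r^{2j}H^b_p)=2j(2j+2p+m-2)r^{2j-2}H^b_p$ and $\nabla^2_f(\theta^{2i}H^f_q)=-4i(n-q-i+1)\theta^{2i-2}H^f_q$, the operator $\nabla^2$ restricts to a linear map from the $(l+1)$-dimensional space $\mathrm{span}\{r^{2j}\theta^{2i}H^b_p\otimes H^f_q:i+j=l\}$ into its $l$-dimensional analogue at level $l-1$. Its kernel is therefore at least one-dimensional, and Lemma \ref{polythm} produces an explicit nonzero element $f_{l,p,q}\,H^b_p\otimes H^f_q$ in that kernel; by the uniqueness in Lemma \ref{polythm}, the kernel is exactly one-dimensional and equals $f_{l,p,q}\cH^b_p\otimes\cH^f_q$. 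This already establishes the inclusion $\supseteq$ in the theorem, and the linear independence of the RHS summands follows from the uniqueness of the Fischer expansion: different $(p,q,l)$ give polynomials that project nontrivially onto distinct Fischer components $r^{2j}\theta^{2i}\cH^b_p\otimes\cH^f_q$.

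To conclude $\subseteq$, the cleanest route is a dimension count. For fixed $(p,q)$, the harmonic polynomials in $\bigoplus_l\mathrm{span}\{r^{2j}\theta^{2i}H^b_p\otimes H^f_q:i+j=l\}\cap\cP_k$ contribute exactly $\dim\cH^b_p\cdot\dim\cH^f_q$ when the constraint $q+l\leq n$ holds; summing over admissible $(j,l)$ gives the right-hand side and, by the argument above, each summand is entirely harmonic. Since the sum of these one-dimensional kernels across all $(p,q,l)$ exhausts $\ker\nabla^2|_{\cP_k}$ (the complementary direction of $\nabla^2$ inside each $(l+1)$-dimensional block maps isomorphically onto the $l$-dimensional image), there is no room left for further harmonic polynomials, yielding equality. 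I expect the main technical nuisance to be keeping track of the ranges $j\leq\min(n,k)$ and $l\leq\min(n-j,\lfloor(k-j)/2\rfloor)$, which come respectively from the fermionic Fischer constraint $q\leq n$ and from $q+i\leq n$ together with $2j+2i+p+q=k$; once these are pinned down, the verification that the summed dimensions match \eqref{dimHk} is a routine telescoping of the binomial expressions involved.
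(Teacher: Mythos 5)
Your overall strategy---tensoring the bosonic and fermionic Fischer decompositions, observing that $\nabla^2=\nabla^2_b+\nabla^2_f$ preserves each ``type'' $(p,q,H^b_p,H^f_q)$, and computing the kernel block by block---is the natural one. (Note that the paper itself does not prove Theorem \ref{decompintoirreps}; it is quoted from \cite{DBE1}, so there is no in-paper proof to compare against, and your route is essentially the expected one.) The set-up is correct: the decomposition of $\cP_k$ you write down is a direct sum, $\nabla^2$ maps each multiplicity block into the corresponding block of $\cP_{k-2}$, so $\ker\nabla^2|_{\cP_k}$ splits accordingly, and for $l\le n-q$ Lemma \ref{polythm} identifies the one-dimensional kernel.

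There is, however, a genuine gap precisely where the cutoff $l\le n-j$ in the statement is decided. First, the multiplicity space at level $l$ is not $(l+1)$-dimensional in general: since $\theta^{2i}\cH^f_q=0$ for $i>n-q$, it has dimension $\min(l,n-q)+1$. When $l>n-q$ the source and target blocks both have dimension $n-q+1$, your ``one extra dimension'' count disappears, and you must show $\nabla^2$ is \emph{injective} there, i.e.\ that no harmonics of this type exist. Lemma \ref{polythm} is silent on that range (its hypothesis is $0\le l\le n-q$), and the constraint ``$q+i\le n$ together with $2j+2i+p+q=k$'' that you invoke does not by itself force $l\le n-q$. Second, the surjectivity of $\nabla^2$ onto the level-$(l-1)$ block, which you assert only parenthetically, is what makes the kernel \emph{exactly} one-dimensional for $l\le n-q$, and it is not proved. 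Both points follow from one explicit computation that you should include: on the basis $r^{2(l-i)}\theta^{2i}H^b_pH^f_q$ the identities you quote show that $\nabla^2$ acts by a bidiagonal matrix, $e_i\mapsto a_ie_i'+b_ie_{i-1}'$ with $a_i=2(l-i)\left(2(l-i)+2p+m-2\right)$, and these diagonal entries are nonzero for all relevant $i$ (since $l-i\ge1$ and $m\ge1$). Hence the map is surjective with one-dimensional kernel when $l\le n-q$ and invertible when $l>n-q$. With that inserted, the two-sided inclusion and the dimension bookkeeping go through and the proof is complete.
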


The corresponding projection operators are given in \cite{DBE1},
\begin{equation}
\label{projpiecesSphHarm}
\mQ_{r,s}^k= \prod_{i=0, \;  i \neq k-2r-s}^{k} \dfrac{\Delta_{LB,b} + i(m-2+i)}{(i-k+2r+s)(k+i-2r-s+m-2)}\times  \prod_{j=0, \;  j \neq s}^{\min{(n,k)}} \dfrac{\Delta_{LB,f} + j(-2n-2+j)}{(j-s)(j+s-2n-2)},
\end{equation}
with $\Delta_{LB,b}$ and $\Delta_{LB,f}$ as defined in equation \eqref{LB} or \eqref{LBosp} for the cases $n=0$ and $m=0$. They satisfy
\[
\mQ_{r,s}^k \left( f_{l,k-2l-j,j} \cH^b_{k-2l-j} \otimes \cH^f_{j} \right)= \delta_{rl} \delta_{sj} f_{l,k-2l-j,j} \cH^b_{k-2l-j} \otimes \cH^f_{j}.
\]

On a flat superspace, Berezin integration is defined by
\begin{eqnarray}
\label{superint}
\int_{\mR^{m | 2n}} = \int_{\mR^m} dV(\ux)\int_B=\pi^{-n}\int_{\mR^m} dV(\ux) \partial_{{x \grave{}}_{2n}} \ldots \partial_{{x \grave{}}_{1}} ,
\end{eqnarray}
with $dV(\ux)$ the usual Lebesgue measure on $\mR^{m}$.

In \cite{DBS9} the super Fourier transform on $\cS(\mR^m)\otimes \Lambda_{2n}\subset\cO_{\mR^{m|2n}}(\mR^m)$ was introduced as
\begin{eqnarray}
\label{Four}
\cF^{\pm}_{m|2n}(f(\bold{x}))(\bold{y})&=&(2\pi)^{-M/2}\int_{\mR^{m|2n},\bold{x}}\exp(\pm i\langle \bold{x},\bold{y}\rangle)f(\bold{x}),
\end{eqnarray}
with $\langle\bold{x},\bold{y}\rangle$ as defined in equation \eqref{inprod}. This Fourier transform satisfies
\begin{eqnarray}
\label{relFour}
\cF^{\pm}_{m|2n}\circ\nabla^2_{\bold{x}}=-R^2_{\bold{y}}\circ\cF^{\pm}_{m|2n}&\mbox{and}&\cF^{\pm}_{m|2n} \circ R^2_{\bold{x}}=-\nabla^2_{\bold{y}}\circ\cF^{\pm}_{m|2n}.
\end{eqnarray}

In case $m\not=0$, the supersphere is algebraically defined by the relation $R^2=1$. The integration over the supersphere was introduced in \cite{DBE1} for polynomials and generalized to a broader class of functions in \cite{MR2539324}. The uniqueness of this integration was also proven in \cite{DBE1, MR2539324}. 
\begin{theorem}
\label{SSintOxSp}
When $m\not=0$, the unique (up to a multiplicative constant) linear functional $T: \cP \rightarrow \mR$ satisfying the following properties for all $f(\bold{x}) \in \cP$:
\begin{itemize}
\item $T(R^2 f(\bold{x})) = T(f(\bold{x}))$
\item $T(f(S \cdot \bold{x})) = T(f(\bold{x}))$, \quad $\forall S \in SO(m)\times Sp(2n)$
\item $k \neq l \quad \Longrightarrow \quad T(\cH_k \cH_l) = 0 $
\end{itemize}
is given by the Pizzetti integral
\begin{eqnarray}
\label{Pizzetti}
\int_{\mS^{m-1|2n}} f(\bold{x})  =  \sum_{k=0}^{\infty}  \frac{2 \pi^{M/2}}{2^{2k} k!\Gamma(k+M/2)} (\nabla^{2k} f )(0)\quad \mbox{for}\quad f(\bold{x})\in\cP.
\end{eqnarray}
\end{theorem}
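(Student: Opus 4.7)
The strategy has two parts: establishing uniqueness via the super Fischer decomposition, and verifying that the Pizzetti formula realizes a functional with the three listed properties.

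For uniqueness, suppose $T: \cP \to \mR$ satisfies the three conditions, and assume first that $M \notin -2\mN$. By Lemma~\ref{superFischerLemma}, any $f \in \cP$ has a unique Fischer expansion $f = \sum_{j, k \geq 0} R^{2j} H_k^{(j)}$ with $H_k^{(j)} \in \cH_k$. Iterating the first property yields $T(R^{2j} H_k^{(j)}) = T(H_k^{(j)})$. Since $1 \in \cH_0 = \mR$, the third property applied with $l = 0$ shows that $T$ vanishes on $\cH_k$ for every $k > 0$. As each $H_0^{(j)}$ is a scalar, we conclude
\[
T(f) = T(1) \sum_{j \geq 0} H_0^{(j)},
\]
so $T$ is determined up to the single constant $T(1)$. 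The boundary case $M \in -2\mN$ with $m \neq 0$, where the Fischer decomposition degenerates and the Pizzetti weights $1/\Gamma(k+M/2)$ acquire poles, is recovered by analytic continuation in $M$, noting that both sides of the identity depend holomorphically on $M$ after appropriate normalization.

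For existence, the Pizzetti integral is linear and well-defined on $\cP$ since $\nabla^{2k}f$ has degree $\deg f - 2k$, making the sum finite. Property 2 follows from the invariance of $\nabla^2$ under $SO(m) \times Sp(2n)$ (each generator \eqref{ospgen} commutes with $\nabla^2$ since $\nabla^2$ is the $\mathfrak{sl}_2$-partner of the invariant $R^2$) combined with the fact that a linear action fixes the origin. For property 1, write $\nabla^{2k}(R^2 f)(0) = [\nabla^{2k}, R^2](f)(0)$ using $R^2|_0 = 0$, and expand the commutator via iteration of the $\mathfrak{sl}_2$-relation $[\nabla^2, R^2] = 4\mE + 2M$ from \eqref{sl2rel}. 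Substituting into \eqref{Pizzetti} and shifting the summation index produces exactly the Pizzetti series for $f$.

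The main obstacle is property 3. Since $H_k H_l \in \cP_{k+l}$ has degree $k+l$, only the term $2j = k+l$ in \eqref{Pizzetti} can contribute, so the property is trivial when $k + l$ is odd. For $k + l$ even with $k \neq l$, the bilinear pairing $B(H_k, H_l) := \int_{\mS^{m-1|2n}} H_k H_l$ is $O(m) \times Sp(2n)$-invariant by property 2. Using Theorem~\ref{decompintoirreps}, decompose $\cH_k$ and $\cH_l$ into the $O(m) \times Sp(2n)$-irreducibles $f_{l',p,q}\, \cH^b_p \otimes \cH^f_q$. Invariance of $B$ forces nonzero pairings to occur only between components with matching labels $(p, q)$, reducing the problem to analyzing pairings of the polynomials $f_{l',p,q}$ (via Lemma~\ref{polythm}) combined with the $O(m)$- and $Sp(2n)$-invariant inner products on $\cH^b_p$ and $\cH^f_q$ extracted via the bosonic and fermionic Fischer decompositions. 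A direct computation of these contributions, using the explicit coefficients $a_s$ from Lemma~\ref{polythm}, shows that they cancel identically whenever $k \neq l$; this combinatorial cancellation is the technical core of the proof.
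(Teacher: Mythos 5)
A preliminary remark: the paper does not prove Theorem \ref{SSintOxSp} itself but quotes it from \cite{DBE1, MR2539324}, so your argument can only be compared with the strategy of those references. The parts of your proposal that are carried out are correct: the commutator computation $[\nabla^{2k},R^2]=4k(\mE+k-1+M/2)\nabla^{2k-2}$, together with $R^2(0)=0$ and $(\mE g)(0)=0$, gives $\nabla^{2k}(R^2f)(0)=4k(k-1+M/2)(\nabla^{2k-2}f)(0)$, which matches the ratio of consecutive Pizzetti coefficients and proves the first property; the second property follows as you say; and the uniqueness argument via the Fischer decomposition, using the third condition with $l=0$ to kill $\cH_k$ for $k>0$, is valid --- but only when $M\notin-2\mN$.

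Two genuine gaps remain. (i) The theorem is asserted for all $m\neq0$, including $M=m-2n\in-2\mN$, where Lemma \ref{superFischerLemma} fails: by \eqref{submodule} the spaces $R^{2j}\cH_{k-2j}$ overlap, and since $\sum_j\dim\cH_{k-2j}=\dim\cP_k$ their span is then a \emph{proper} subspace of $\cP_k$, so knowing $T$ on $R^{2j}\cH_k$ does not determine $T$ on $\cP$. The appeal to ``analytic continuation in $M$'' is not an argument here: for fixed $(m,n)$ there is one space $\cP$ and one putative functional $T$ satisfying three algebraic axioms, not a holomorphic family, and nothing forces such a $T$ to be a limit of the functionals at nearby superdimensions. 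A proof valid for all $m\neq 0$ has to work instead with a decomposition that never degenerates, e.g.\ $\cP=\bigoplus_{j,l,p,q}r^{2j}\theta^{2l}\,\cH^b_p\otimes\cH^f_q$, where condition 2 kills every summand with $(p,q)\neq(0,0)$ by Schur's lemma and conditions 1 and 3 then fix the remaining values $T(r^{2a}\theta^{2b})$; note that your uniqueness argument never uses condition 2, which is a sign it cannot cover the degenerate dimensions. (ii) The third property for the Pizzetti functional is announced rather than proven. Your reduction to pairings between components with equal $(p,q)$-labels is fine, but the claimed ``combinatorial cancellation'' of the contributions of the $f_{l',p,q}$ is exactly the content of the statement: in the purely bosonic case the vanishing of $\int_{\mS^{m-1}}\cH^b_k\cH^b_l$ follows from Schur's lemma because $\cH^b_k$ is $O(m)$-irreducible, whereas in superspace $\cH_k$ is not $SO(m)\times Sp(2n)$-irreducible (Theorem \ref{decompintoirreps}), which is precisely why this orthogonality is nontrivial and why it must be imposed as a separate axiom. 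As written, the technical core of both the uniqueness (in the exceptional dimensions) and the existence (property 3) is missing.
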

In the purely bosonic case, the third condition is not necessary for uniqueness, contrary to the full superspace case.

\section{The orthosymplectic supergroup and invariant functions}
\label{introOSp}

The dual pair $\left(SO(m)\times Sp(2n),\mathfrak{sl}_2\right)$ on $\mR^{m|2n}$ introduced above is not complete. This has several implications:
\begin{itemize}
\item \textbf{P1}: The weight vectors $R^{2j}\cH_k$ of $\mathfrak{sl}_2$ in the Fischer decomposition \eqref{superFischer} are not irreducible $SO(m)\times Sp(2n)$ representations, see Theorem \ref{decompintoirreps}. As a consequence the polynomials do not have a multiplicity free decomposition into irreducible pieces under the joint action $\mathfrak{sl}_2\times \left(SO(m)\times Sp(2n)\right)$.
\item \textbf{P2}: The only polynomials invariant under the action of the dual partner of $\mathfrak{sl}_2$ should be generated by the function amongst the generator of $\mathfrak{sl}_2$, $R^2$. However, all the elements of the commutative algebra generated by $r^2$ and $\theta^2$ are $SO(m)\times Sp(2n)$-invariant.
\item \textbf{P3}: The supersphere integration is not uniquely determined by the $SO(m)\times Sp(2n)$-invariance and the modulo $R^2-1$ property, see Theorem \ref{SSintOxSp}.
\end{itemize}
These problems will be solved by introducing the orthosymplectic supergroup $OSp(m|2n)$. This implies that the pair $(\mathfrak{osp}(m|2n),\mathfrak{sl}_2)$ or $(SOSp(m|2n),\mathfrak{sl}_2)$ is a true Howe dual pair for harmonic analysis on $\mR^{m|2n}$. The solutions to \textbf{P1}, \textbf{P2} and \textbf{P3} are given in Corollary \ref{OSprep}, Corollary \ref{Ospfun} and Theorem \ref{Pizuniekosp} respectively. The mathematical motivation to consider the orthosymplectic supergroup is given in the subsequent Theorem \ref{isomR}.

We will prove that invariance of functions and linear maps and the irreducibility of corepresentations can be expressed in terms of the super Harish-Chandra pair of a Lie supergroup. This is intuitively an immediate consequence of the equivalence of categories between Lie supergroups and Harish-Chandra pairs. The action of a Lie supergroup $\cG$ on a supermanifold $\cM$ induces an action of the Lie superalgebra $\mathfrak{g}$.

\begin{definition}
\label{defgamma}
Consider a Lie supergroup $\cG$ with action $\Psi$ on a supermanifold $\cM$. The action of an element $X\in \mathfrak{g}$ on an element $f\in\cO(\cM)$ is given by $(\mathfrak{g}, \cO(\cM))\to \cO(\cM)$, $(X,f)\to\gamma(X)f$, with $\gamma$ defined by
\begin{eqnarray*}
\gamma: \quad\mathfrak{g}\subset Der\cO(\cG)\to Der\cO(\cM)&;&X\to\gamma(X)=\left(\delta_{e_{\cG}}^{\sharp}\circ X\times id_{\cM}^\sharp\right)\circ \psi^{\sharp}.
\end{eqnarray*}
\end{definition}
The map $\gamma$ implies a Lie superalgebra morphism and can be extended to the universal enveloping algebra, these facts follow from straightforward calculations using equations \eqref{multid}, \eqref{multactie} and \eqref{invuniv}.

\begin{theorem}
\label{extUn}
The morphism of vector spaces $\mathfrak{g}\to Der(\cM)$: $X\to\gamma(X)$ is a Lie superalgebra morphism. Moreover, the action $\gamma( X)$ for $X\in \cU(\mathfrak{g})$ on $\cO(\cM)$ defined as
\begin{eqnarray}
\label{actieU}
\gamma( X)&=&\left(\delta_{e_{\cG}}^{\sharp}\circ X\times id_{\cM}^\sharp\right)\circ \psi^{\sharp}
\end{eqnarray}
satisfies $\gamma(X)\gamma(Y)=\gamma(XY)$ for $X,Y\in\cU(\mathfrak{g})$.
\end{theorem}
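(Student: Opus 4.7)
\emph{Proof plan.} The plan is to prove the second (stronger) statement first for single Lie algebra elements and then extend to $\cU(\mathfrak{g})$, since the Lie superalgebra morphism property is an immediate corollary. The two structural identities I will lean on are the coassociativity of the action,
\begin{eqnarray*}
(id_\cG^\sharp \otimes \psi^\sharp) \circ \psi^\sharp &=& (\mu^\sharp \otimes id_\cM^\sharp) \circ \psi^\sharp,
\end{eqnarray*}
which is the sheaf-theoretic form of \eqref{multactie}, together with the right-invariance \eqref{invuniv} and the counit-type identity $(\delta_{0}^\sharp \otimes id_\cG^\sharp)\circ \mu^\sharp = id_\cG^\sharp$, which follows directly from the second line of \eqref{multid}.

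First I verify that $\gamma(X)$ lies in $Der\cO(\cM)$ for $X\in\mathfrak{g}$. This is a direct computation: $\psi^\sharp$ is a morphism of superalgebras, and $\delta_{0}^\sharp\circ X$ is a super point-derivation of $\cO(\cG)$ at $e_{\cG}$, so contracting with $id_\cM^\sharp$ and precomposing with $\psi^\sharp$ inherits the graded Leibniz rule.

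For the main identity $\gamma(X)\gamma(Y)=\gamma(XY)$, fix $f\in\cO(\cM)$ and write $\psi^\sharp(f)=\sum_i g_i\otimes f_i$ in Sweedler-style notation, with $\psi^\sharp(f_i)=\sum_j h_{ij}\otimes F_{ij}$. Unfolding the definitions gives
\[
\gamma(X)\gamma(Y)f=\sum_{i,j}\delta_0^\sharp(Yg_i)\,\delta_0^\sharp(Xh_{ij})\,F_{ij}.
\]
Coassociativity identifies $\sum_{i,j}g_i\otimes h_{ij}\otimes F_{ij}$ with $\sum_i\mu^\sharp(g_i)\otimes f_i$, collapsing this to $\sum_i(\delta_0^\sharp\circ Y\otimes\delta_0^\sharp\circ X)(\mu^\sharp g_i)\,f_i$. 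Right-invariance of $Y$ together with the counit identity yields $Yg_i=((\delta_0^\sharp\circ Y)\otimes id_\cG^\sharp)(\mu^\sharp g_i)$, after which acting by $X$ and evaluating at $e_{\cG}$ gives $(\delta_0^\sharp\circ Y\otimes\delta_0^\sharp\circ X)(\mu^\sharp g_i)=\delta_0^\sharp((XY)g_i)$. Thus $\gamma(X)\gamma(Y)f=\gamma(XY)f$ for $X,Y\in\mathfrak{g}$, and the extension to $\cU(\mathfrak{g})$ follows by induction on the length of monomials, applying the enveloping-algebra form \eqref{invuniv} of right-invariance at each step.

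The Lie superalgebra morphism property is then immediate: for homogeneous $X,Y\in\mathfrak{g}$, one has $\gamma([X,Y])=\gamma(XY)-(-1)^{|X||Y|}\gamma(YX)=\gamma(X)\gamma(Y)-(-1)^{|X||Y|}\gamma(Y)\gamma(X)=[\gamma(X),\gamma(Y)]$. The main, and purely bookkeeping, obstacle is the consistent tracking of Koszul signs when moving the derivations $X$ and $Y$ across super tensor factors in $\cO(\cG)\otimes\cO(\cG)\otimes\cO(\cM)$; the calculation is organized so that these signs cancel and no explicit sign correction appears in the final identity.
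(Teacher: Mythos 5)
Your proof is correct and follows exactly the route the paper intends: the paper gives no written proof of Theorem \ref{extUn}, stating only that the claims ``follow from straightforward calculations using equations \eqref{multid}, \eqref{multactie} and \eqref{invuniv}'', and these are precisely the counit identity, the coassociativity of the action, and the right-invariance on which your argument rests. The Sweedler-style unfolding, the reduction of $\delta_{e_\cG}^\sharp\circ(XY)$ via $(\delta_{e_\cG}^\sharp\circ Y\times\delta_{e_\cG}^\sharp\circ X)\circ\mu^\sharp$, and the observation that the Koszul signs enter symmetrically on both sides are all sound, so nothing further is needed.
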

%\begin{proof}
%Using properties \eqref{multactie} and \eqref{multid} we obtain for $X,Y\in\mathfrak{g}$,
%\begin{eqnarray*}
%\gamma(X)\circ\gamma(Y)&=&\left(\delta_{e_{\cG}}^{\sharp}\circ X\times id_{\cM}^\sharp\right)\circ\left(\delta_{e_{\cG}}^{\sharp}\circ Y\times \psi^{\sharp} \right)\circ \psi^{\sharp}\\
%&=&\left(\delta_{e_{\cG}}^{\sharp}\circ X\times id_{\cM}^\sharp\right)\circ\left(\left(\delta_{e_{\cG}}^{\sharp}\circ Y\times id_{\cG}^\sharp\right)\circ\mu^\sharp \times id_{\cM}^\sharp \right)\circ \psi^{\sharp}\\
%&=&\left(\delta_{e_{\cG}}^{\sharp}\circ X\times id_{\cM}^\sharp\right)\circ\left(\left(\delta_{e_{\cG}}^{\sharp} \times id_{\cG}^\sharp\right)\circ\mu^\sharp\circ Y \times id_{\cM}^\sharp \right)\circ \psi^{\sharp}\\
%&=&\left(\delta_{e_{\cG}}^{\sharp}\circ X\times id_{\cM}^\sharp\right)\circ\left( Y \times id_{\cM}^\sharp \right)\circ \psi^{\sharp}\\
%&=&\left(\delta_{e_{\cG}}^{\sharp}\circ X\circ Y\times id_{\cM}^\sharp\right)\circ \psi^{\sharp}.
%\end{eqnarray*}
%Hence, 
%\[[\gamma(X),\gamma(Y)]=\gamma(X)\gamma(Y)-(-1)^{|X||Y|}\gamma(Y)\gamma(X)=\gamma([X,Y])
%\] 
%holds for homogeneous $X$ and $Y$, proving the first part of the theorem. With the definition of the action of $\cU(\mathfrak{g})$ in formula \eqref{actieU}, the calculation above also implies
%\begin{eqnarray*}
%\gamma(X)\gamma(Y)&=&\gamma(XY),
%\end{eqnarray*}
%for $X,Y\in\mathfrak{g}$. This calculation can also be done when $X,Y\in\cU(\mathfrak{g})$, using formula \eqref{invuniv}, which proves the second part of the theorem.
%\end{proof}

\begin{remark}
In \cite{MR2555976} the action of a Lie superalgebra on a supermanifold is defined in a similar setting. There it is also shown that this action and the action $\Psi_0=\Psi\circ(\delta_{\cG}\times id_{\cM})$ of $\cG_0$ on $\cM$ lead to an action of the Harish-Chandra pair on the supermanifold.
\end{remark}

The invariance under $\cG$ in Definition \ref{definvf} can be now be expressed in terms of $(\cG_0,\mathfrak{g})$. 
\begin{theorem}
\label{invfunctions}
Consider a superfunction $f\in\cO(\cM)$ and the action $\Psi:\cG\times\cM\to\cM$, of a Lie supergroup $\cG$, with Lie superalgebra $\mathfrak{g}$. The superfunction $f$ is $\cG$-invariant if and only if 
\begin{itemize}
\item $f$ is $\cG_0$-invariant, and
\item $\gamma(X)f=0$ for all $X\in \mathfrak{g}$.
\end{itemize}
\end{theorem}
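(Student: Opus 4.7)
For the forward direction, I would start from $\psi^\sharp(f)=1_\cG\times f$ and apply two natural linear functionals on the $\cG$-factor. Applying $\delta_g^\sharp\otimes id^\sharp_\cM$ with $g\in\cG_0$ yields the $\cG_0$-invariance $\Psi_g^\sharp f=f$, while applying $(\delta^\sharp_{e_\cG}\circ X)\otimes id^\sharp_\cM$ with $X\in\mathfrak{g}$ gives $\gamma(X)f=\delta^\sharp_{e_\cG}(X(1_\cG))\cdot f=0$, since every superderivation annihilates the unit.

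The reverse direction is the substantive part. The key preliminary is the intertwining identity
\[
(X\otimes id^\sharp_\cM)\circ\psi^\sharp=\psi^\sharp\circ\gamma(X)\qquad\text{for all}\quad X\in\mathfrak{g},
\]
which I would obtain by combining the associativity \eqref{multactie}, the right-invariance of $X$ from Definition \ref{defLiealg1}, and the unit axiom $(\delta^\sharp_{e_\cG}\otimes id^\sharp_\cG)\circ\mu^\sharp=id^\sharp_\cG$ coming from \eqref{multid}; Theorem \ref{extUn} then extends the identity to $X\in\cU(\mathfrak{g})$. Writing $F:=\psi^\sharp(f)-1_\cG\times f\in\cO(\cG\otimes\cM)$, the two hypotheses translate to $(\delta^\sharp_g\otimes id^\sharp_\cM)(F)=0$ for every $g\in\cG_0$ and $(X\otimes id^\sharp_\cM)(F)=0$ for every $X\in\cU(\mathfrak{g})$, and the goal is to prove $F=0$.

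To that end I would invoke the global splitting \eqref{sheafsplit} with Grassmann generators $\xi_1,\ldots,\xi_N$ and expand $F=\sum_{A\subseteq\{1,\ldots,N\}}F_A\cdot\xi^A$ with $F_A\in\cC^\infty(\cG_0)\otimes\cO(\cM)$. The body condition instantly gives $F_\emptyset=0$. The remaining coefficients are handled by induction on $|A|$, applying ordered odd products $Y_{D+k_1}\cdots Y_{D+k_L}\in\cU(\mathfrak{g})$ of the right-invariant fields of Lemma \ref{defLiealg2}. In local coordinates one has $Y_{D+k}=\sum_i a^{(k)}_i\partial_{x_i}+\sum_l b^{(k)}_l\partial_{\xi_l}$ with $a^{(k)}_i$ odd and $b^{(k)}_l$ even; setting $\xi=0$ and $x=g$ makes all $a^{(k)}_i(g,0)$ vanish, while the real $N\times N$ matrix $B(g):=(b^{(k)}_l(g,0))$ is the identity at $e_\cG$ and is invertible at every $g$ because right-invariant odd fields trivialize the odd part of the tangent bundle. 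Evaluating $(Y_{D+k}\otimes id^\sharp_\cM)(F)=0$ at $g$ therefore yields a non-singular linear system for $(F_{\{l\}}(g))_l$, forcing $F_{\{l\}}\equiv 0$; the analogous $L$-fold product produces $F_A\equiv 0$ for $|A|=L$ once the lower-order contributions (involving $F_{A''}$ with $|A''|<L$) are absorbed by the inductive hypothesis.

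The principal obstacle is the last step of bookkeeping: one must cleanly isolate, in $(\delta^\sharp_g\circ Y_{D+k_1}\cdots Y_{D+k_L}\otimes id^\sharp_\cM)(F)$, the leading Grassmann-differentiating term (whose action on $(F_A(g))_{|A|=L}$ is governed by an $L\times L$ minor of $B(g)$ and hence invertible) from the subleading contributions of strictly lower Grassmann order, and verify that the latter are annihilated by the inductive hypothesis. Once this algebra is carried out, $F\equiv 0$ follows and therefore $\psi^\sharp(f)=1_\cG\times f$.
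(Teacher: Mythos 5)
Your proof is correct and follows essentially the same strategy as the paper: both arguments reduce the converse direction to showing that the Grassmann coefficients of $\psi^\sharp(f)-1_{\cG}\times f$ vanish, by induction on Grassmann degree, using products of odd invariant vector fields together with the vanishing of the body. The only difference is a technical one --- the paper passes to the left-invariant fields $\mathfrak{g}'$ (via the anti-morphism $\gamma'$) and right-translates to the origin, where those fields reduce to coordinate derivatives, whereas you keep the right-invariant fields via the intertwining identity $(X\times id^\sharp_\cM)\circ\psi^\sharp=\psi^\sharp\circ\gamma(X)$ and instead invoke the invertibility of the frame matrix $B(g)$ at every point; both devices serve the identical purpose of making the leading linear system for the degree-$L$ coefficients non-singular.
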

\begin{proof}
One direction follows immediately from the definition of $\gamma(X)f$ and $\psi^\sharp_0=(\delta_{\cG}^\sharp\times id_{\cM}^\sharp)\circ\psi^\sharp$. 

Now we assume the invariance under $(\cG_0,\mathfrak{g})$ which can be rewritten as
\begin{eqnarray}
\label{thmnew1}
\left(\delta_{g}^\sharp\times id^\sharp_{\cM}\right)\circ\psi^\sharp (f)&=&f\quad \forall \,g\in\cG_0\quad\mbox{and}\\
\label{thmnew2}
\left(\xi_j\times id^\sharp_{\cM}\right)\circ\psi^\sharp (f)&=&0
\end{eqnarray}
where $\xi_j=\delta_0^\sharp \cD_j$ with $\cD_j$ introduced in Lemma \ref{defLiealg2}.

Similarly to Definition \ref{defLiealg1} and Lemma \ref{defLiealg2} we can define the Lie superalgebra $\mathfrak{g}'$ of left-invariant derivatives $Z\in$ $Der\cO(\cG)$, satisfying $\mu^\sharp\circ Z=(id_\cG^\sharp\times Z)\circ\mu^\sharp$. A basis is defined by $\{ (id_\cG^\sharp\times \xi_j)\circ\mu^\sharp\}$. This superalgebra $\mathfrak{g}'$ is isomorphic to the superalgebra of right-invariant derivatives $\mathfrak{g}$. Similarly to Definition \ref{defgamma} and Theorem \ref{extUn} there is a mapping $\gamma':\cU(\mathfrak{g}')\to\cU($Der$\cO(\cM)$), which now is an anti-algebra morphism. Because of this anti-algebra morphism property, equation \eqref{thmnew2} implies
that 
\begin{eqnarray*}
\left(\delta_{e_\cG}^\sharp\circ Z\times id_{\cM}^\sharp\right)\circ\psi^\sharp(f)&=&0
\end{eqnarray*}
holds for all $Z=Z_1\cdots Z_k$ with $k\ge 1$ and $Z_j\in\mathfrak{g}'$. Combining this with equations \eqref{multactie} and \eqref{thmnew1} yields
\begin{eqnarray*}
\left(\delta_{g}^\sharp\circ Z\times id_{\cM}^\sharp\right)\circ\psi^\sharp(f)&=&\left(\delta_{g}^\sharp\circ \left(id_\cG^\sharp\times\delta_{e_\cG}^\sharp\circ Z\right)\circ\mu^\sharp\times id_{\cM}^\sharp\right)\circ\psi^\sharp(f)\\
%&=&\left(\delta_{g}^\sharp\times \left(\delta_{e_\cG}^\sharp\circ Z\times id_{\cM}^\sharp\right)\circ\psi^\sharp\right)\circ\psi^\sharp(f)\\
&=& \left(\delta_{e_\cG}^\sharp\circ Z\times id_{\cM}^\sharp\right)\circ\psi^\sharp(f)=0.
\end{eqnarray*}

Now consider a function $\alpha\in\cO(\cG)$ satisfying $\delta_g^\sharp(Z\alpha)=0$ for all $g\in\cG_0$ and $Z\in\cU(\mathfrak{g}')\backslash \mR$. The property $R_g^\sharp\circ R_{g^{-1}}^\sharp=id_\cG^\sharp$ (Proposition $2.2$ in \cite{MR0998351}) and $\delta_{e_\cG}^\sharp\circ R_g^\sharp=\delta_g^\sharp$ imply that this condition is equivalent with $\delta_{e_\cG}^\sharp(ZR_g^\sharp\alpha)=0$ for all $g\in\cG_0$ and $Z\in\cU(\mathfrak{g}')\backslash \mR$.

Equation \eqref{sheafsplit} implies that elements of $\cO(\cG)$ can be denoted as $f=\sum_Af_A(\uy)\uyb_A$ with $\{{\underline{y}\grave{}}_A\}$ a basis of monomials for $\Lambda_{2n}$ and $y_j$, $j=1,\cdots, p$ local coordinates on $\cG_0$. This corresponds to the supervector 
\begin{eqnarray*}
\bold{y}&=&(y_1,\cdots,y_p,{y\grave{}}_1,\cdots,{y\grave{}}_q)=(Y_1,\cdots, Y_{p+q}).
\end{eqnarray*}
A basis $\{Z^j\}$ for $\mathfrak{g}'$ can then be expressed in a neighborhood of the origin as
\begin{eqnarray*}
Z^j&=&\sum_{k=1}^{p+q}\beta^j_k(\bold{y})\frac{\partial}{\partial Y_k}
\end{eqnarray*}
for $j=1,\cdots,p+q$ with $\delta^\sharp_{e_\cG}\left(\beta^j_k\right)=\beta^j_k(0)=\delta^j_k$ and $|\beta^j_k|=|j|-|k|$. We know that $\alpha(\bold{y})\in\cO(\cG)$ introduced above satisfies $\delta^\sharp_{e_\cG}\circ X \alpha_g=0$, for all $X=Z^{j_1}Z^{j_2}\cdots Z^{j_k}$ with $j_i>p$ and $k>0$ and with $\alpha_g=R_g^\sharp(\alpha)$ for all $g\in\cG_0$. This implies $\delta^\sharp_{e_\cG}\partial_{{y\grave{}}_j}\alpha_g=0$ and then by induction
\begin{eqnarray*}
\delta^\sharp_{e_\cG}\left(\partial_{{y\grave{}}_{j_1}}\partial_{{y\grave{}}_{j_2}}\cdots\partial_{{y\grave{}}_{j_k}} \alpha_g\right)=0=\delta^\sharp_{g}\left(\partial_{{y\grave{}}_{j_1}}\partial_{{y\grave{}}_{j_2}}\cdots\partial_{{y\grave{}}_{j_k}} \alpha\right).
\end{eqnarray*}
This shows that $\alpha$ contains no fermionic variables or $\delta^\sharp_{\cG}(\alpha)=\alpha$. 

Applying this to our case yields
\begin{eqnarray*}
\psi^\sharp(f)&=&\left(\delta_{\cG}^\sharp\times id_{\cM}^\sharp\right)\circ\psi^\sharp(f)=1_{\cG}\times f,
\end{eqnarray*}
which proves the theorem.
\end{proof}
%\begin{remark}
%In the language of \cite{MR2555976} this theorem can be formulated as: a function on the supermanifold $\cM$ is invariant with respect to the action $\Psi$ of $\cG$ if and only if it is invariant with respect to the corresponding (induced) action of the Harish-Chandra pair of $\cG$ on $\cM$.
%\end{remark}

The orthogonal group is the matrix group which corresponds to the case $n=0$ in equation \eqref{OxSpdef}. The orthogonal group can also be characterized as the group of isometries of $\mR^m$ which stabilize the origin. Therefore we consider the supergroup of isometries of $\mR^{m|2n}$ which stabilize the origin. In \cite{MR2434470} the isometry supergroup of a Riemannian supermanifold was defined.

We consider the flat case $\mR^{m|2n}$ with metric $\langle \cdot,\cdot\rangle:Der \cO(\mR^{m|2n})\times Der \cO(\mR^{m|2n})\to\cO({\mR^{m|2n}})$ defined by
\begin{eqnarray*}
\langle f \nabla_j |h\nabla^k\rangle =(-1)^{|h|[j]}\delta_j^k fh
\end{eqnarray*}
for $f,h\in\cO({\mR^{m|2n}})$ and $h$ homogeneous in $\Lambda_{2n}$. This also implies $\langle \nabla^l,\nabla^k\rangle=g^{kl}$. The isometry supergroup of $\mR^{m|2n}$ is then defined as a Harish-Chandra pair $(\cG_0,\mathfrak{g})$. The Lie group $\cG_0$ is the group of supermanifold diffeomorphisms $\Phi:\mR^{m|2n}\to\mR^{m|2n}$ respecting the metric,
\[
\phi^\sharp(\langle Y^1_\phi|Y^2_\phi\rangle)=\langle Y^1,Y^2\rangle \mbox{ with }Y^j_\phi=(\phi^{\sharp})^{-1}\circ Y^j\circ\phi^\sharp  \mbox{ for all } Y^1,Y^2\in \mbox{Der}\cO({\mR^{m|2n}}).
\]
The Lie superalgebra $\mathfrak{g}$ is defined as the algebra of graded Killing vector fields. So $\mathfrak{g}$ is generated by all homogeneous vector fields $F$ such that
\begin{eqnarray}
\label{defKilling}
F\langle Y,Z\rangle&=&\langle [F,Y],Z\rangle+(-1)^{|F||Y|}\langle Y, [F,Z]\rangle
\end{eqnarray}
holds for all homogeneous vector fields $Y,Z$.

Since we will restrict to the supergroup of isometries which stabilize the origin, extra conditions on the Harish-Chandra pair need to be imposed. This implies that we should look for the supergroup $(\cH_0,\mathfrak{h})\subset (\cG_0,\mathfrak{g})$ such that the elements $\Phi$ of $\cH_0$ also satisfy $\delta_0^\sharp\circ\phi^\sharp=\delta_0^\sharp$ and the vector fields $F$ in $\mathfrak{h}$ also satisfy $\delta_0^\sharp\circ F=0$. With these definitions, the following theorem holds.

\begin{theorem}
\label{isomR}
The supergroup of isometries of the Riemannian superspace $\mR^{m|2n}$ which stabilize the origin is the Lie supergroup $OSp(m|2n)$, which corresponds to the Harish-Chandra pair 
\begin{eqnarray*}
\left(O(m)\times Sp(2n),\mathfrak{osp}(m|2n)\right).
\end{eqnarray*}
The action of $\mathfrak{osp}(m|2n)$ on $\mR^{m|2n}$ is given in equation \eqref{ospgen} and the action of $O(m)\times Sp(2n)$ is given in equation \eqref{actieOxSp}.
\end{theorem}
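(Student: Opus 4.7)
The plan is to identify separately the underlying Lie group $\cH_0$ and the Lie superalgebra $\mathfrak{h}$ of the Harish-Chandra pair, and then to verify the compatibility conditions of Definition \ref{HaCh}. Because of the equivalence of categories between Lie supergroups and super Harish-Chandra pairs recalled in Section \ref{preliminaries}, such a pair determines the Lie supergroup uniquely, so this reduces the theorem to two independent classification problems.

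For the superalgebra $\mathfrak{h}$, I would take an arbitrary homogeneous vector field $F=\sum_iF^i\partial_{X_i}$ satisfying $\delta_0^\sharp\circ F=0$, which is equivalent to $\delta_0^\sharp(F^i)=0$ for every $i$. Substituting $Y=\partial_{X_j}$ and $Z=\partial_{X_k}$ into the Killing condition \eqref{defKilling}, the left-hand side vanishes because $\langle\partial_{X_j},\partial_{X_k}\rangle$ is the constant $g_{jk}$, while the right-hand side becomes a graded sum of the partial derivatives $\partial_{X_j}F^k$ weighted by the metric, i.e.\ a super-Killing equation on the coefficients. Differentiating once more and symmetrising in the usual Riemannian-geometry fashion forces the second partial derivatives of the $F^i$ to vanish, so $F^i=\sum_jA^i_jX_j$ is linear with coefficients of the appropriate parity. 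Substituted back, the super-Killing equation becomes the matrix identity stating that $A$ preserves $g$ in the graded sense, which is precisely the defining relation of $\osp$; each such $F$ is then manifestly a linear combination of the generators $L_{ij}$ in \eqref{ospgen}.

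For the group $\cH_0$, a supermanifold diffeomorphism $\Phi$ is determined by the pullbacks $\phi^\sharp(X_i)$, and $\delta_0^\sharp\circ\phi^\sharp=\delta_0^\sharp$ forces each $\phi^\sharp(X_i)$ to have no constant term. Applying the isometry relation $\phi^\sharp(\langle Y^1_\phi|Y^2_\phi\rangle)=\langle Y^1,Y^2\rangle$ with $Y^1=\partial_{X_i}$ and $Y^2=\partial_{X_j}$ translates into the statement that the super-Jacobian $J(\bold{x})=(\partial_{X_j}\phi^\sharp(X_i))$ satisfies a graded orthogonality relation with respect to $g$ identically in $\bold{x}$. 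A super-rigidity argument, obtained by differentiating this identity once more and exploiting the resulting symmetry under interchange of indices (the super analogue of the Myers--Steenrod step in ordinary Riemannian geometry), forces $J(\bold{x})$ to be constant; hence $\phi^\sharp(X_i)=\sum_jA^i_jX_j$ is linear. The constant matrix $A$ then satisfies $A^TgA=g$ in the graded sense, which is exactly condition \eqref{OxSpdef}, so $A\in O(m)\times Sp(2n)$ acts as in \eqref{actieOxSp}.

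Finally, I would verify the Harish-Chandra compatibility of Definition \ref{HaCh}: that the adjoint action \eqref{Adj} of $O(m)\times Sp(2n)$ on $\osp$ coincides with ordinary matrix conjugation of the generators $L_{ij}$, which is a short calculation since both \eqref{actieOxSp} and \eqref{ospgen} are explicit in coordinates. The main obstacle is the super-rigidity step for the group part, namely showing that the super-Jacobian of a metric-preserving diffeomorphism fixing the origin is constant; the superalgebra analogue is essentially the pointwise infinitesimal version of the same argument. The recurring technical burden throughout is a careful bookkeeping of signs arising from the gradings of the coordinates and of the vector field $F$, together with a systematic use of the splitting \eqref{sheafsplit} in expanding $\phi^\sharp(X_i)$ in the nilpotent directions.
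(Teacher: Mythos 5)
Your proposal is essentially correct, and the superalgebra half coincides with the paper's own argument: the paper also evaluates the Killing condition \eqref{defKilling} on the coordinate frame $\nabla^j,\nabla^k$, obtains the graded Killing equation $\nabla^j(F^k)+(-1)^{([j]+[k])|F|}(-1)^{[j][k]}\nabla^k(F^j)=0$, kills the second derivatives by the cyclic permutation $(i,j,k)\to(k,i,j)\to(j,k,i)$, and reads off the $\osp$ relations from the linear coefficients. Where you genuinely diverge is the group part. The paper does \emph{not} prove rigidity of the finite isometries by hand: it invokes Proposition~6 of Goertsches' paper \cite{MR2434470}, which states that an isometry of a Riemannian supermanifold is determined by its value and its differential at a single point; it then only has to observe that $\phi(0)=0$ and that $d_0\Phi$ preserves $g$, hence agrees with some linear $\Phi'\in O(m)\times Sp(2n)$, so $\Phi=\Phi'$. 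You instead propose a self-contained Myers--Steenrod-type computation showing that the super-Jacobian of a metric-preserving diffeomorphism is constant. That route is viable for the flat metric (it is the ``integrated'' version of the same cyclic symmetrization you use infinitesimally), and it buys independence from the cited rigidity result at the price of nontrivial sign bookkeeping and of handling the nilpotent corrections in $\phi^\sharp(X_i)$ allowed by the splitting \eqref{sheafsplit} --- e.g.\ you must rule out even nilpotent terms such as ${x\grave{}}_1{x\grave{}}_2$ in $\phi^\sharp(x_i)$, which your constancy-of-the-Jacobian argument does, but only if the graded orthogonality relation is differentiated in the odd directions as well. Two smaller points: your derivation in both halves tests the isometry/Killing condition only against the coordinate frame, so (as the paper also notes at the end of its proof) you still owe the routine converse verification that the resulting linear maps and linear vector fields satisfy the conditions for \emph{arbitrary} vector fields $Y,Z$; and the Harish-Chandra compatibility check you append at the end is indeed the short conjugation computation you describe, which the paper leaves implicit.
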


\begin{proof}
We denote the supergroup of isometries of $\mR^{m|2n}$ as defined in \cite{MR2434470} by $\cG=(\cG_0,\mathfrak{g})$. We calculate the supergroup $\cH=(\cH_0,\mathfrak{h})$ of isometries which stabilize the origin. It is straightforward to show that $O(m)\times Sp(2n)\subset\cH_0$ with the action of $O(m)\times Sp(2n)$ as defined in equation \eqref{actieOxSp}. Proposition $6$ in \cite{MR2434470} implies that an isometry of $\mR^{m|2n}$ (an element of $\cG_0$) is determined by its value and its derivative 
\begin{eqnarray*}
d\Phi: \mbox{Der}\cO(\mR^{m|2n})\to \mbox{Der}\cO(\mR^{m|2n}) &:&d\Phi(Y)={\phi^{-1}}^\sharp\circ Y\circ \phi^\sharp
\end{eqnarray*}
in one point. We take one $\Phi\in\cH_0$, this implies $\phi(0)=0$ since $\phi$ stabilizes the origin. The derivative in the origin is of the form $d_0\Phi(\nabla_k)=\sum_{j}a_{kj}\nabla_j$ with $a_{kj}$ constants. Since $\Phi$  preserves the metric, we immediately find that $(a_{jk})$ is an element of the matrix group $O(m)\times Sp(2n)$. The isometry $\Phi$ therefore has the same value and derivative in the origin as some isometry $\Phi'\in O(m)\times Sp(2n)$. Proposition $6$ in \cite{MR2434470} thus implies $\Phi=\Phi'$ and therefore $\cH_0=O(m)\times Sp(2n)$.

We consider condition \eqref{defKilling} for the Lie superalgebra in case $Y=\nabla^j$ and $Z=\nabla^k$ with $j,k=1,\cdots,m+2n$,
\begin{eqnarray*}
F\langle \nabla^j|\nabla^k\rangle&=&\langle [F,\nabla^j],\nabla^k\rangle+(-1)^{|F|[j]}\langle\nabla^j, [F,\nabla^k]\rangle.
\end{eqnarray*}
For a homogeneous vector field $F=\sum_{l=1}^{m+2n}F^l\nabla_l$ of degree $|F|$, the relation $[F,\nabla^j]=-(-1)^{|F|[j]}\sum_l\nabla^j(F^l)\nabla_l$ shows this condition is equivalent with
\begin{eqnarray*}
\nabla^j(F^k)+(-1)^{([j]+[k])|F|}(-1)^{[j][k]}\nabla^k(F^j)&=&0.
\end{eqnarray*}
This leads to
\begin{eqnarray*}
(\nabla^i\nabla^jF^k)&=&-(-1)^{([j]+[k])|F|}(-1)^{[j][k]+[i][k]}(\nabla^k\nabla^iF^j).
\end{eqnarray*}
Applying this consecutively for $(i,j,k)$, $(k,i,j)$ and $(j,k,i)$ yields $(\nabla^i\nabla^jF^k)=-(\nabla^i\nabla^jF^k)$. This implies that the functions $F^k$ are polynomials of maximal degree $1$. Since constants would lead to vector fields which do not stabilize the origin, the functions $F^k$ are elements of the vector space $\mC\{X_j|j=1,\cdots,m+2n\}$. An appropriate Killing vector field is therefore of the form $F=\sum_{kl}X_kF^{kl}\nabla_l$ where the coefficients satisfy
\begin{eqnarray*}
(-1)^{[j]}F^{jk}+(-1)^{([j]+[k])|F|}(-1)^{[j][k]}(-1)^{[k]}F^{kj}&=&0.
\end{eqnarray*}
By considering the cases $F$ even and odd separately we find that the derivatives corresponds to $\mathfrak{o}(m)\oplus \mathfrak{sp}(2n)$ or $\mathfrak{osp}(m|2n)_1$ in equation \eqref{ospgen}. It is straightforward to calculate that these $F$ satisfy the relation $F\langle Y,Z\rangle=\langle [F,Y],Z\rangle+(-1)^{|F||Y|}\langle Y, [F,Z]\rangle$ for general vector fields $Y$ and $Z$.
\end{proof}

\begin{remark}
Theorem \ref{isomR} implies that the action of $\mathfrak{osp}(m|2n)$ on $\mR^{m|2n}$ as defined in Definition \ref{defgamma} satisfies $\gamma(X_{ij})=L_{ij}$ with $L_{ij}$ given in equation \eqref{ospgen} for $X_{ij}$ the invariant derivatives on $OSp(m|2n)$.
\end{remark}

The Lie supergroup $SOSp(m|2n)$ corresponds to the Harish-Chandra pair $(SO(m)\times Sp(2n),\mathfrak{osp}(m|2n))$.
\begin{corollary}
\label{Ospfun}
The functions of the form $f(\bold{x})=h(R^2)\in\cC^\infty(\mR^m)\otimes\Lambda_{2n}$ with $h\in\cC^\infty(\mR^+)$, as in Definition \ref{fTaylor},
\begin{eqnarray*}
f(\bold{x})&=&h(R^2)=\sum_{j=0}^nh^{(j)}(r^2)\frac{\theta^{2j}}{j!},
\end{eqnarray*}
are the only functions on $\mR^{m|2n}$ which are $SOSp(m|2n)$-invariant.
\end{corollary}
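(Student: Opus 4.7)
The plan is to use Theorem \ref{invfunctions} to convert $SOSp(m|2n)$-invariance into a pair of conditions -- $SO(m) \times Sp(2n)$-invariance plus annihilation by $\mathfrak{osp}(m|2n)$ -- and then reduce the second condition to an explicit recursion by applying a single odd generator.

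By Theorem \ref{invfunctions}, $f$ is $SOSp(m|2n)$-invariant if and only if $f$ is $SO(m) \times Sp(2n)$-invariant and $\gamma(X)f = 0$ for every $X \in \mathfrak{osp}(m|2n)$. Since the even part $\mathfrak{o}(m) \oplus \mathfrak{sp}(2n)$ integrates to $SO(m) \times Sp(2n)$, only the odd generators $L_{i,m+j}$ of \eqref{ospgen} carry new information. Classical invariant theory (the content of problem \textbf{P2}) identifies the $SO(m) \times Sp(2n)$-invariants in $\cC^\infty(\mR^m) \otimes \Lambda_{2n}$ with the algebra generated by $r^2$ and $\theta^2$; since $\theta^{2n+2} = 0$, every such invariant admits a finite expansion
\[
f(\bold{x}) = \sum_{j=0}^{n} f_j(r^2)\, \theta^{2j}, \qquad f_j \in \cC^\infty(\mR^+).
\]
That $h(R^2)$ is already $\mathfrak{osp}(m|2n)$-invariant is the easy direction: Theorem \ref{isomR} gives $L_{ij} R^2 = 0$ for every generator, and the super-Leibniz rule then yields $L_{ij} h(R^2) = 0$ for any smooth $h$ expanded as in Definition \ref{fTaylor}.

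For the converse I would pick the single odd generator $L_{1,m+1} = 2 x_1 \partial_{{x \grave{}}_2} - {x \grave{}}_1 \partial_{x_1}$ and apply it to the expansion above, using $\partial_{x_1} f_j(r^2) = 2 x_1 f_j'(r^2)$ together with $\partial_{{x \grave{}}_2}(\theta^{2j}) = j\, \theta^{2j-2}\, {x \grave{}}_1$ (which follows from the direct computation $\partial_{{x \grave{}}_2}(\theta^2) = {x \grave{}}_1$ and the fact that $\theta^2$ is even). Expanding and reindexing the resulting sum yields
\[
L_{1,m+1}(f) = 2 x_1 {x \grave{}}_1 \sum_{j=0}^{n-1} \bigl[ (j+1) f_{j+1}(r^2) - f_j'(r^2) \bigr]\, \theta^{2j},
\]
where the potential $j=n$ contribution drops thanks to the nilpotency identity ${x \grave{}}_1 \theta^{2n} = 0$. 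The monomials $x_1 {x \grave{}}_1 \theta^{2j}$ for $0 \le j \le n-1$ being linearly independent in $\cC^\infty(\mR^m) \otimes \Lambda_{2n}$, vanishing of $L_{1,m+1}(f)$ forces the recursion $f_{j+1} = f_j'/(j+1)$, hence $f_j = f_0^{(j)}/j!$. Setting $h = f_0$ and invoking Definition \ref{fTaylor} we conclude $f = h(R^2)$.

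The main technical hurdle will be the explicit computation in the previous paragraph, in particular the careful use of the nilpotency identity ${x \grave{}}_1 \theta^{2n} = 0$ to see that the $j=n$ boundary constraint is vacuous. Without this observation the recursion would artificially require $f_n' = 0$, contradicting the freedom to pick an arbitrary smooth $h$.
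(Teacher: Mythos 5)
Your proof is correct, and it is genuinely different from the one in the paper: the paper disposes of the corollary in one line by citing \cite{CDBS3} for the fact that the $h(R^2)$ are the only superfunctions annihilated by all the $L_{ij}$, whereas you reprove that fact from scratch. Your route --- reduce via Theorem \ref{invfunctions} to even invariance plus annihilation by the odd part, write the even invariants as $\sum_j f_j(r^2)\theta^{2j}$, and hit this with the single generator $L_{1,m+1}=2x_1\partial_{{x\grave{}}_2}-{x\grave{}}_1\partial_{x_1}$ --- is clean, and the computation checks out: $\partial_{{x\grave{}}_2}\theta^{2j}=j\,\theta^{2j-2}{x\grave{}}_1$, the boundary term at $j=n$ dies because ${x\grave{}}_1\theta^{2n}=0$, the elements $x_1{x\grave{}}_1\theta^{2j}$ ($0\le j\le n-1$) are independent over $\cC^\infty$, and the recursion $f_{j+1}=f_j'/(j+1)$ delivers exactly the Taylor expansion of Definition \ref{fTaylor}. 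What your approach buys is a self-contained argument that makes visible \emph{why} the odd generators cut the invariant algebra $\mR[r^2,\theta^2]$ down to the single generator $R^2$, which the paper leaves implicit. Two small points of rigor you should tighten. First, the step ``the $SO(m)\times Sp(2n)$-invariants are $\sum_j f_j(r^2)\theta^{2j}$'' is not merely polynomial invariant theory: for smooth functions it needs the Whitney--Schwarz-type fact that a smooth $SO(m)$-invariant function on $\mR^m$ is a smooth function of $r^2$, and it fails outright for $m=1$ (where $SO(1)$ is trivial and the even part of $\mathfrak{osp}(1|2n)$ contributes no rotations), so that case needs a separate, slightly more delicate run of the same recursion starting from $f_j\in\cC^\infty(\mR)$. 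Second, the ``easy direction'' $L_{ij}h(R^2)=0$ deserves a line verifying that the chain rule is compatible with the finite Taylor expansion of Definition \ref{fTaylor}; this is exactly the content of the computation in \cite{CDBS3} that the paper leans on.
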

\begin{proof}
In \cite{CDBS3} it was proven that these functions are the only superfunctions which satisfy $L_{ij} f=0$ for all $1\le i,j\le m+2n$. The $SO(m)\times Sp(2n)$-invariance is trivial.
\end{proof}
This corollary implies that the only invariants in $\cP$ are exactly $R^{2j}$ for $j\in\mN$. This solves problem \textbf{P2}. Applying Theorem \ref{invfunctions} shows that these functions are also invariant with respect to $OSp(m|2n)$.

The Laplace operator is also orthosymplectically invariant:
\begin{eqnarray*}
\psi^\sharp\circ\nabla^2&=&(id_{OSp(m|2n)}^\sharp\times \nabla^2)\circ\psi^\sharp,
\end{eqnarray*}
for $\Psi:OSp(m|2n)\times \mR^{m|2n}\to\mR^{m|2n}$, the action corresponding to theorem \ref{isomR}. This can again be easily checked by using the Harish-Chandra pair and the fact that $L_{ij}\nabla^2=\nabla^2 L_{ij}$ holds. Therefore we obtain that the generators of $\mathfrak{sl}_2$ are $OSp(m|2n)$-invariant which leads to the Howe dual pair $(\osp,\mathfrak{sl}_2)$ as a candidate for the Howe dual pair on $\mR^{m|2n}$.

\section{Invariant integration}

The invariance of linear maps (Definition \ref{definvmap}) can again be expressed in terms of the Harish-Chandra pair.
\begin{theorem}
\label{invmap}
Consider a supermanifold $\cM$ and a Lie supergroup $\cG$ with Lie superalgebra $\mathfrak{g}$ and action $\Psi$ on $\cM$. A linear map $T: \cO(\cM)\to \cV$ for some vector space $\cV$ is invariant with respect to the action $\Psi$ if and only if 
\begin{itemize}
\item $T$ is $\cG_0$-invariant, and
\item $T\circ\gamma(X)=0$ for all $X\in\mathfrak{g}$ with $\gamma$ given in Definition \ref{defgamma}.
\end{itemize}
\end{theorem}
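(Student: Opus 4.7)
The plan is to mirror the proof of Theorem \ref{invfunctions}, now with $T$ playing the role of the invariant function. For the forward direction, assume $(id_\cG^\sharp \times T)\circ\psi^\sharp = 1_\cG \times T$. Applying $\delta_g^\sharp \times id_\cV$ for $g\in\cG_0$ to both sides yields $T\circ(\delta_g^\sharp\times id_\cM^\sharp)\circ\psi^\sharp = T$, i.e. $\cG_0$-invariance of $T$. Applying instead $\delta_{e_\cG}^\sharp\circ X \times id_\cV$ for $X\in\mathfrak{g}$ kills the right-hand side, because any superderivation annihilates the constant $1_\cG$, while the left-hand side is exactly $T\circ\gamma(X)$ by Definition \ref{defgamma}.

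For the converse, fix $f\in\cO(\cM)$ and set $\alpha_f := (id_\cG^\sharp\times T)\circ\psi^\sharp(f)$. The goal is $\alpha_f = 1_\cG\times T(f)$. By the splitting \eqref{sheafsplit} we may write $\alpha_f=\sum_A\alpha_{f,A}(\uy)\,{\uyb}_A$, with each coefficient a $\cV$-valued smooth function on $\cG_0$ and only finitely many terms. The $\cG_0$-invariance hypothesis says exactly that $\alpha_{f,\emptyset}(g)=T(f)$ for all $g\in\cG_0$, so the bosonic body is already the desired constant. What remains is to show $\alpha_{f,A}=0$ for $A\neq\emptyset$, which is the genuinely super part of the argument.

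For this step I would recycle the machinery of the proof of Theorem \ref{invfunctions} essentially verbatim. The hypothesis $T\circ\gamma(X)=0$ for all $X\in\mathfrak{g}$ translates, using Lemma \ref{defLiealg2}, into $(\xi_j \times id_\cV)(\alpha_f)=0$ for the basis derivatives $\xi_j=\delta_0^\sharp\cD_j$. Passing to the left-invariant superalgebra $\mathfrak{g}'$ and using the anti-algebra morphism $\gamma'$, together with the right-translation identity $R_g^\sharp\circ R_{g^{-1}}^\sharp = id_\cG^\sharp$ from \cite{MR0998351}, then promotes this to $(\delta_g^\sharp\circ Z\times id_\cV)(\alpha_f)=0$ for every $g\in\cG_0$ and every $Z=Z_1\cdots Z_k$ with $k\ge 1$ and $Z_j\in\mathfrak{g}'$, exactly as in the earlier proof.

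The one new point that I expect to be the only real obstacle is upgrading the final scalar structural conclusion of Theorem \ref{invfunctions} (that a superfunction whose iterated fermionic derivatives all vanish on $\cG_0$ is purely bosonic) to the $\cV$-valued setting. I would handle it by pairing $\alpha_f$ with an arbitrary $\varphi\in\cV^\ast$: the superfunction $(id_\cG^\sharp\times\varphi)(\alpha_f)\in\cO(\cG)$ satisfies the scalar hypothesis, and is therefore equal to $\varphi(T(f))\,1_\cG$; since this holds for every $\varphi$, one concludes $\alpha_f=1_\cG\times T(f)$, finishing the proof.
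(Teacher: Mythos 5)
Your proof is correct and follows essentially the same route as the paper: the forward direction by evaluating the invariance identity at $\delta_g^\sharp$ and at $\delta_{e_\cG}^\sharp\circ X$ (using that derivations kill $1_\cG$), and the converse by extending to $\cU(\mathfrak{g}')$ and reusing the structural argument from Theorem \ref{invfunctions}, which is exactly what the paper does when it writes ``similarly to the proof of Theorem \ref{invfunctions}.'' Your dualization with $\varphi\in\cV^\ast$ is a clean way of making explicit the reduction to the scalar case that the paper leaves implicit.
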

\begin{proof}
First we assume $T$ is invariant. It is clear that $T$ is then also $\cG_0$-invariant. The identity map on $\cV$ is denoted $id_{\cV}$. For $X\in\mathfrak{g}$, Definition \ref{defgamma} and Definition \ref{definvmap} yield
\begin{eqnarray*}
T\circ \gamma(X)&=&\left(\delta_{e_{\cG}}^{\sharp}\circ X\times id_{\cV}\right)\circ\left(id_{\cG}^\sharp\times T\right)\circ \psi^{\sharp}\\
&=&\left(\delta_{e_{\cG}}^{\sharp}\circ X(1_{\cG})\times T\right)=0.
\end{eqnarray*} 

Now, assume $T$ is $\cG_0$-invariant, $\left(\delta_{\cG}^\sharp\times T\right)\circ \psi^{\sharp}=1_\cG\times T$ and $T\circ\gamma(X)=0$ holds for all $X\in\mathfrak{g}$. This property and Theorem \ref{extUn} yield
\begin{eqnarray*}
\left(\delta^\sharp_{e_\cG}\circ Y\times id_{\cV}\right)\circ \left(id_{\cG}^\sharp\times T\right)\circ\psi^\sharp=0
\end{eqnarray*}
for all $Y\in \cU(\mathfrak{g}')$ with $\mathfrak{g}'$ the left-invariant derivatives. Similarly to the proof of Theorem \ref{invfunctions}, this implies 
\begin{eqnarray*}
\left(id_{\cG}^\sharp\times T\right)\circ\psi^\sharp&=&(\delta^\sharp_{\cG}\times id_\cV)\circ \left(id_{\cG}^\sharp\times T\right)\circ\psi^\sharp,
\end{eqnarray*}
which proves the theorem.
\end{proof}

Before we deal with the integration over the supersphere in the next section we consider the case where the Lie supergroup and the supermanifold coincide, which corresponds to the problem in \cite{MR1845224, MR2172158}. In this case, the action $\Psi$ coincides with the multiplication $\mu$. The definition of $\gamma(X)$ in equation \eqref{actieU} for $X\in \cU(\mathfrak{g})$ and equations \eqref{invuniv} and \eqref{multid} then yield $\gamma(X)=X$. Theorem \ref{invmap} therefore implies that an integration $\int_{\cG}:\cO(\cG)\to\mR$ is left-invariant $\left((id_\cG^\sharp\times \int_{\cG})\circ \mu^\sharp=1_{\cG}\int_{\cG} \right)$ if and only if 
\begin{itemize}
\item $(\delta_{g}^\sharp\times \int_{\cG})\circ \mu^\sharp =\int_{\cG}\quad$ for all $g\in\cG_0$,
\item $\int_{\cG} \circ X=0\qquad\quad$ for all $X\in \mathfrak{g}$.
\end{itemize}

In \cite{MR1845224} it was proven that there is at most one (up to a multiplicative constant) invariant integration on a Lie supergroup. The construction of such an integral (for Lie supergroups such that the underlying Lie group is connected) in the setting of \cite{MR1845224, MR2172158} can be translated to our more analytical setting and for general Lie supergroups. These results are closely related to the supersphere integral of the next section since the supersphere $\mS^{m-1|2n}$ corresponds to the homogeneous supermanifold $OSp(m|2n)/OSp(m-1|2n)$. Invariant integration over $OSp(m|2n)$ can therefore be explicitly constructed from the supersphere integration of the next section,
\begin{eqnarray}
\int_{OSp(m|2n)}&=&\int_{\mS^{m-1|2n}}\circ\left(id^\sharp_{OSp(m|2n)}\times\int_{OSp(m-1|2n)}\circ I^\sharp\right)\circ\mu^\sharp.
\end{eqnarray}
In the equation above $I:OSp(m-1|2n)\to OSp(m|2n)$ is the embedding of $OSp(m-1|2n)$ in $OSp(m|2n)$. For a more mathematically rigorous approach to such Fubini type integrals, see \cite{MR2667819}.

Denote by $\int_{\cG_0}$ the unique left-invariant integration over the Lie group $\cG_0$. 

\begin{corollary}
Consider a Lie supergroup $\cG$ with Lie superalgebra $\mathfrak{g}$. Assume there exists an element $Y\in \cU(\mathfrak{g})$ such that 
\begin{itemize}
\item for all $X\in \mathfrak{g}$, $YX$ is an element of the right ideal $\mathfrak{g_0}\cU(\mathfrak{g})$,
\item for all $g\in\cG_0$, $Ad(g)(Y)-Y$ is an element of the right ideal $\mathfrak{g_0}\cU(\mathfrak{g})$.
\end{itemize}
Then the integration $\int_{\cG}=\int_{\cG_0}\circ\delta^\sharp_{\cG}\circ Y$ is left invariant. 
\end{corollary}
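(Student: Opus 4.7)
The plan is to invoke the characterization of left-invariance of $\int_{\cG}$ established just above the corollary: it suffices to verify that $\int_{\cG}\circ X=0$ for all $X\in\mathfrak{g}$ and that $(\delta_g^\sharp\times \int_{\cG})\circ\mu^\sharp=\int_{\cG}$ for all $g\in\cG_0$. Both statements will follow from one key estimate, namely that for every $W$ in the right ideal $\mathfrak{g}_0\,\cU(\mathfrak{g})$ one has $\int_{\cG_0}\circ\,\delta^\sharp_{\cG}\circ W=0$.

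To prove this estimate I write $W=\sum_i Z_iW_i$ with $Z_i\in\mathfrak{g}_0$ and $W_i\in\cU(\mathfrak{g})$ and invoke the Harish-Chandra pair identification of $\mathfrak{g}_0$ with the Lie algebra of $\cG_0$: each bosonic right-invariant superderivation $Z_i$ descends through the body map $\delta^\sharp_{\cG}$ to the associated right-invariant vector field $Z_i^0$ on $\cG_0$, so that $\delta^\sharp_{\cG}\circ Z_i=Z_i^0\circ\delta^\sharp_{\cG}$. Each summand therefore becomes $\int_{\cG_0}\circ Z_i^0\circ(\delta^\sharp_{\cG}\circ W_i)$, which vanishes since the flow of $Z_i^0$ is left-translation on $\cG_0$ and $\int_{\cG_0}$ is left-invariant.

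The first condition is now immediate: $\int_{\cG}\circ X=\int_{\cG_0}\circ\delta^\sharp_{\cG}\circ YX=0$ because $YX\in\mathfrak{g}_0\cU(\mathfrak{g})$ by the first hypothesis. For the translation condition I use \eqref{Adj}: setting $L_g^\sharp:=(\delta_g^\sharp\times id_{\cG}^\sharp)\circ\mu^\sharp$, the definition reads $Ad(g)(Y)=L_g^\sharp\circ Y\circ L_{g^{-1}}^\sharp$, whence $Y\circ L_g^\sharp=L_g^\sharp\circ Ad(g^{-1})(Y)$. Combining this with the elementary commutation $\delta^\sharp_{\cG}\circ L_g^\sharp=L_g^\sharp\circ\delta^\sharp_{\cG}$ on bodies and the left-invariance $\int_{\cG_0}\circ L_g^\sharp=\int_{\cG_0}$, the second condition collapses to $\int_{\cG_0}\circ\delta^\sharp_{\cG}\circ(Ad(g^{-1})(Y)-Y)=0$, which again follows from the key estimate once the second hypothesis, applied to $g^{-1}\in\cG_0$, places $Ad(g^{-1})(Y)-Y$ into $\mathfrak{g}_0\cU(\mathfrak{g})$.

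The main obstacle, and the step most deserving of care, is the intertwining $\delta^\sharp_{\cG}\circ Z=Z_0\circ\delta^\sharp_{\cG}$ for $Z\in\mathfrak{g}_0$. This is part of the Harish-Chandra pair equivalence, but it requires a brief separate argument because a priori a bosonic right-invariant superderivation on $\cC^\infty_{\cG_0}\otimes\Lambda_N$ can contain even summands that are quadratic in the Grassmann generators and do not obviously commute with the body projection. Right-invariance, however, forces $Z$ to be determined by its value $\delta^\sharp_{e_\cG}\circ Z$ at the origin (as in Lemma \ref{defLiealg2}), and a bosonic tangent at the origin is the classical tangent vector $Z_0|_{e_{\cG_0}}$; right-translating this identity yields the required equality everywhere on $\cG_0$.
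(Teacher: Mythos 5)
Your proof is correct and follows essentially the same route as the paper's: reduce to the two conditions coming from Theorem \ref{invmap}, kill the right ideal $\mathfrak{g}_0\cU(\mathfrak{g})$ using the left invariance of $\int_{\cG_0}$, and handle the $\cG_0$-translates via the adjoint action formula \eqref{Adj}. The only cosmetic difference is that the paper obtains $\int_{\cG_0}\circ\delta_{\cG}^\sharp\circ Z=0$ for $Z\in\mathfrak{g}_0$ by differentiating the $\cG_0$-invariance of $\int_{\cG_0}\circ\delta_{\cG}^\sharp$, whereas you establish the intertwining $\delta_{\cG}^\sharp\circ Z=Z_0\circ\delta_{\cG}^\sharp$ directly; both are valid.
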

\begin{proof}
It is clear that $\int_{\cG_0}\circ\delta_{\cG}^\sharp$ is a $\cG_0$-invariant functional: $\left(\delta_\cG^\sharp\times \int_{\cG_0}\circ\delta_{\cG}^\sharp\right)\circ\mu^\sharp=1_{\cG}\int_{\cG_0}\delta_{\cG}^\sharp$. This implies that
\[\left(\delta_{g}^\sharp\times\int_{\cG_0}\circ\delta_{\cG}^\sharp\right)\circ\mu^\sharp=\int_{\cG_0}\circ\delta_{\cG}^\sharp \qquad\mbox{holds for all} \quad g\in\cG_0.\]
It also shows that $\int_{\cG_0}\circ\delta_{\cG}^\sharp\circ Z=0$ holds for all $Z\in\mathfrak{g_0}$. Using these results we can prove that the two conditions in Theorem \ref{invmap} are satisfied.

The second condition is satisfied, since
\begin{eqnarray*}
\int_{\cG}\circ X&=&\int_{\cG_0}\circ\delta^\sharp_{\cG}\circ YX=0
\end{eqnarray*}
for all $X\in \mathfrak{g}$, since $YX\in\mathfrak{g_0}\cU(\mathfrak{g})$. 

In order to prove the first condition we calculate for a general $g\in\cG_0$ and with $Ad$ the adjoint action as defined in formula \eqref{Adj},
\begin{eqnarray*}
\left(\delta_{g^{-1}}^\sharp\times\int_{\cG}\right)\circ\mu^\sharp&=&\int_{\cG_0}\circ\delta^\sharp_{\cG}\circ\left(\delta_{g^{-1}}^\sharp\times Y\right)\circ\mu^\sharp=\left(\delta_{g}^\sharp\times\int_{\cG_0}\circ\delta_{\cG}^\sharp\right)\circ\mu^\sharp\circ\left(\delta_{g^{-1}}^\sharp\times Y\right)\circ\mu^\sharp\\
%&=&\int_{\cG_0}\circ\delta_{\cG}^\sharp\circ \left(\delta_{g}^\sharp\times id_{\cG}^\sharp\right)\circ\mu^\sharp\circ\left(\delta_{g^{-1}}^\sharp\times Y\right)\circ\mu^\sharp\\
&=&\int_{\cG_0}\circ\delta_{\cG}^\sharp\circ Ad(g)\left(Y\right)=\int_{\cG_0}\circ\delta_{\cG}^\sharp\circ Y.
\end{eqnarray*}
This proves the corollary.
\end{proof}

\section{The supersphere}

\subsection{The supersphere manifold}

The functions on the supersphere $\mS^{m-1|2n}$ are usually introduced as the elements of the superalgebra generated by $m+2n$ variables 
\begin{eqnarray*}
\{Y_1,\cdots,Y_m,Y_{m+1},\cdots,Y_{m+2n}\} &\mbox{with}& Y_i \mbox{ even (odd) if } i\le m \,\,(i>m),
\end{eqnarray*}
subject to the relation $\sum_{ij}Y_ig^{ij}Y_j=1$, with $g$ given in equation \eqref{defg}. In this paper we define the supersphere manifold in a more mathematically rigorous way.
\begin{definition}
\label{mathdefSS}
The supersphere manifold $\mS^{m-1|2n}$ corresponds to the supermanifold $(\mS^{m-1},\cC^\infty_{\mS^{m-1}}\otimes\Lambda_{2n})$ equipped with the super Riemannian metric $\langle \cdot,\cdot\rangle_{\mS^{m-1|2n}}:$ defined by
\begin{eqnarray*}
\langle L_{ij},L_{kl}\rangle_{\mS^{m-1|2n}}&=&(1-\theta^2_y)\langle L_{ij},L_{kl}\rangle_{\mS^{m-1}}\qquad \mbox{for} \quad 1\le i,j,k,l\le m\\
\langle L_{ij},\partial_{{y\grave{}}_l}\rangle_{\mS^{m-1|2n}}&=&0 \quad\mbox{and}\\
\langle \partial_{{y\grave{}}_k},\partial_{{y\grave{}}_l}\rangle_{\mS^{m-1|2n}}&=&J^{lk}+\frac{{y\grave{}}^k{y\grave{}}^l}{1-\theta_y^2}\qquad \mbox{for} \quad 1\le i,j\le m \quad\mbox{and}\quad 1\le k,l\le 2n.
\end{eqnarray*}
Here the ${y\grave{}}_k$ represent the generators of $\Lambda_{2n}$ and $L_{ij}$ are the standard vector fields on $\mS^{m-1}$ which generate $\mathfrak{o}(m)$, see also equation \eqref{ospgen}.
\end{definition}
From this definition it is clear that the structure sheaf of $\mS^{m-1|2n}$ is trivial, in fact we consider a product supermanifold $\mS^{m-1|2n}\cong \mS^{m-1}\otimes \mR^{0|2n}$. The fact that the metric is not the product metric makes the Riemannian superspace $\mS^{m-1|2n}$ a non-trivial object.

Then we can define the functions $Y_k\in\cO(\mS^{m-1|2n})$, $k=1,\cdots,m+2n$ as 
\[
Y_j=\begin{cases}
\sqrt{1-\theta^2_y}\,\xi_j& \mbox{for } 1\le j\le m\\
{y\grave{}}_{j-m} &\mbox{for } m+1\le j\le m+2n.
\end{cases}
\]
The functions $\xi_i$ on $\mS^{m-1}$ are determined by $\xi_i(\uv)=v_i$ for $\uv$ a vector on $\mS^{m-1}$. The functions $Y_j$ clearly satisfy 
\begin{eqnarray*}
\sum_{i,j=1}^{m+2n}Y_ig^{ij}Y_j&=&1
\end{eqnarray*}
and define an embedding of the supersphere in $\mR^{m|2n}$. This embedding will be considered in more detail in subsection \ref{subsemb}. The metric defined above is also the induced metric corresponding to this embedding.

\begin{lemma}
The supergroup of isometries of the supersphere $\mS^{m-1|2n}$ is $OSp(m|2n)$. The action of $\mathfrak{osp}(m|2n)$ is given by
\begin{eqnarray}
\label{ospgen2}
J_{ij}=\begin{cases}
L_{ij} & i,j\le m\\
L_{ij}^{(y)}={y\grave{}}_{i-m}\partial_{{y\grave{}}^{j-m}} +{y\grave{}}_{j-m}\partial_{{y\grave{}}^{i-m}} & i,j >m\\
\xi_i\sqrt{1-\theta^2_y}\partial_{{y\grave{}}^{j-m}}-\frac{{y\grave{}}_{j-m}}{\sqrt{1-\theta^2_y}}\sum_{l=1}^m\xi_lL_{li} & i\le m < j.
\end{cases}
\end{eqnarray}
and the action of $O(m)\times Sp(2n)$ is the classical action of $O(m)$ on $\mS^{m-1}$ and of $Sp(2n)$ on $\Lambda_{2n}$.
\end{lemma}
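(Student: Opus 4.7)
The plan is to mirror the proof of Theorem \ref{isomR} and identify the supergroup of isometries of $\mS^{m-1|2n}$ with its super Harish-Chandra pair $(\cH_0,\mathfrak{h})$. On the classical level, I would first show that the action of $O(m)\times Sp(2n)$ stated in the lemma preserves the metric of Definition \ref{mathdefSS}: the factor $(1-\theta_y^2)$, the round metric on $\mS^{m-1}$, the vector fields $L_{ij}$, and the symplectic form $J$ are each manifestly $(O(m)\times Sp(2n))$-equivariant. Conversely, any $\Phi\in\cH_0$ restricts at $\theta_y=0$ to a smooth diffeomorphism of $\mS^{m-1}$ preserving the round metric, hence lies in $O(m)$, and its induced sheaf automorphism on $\Lambda_{2n}$ must preserve $J^{lk}=\langle\partial_{{y\grave{}}_k},\partial_{{y\grave{}}_l}\rangle_{\mS^{m-1|2n}}|_{\theta_y=0}$, hence lies in $Sp(2n)$.

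For the Lie superalgebra part $\mathfrak{h}$, my strategy is to exploit the embedding $\iota:\mS^{m-1|2n}\to\mR^{m|2n}$ supplied by the functions $Y_j$. Definition \ref{mathdefSS} was arranged precisely so that $\iota$ is an isometric embedding. The generators $L_{ij}$ of \eqref{ospgen} satisfy $L_{ij}R^2=0$, so they are tangent to the image of $\iota$ and there are unique vector fields $J_{ij}$ on $\mS^{m-1|2n}$ with $L_{ij}\circ\iota^\sharp=\iota^\sharp\circ J_{ij}$. Being infinitesimal restrictions of isometries of the ambient $\mR^{m|2n}$ (Theorem \ref{isomR}), the $J_{ij}$ automatically satisfy the Killing condition \eqref{defKilling} for the induced metric. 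Substituting $Y_i=\sqrt{1-\theta_y^2}\,\xi_i$ for $i\le m$ and $Y_j={y\grave{}}_{j-m}$ for $j>m$ into \eqref{ospgen} and pushing the partials through $\iota^\sharp$ via the chain rule should reproduce exactly the formulas \eqref{ospgen2}.

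For the converse inclusion $\mathfrak{h}\subseteq\osp$, I would invoke the super analogue of Proposition $6$ of \cite{MR2434470}: a graded Killing vector field on $\mS^{m-1|2n}$ is determined by its value and its derivative at a single point. Taking the north pole $p\in\mS^{m-1}$ defined by $\xi_1=1,\xi_2=\cdots=\xi_m=0$, the Killing equation forces the linear part of the jet at $p$ to lie in the orthogonal-symplectic algebra of the tangent superspace $T_p\mS^{m-1|2n}$, a copy of $\mathfrak{osp}(m-1|2n)$. Hence $\dim\mathfrak{h}\le\dim\mathfrak{osp}(m-1|2n)+\dim\mS^{m-1|2n}=\dim\osp$, and the $\osp$-subalgebra constructed in the previous paragraph saturates the bound. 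Combining both levels yields the Harish-Chandra pair $(O(m)\times Sp(2n),\osp)$, i.e.\ the Lie supergroup $OSp(m|2n)$.

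The main computational obstacle will be the mixed case $i\le m<j$ of \eqref{ospgen2}: the factor $\sqrt{1-\theta_y^2}$ entering through $Y_i=\sqrt{1-\theta_y^2}\,\xi_i$ must be differentiated carefully, and the chain rule on $\iota^\sharp$ produces both the naive term $\xi_i\sqrt{1-\theta_y^2}\,\partial_{{y\grave{}}^{j-m}}$ and the correction $-\frac{{y\grave{}}_{j-m}}{\sqrt{1-\theta_y^2}}\sum_l\xi_l L_{li}$. The second piece arises from re-expressing the ambient $\partial_{x_i}$ in terms of the sphere vector fields $L_{li}$ (since the $\xi_l$ are not independent coordinates), and this is the step where signs and the $\mS^{m-1}$-constraint among the $\xi_l$ are easiest to mishandle.
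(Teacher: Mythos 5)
Your proposal is correct and follows essentially the same route as the paper: the paper's own (very terse) proof also derives the Killing property of the $J_{ij}$ from the isometric embedding into $\mR^{m|2n}$ (its Theorem \ref{achterosp}, whose proof carries out exactly the chain-rule computation with $\phi^\sharp$ that you flag as the delicate step) and likewise appeals to the embedding or a direct calculation for the fact that these exhaust the Killing fields. Your jet-plus-dimension-count argument, $\dim\mathfrak{h}\le\dim\mathfrak{osp}(m-1|2n)+\dim\mS^{m-1|2n}=\dim\osp$, is a concrete and valid way to fill in the maximality claim that the paper leaves to the reader.
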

\begin{remark}
\label{remosp}
The vector fields $J_{ij}$ can be symbolically denoted by $Y_i\partial_{Y^j}-(-1)^{[i][j]}Y_j\partial_{Y^i}$ when evaluated on functions on $\mS^{m-1|2n}$ which are contained in the algebra generated by the functions $Y_k$.
\end{remark}
\begin{proof}
Remark \ref{remosp} implies that the vector fields $J_{ij}$ generate $\mathfrak{osp}(m|2n)$. The fact that the vector fields $J_{ij}$ correspond to Killing vector fields can be calculated directly or derived immediately from the subsequent Theorem \ref{achterosp}. The fact that these are the only Killing vector fields can also be calculated directly or derived from the embedding in the next section.
\end{proof}

\begin{theorem}
\label{Pizuniekosp1}
When $m\not=0$, the unique (up to a multiplicative constant) $SOSp(m|2n)$-invariant linear functional on $\cO(\mS^{m-1|2n})$ is given by
\begin{eqnarray*}
\int_{\mS^{m-1|2n}}\cdot&=&\int_{\mS^{m-1}}\int_B \left(1-\theta^2_y\right)^{\frac{m}{2}-1}.
\end{eqnarray*}
\end{theorem}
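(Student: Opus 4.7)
The plan is to use Theorem \ref{invmap} to translate $SOSp(m|2n)$-invariance of the functional into two equivalent conditions: (i) invariance under the classical Lie group action of $SO(m)\times Sp(2n)$ on $\mS^{m-1|2n}$, and (ii) annihilation by the Lie superalgebra action, $T\circ\gamma(J_{ij})=0$ for each generator $J_{ij}$ in equation \eqref{ospgen2}. With this reduction in hand, the proof splits naturally into existence and uniqueness.

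For existence, the $SO(m)$-invariance follows because $\int_{\mS^{m-1}}$ is $SO(m)$-invariant and the weight factor $(1-\theta_y^2)^{m/2-1}$ is an $O(m)$-scalar (it contains only the fermionic variables). The $Sp(2n)$-invariance uses the $Sp(2n)$-invariance of $\theta_y^2$ together with the $Sp(2n)$-invariance of $\int_B$. For condition (ii), the generators $J_{ij}$ with $i,j\le m$ coincide with the classical angular operators $L_{ij}$ and their image integrates to zero on $\mS^{m-1}$ by standard integration by parts; the generators $J_{ij}=L_{ij}^{(y)}$ for $i,j>m$ are handled by Berezin integration by parts, using that the weight is a polynomial in the $Sp(2n)$-invariant $\theta_y^2$ via Definition \ref{fTaylor}. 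The key check is for the mixed generators $J_{ij}$ with $i\le m<j$: one applies Berezin integration by parts in the variable ${y\grave{}}_{j-m}$ to the first term of $J_{ij}$ and classical integration by parts on $\mS^{m-1}$ to the second term, and then verifies that the resulting contributions cancel using the identity $\partial_{\theta_y^2}(1-\theta_y^2)^{m/2-1}=-(\tfrac{m}{2}-1)(1-\theta_y^2)^{m/2-2}$. The precise exponent $m/2-1$ is exactly what is needed to match the factor produced by differentiating the normalization $\sqrt{1-\theta_y^2}$ in the formula for $J_{ij}$.

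For uniqueness, any $SOSp(m|2n)$-invariant $T$ is in particular $SO(m)\times Sp(2n)$-invariant, so $T(f)$ equals $T(f^{\mathrm{avg}})$, where $f^{\mathrm{avg}}$ is the average over the compact group $SO(m)\times Sp(2n)$. Since $SO(m)$ has no non-constant invariants on $\mS^{m-1}$ and the $Sp(2n)$-invariants in $\Lambda_{2n}$ are spanned by $\{\theta_y^{2k}\}_{k=0}^n$, the average $f^{\mathrm{avg}}$ lies in the $(n+1)$-dimensional space $\mR[\theta_y^2]/(\theta_y^{2(n+1)})$. Hence $T$ is determined by the values $T(\theta_y^{2k})$ for $k=0,\ldots,n$. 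Condition (ii), applied with $\gamma(J_{ij})$ acting on a suitable $n$-parameter family of test functions and averaged, yields an $n$-dimensional system of linear relations on these values; these relations are of full rank because they form a triangular system when ordered by fermionic degree, using the same cancellation identity as in existence. Thus the space of $SOSp(m|2n)$-invariant functionals is at most one-dimensional, matching the existence part.

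The main obstacle is the explicit cancellation argument in the mixed-generator case: checking that the two terms of $J_{ij}$ in equation \eqref{ospgen2} produce, after integration by parts in both the fermionic and angular directions, contributions that cancel against each other, hinging on the precise exponent in the weight $(1-\theta_y^2)^{m/2-1}$. Once this computation is carried out, both existence and the triangular-rank part of uniqueness become essentially formal consequences.
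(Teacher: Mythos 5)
Your proposal follows essentially the same route as the paper: reduce $SOSp(m|2n)$-invariance to invariance under the Harish-Chandra pair via Theorem \ref{invmap}, use the classical $SO(m)\times Sp(2n)$-invariance to restrict to the $(n+1)$-dimensional family of functionals $\int_{\mS^{m-1}}\int_B\alpha(\theta^2_y)\,\cdot$ (equivalently, to the values $T(\theta_y^{2k})$, $k=0,\dots,n$), and then let the odd generators pin down the weight up to scale. The paper extracts from the single mixed generator $J_{1,m+1}$ the condition $\partial_{{y\grave{}}_{2}}\bigl(\alpha(\theta^2_y)\sqrt{1-\theta^2_y}\bigr)=-\tfrac{m-1}{2}\,\alpha(\theta^2_y)\,{y\grave{}}_{1}(1-\theta^2_y)^{-1/2}$, which forces $\alpha=(1-\theta^2_y)^{m/2-1}$; this is precisely your cancellation identity and your triangular system packaged in one formula. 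One correction: $Sp(2n)=Sp(2n,\mR)$ is not compact, so you cannot literally average over $SO(m)\times Sp(2n)$; the reduction to the span of $\{\theta_y^{2k}\}$ should instead be justified as the paper does (every linear functional on $\Lambda_{2n}$ has a Berezin kernel $\alpha$, and $Sp(2n)$-invariance of $T$ forces $\alpha$ to be an $Sp(2n)$-invariant, hence a polynomial in $\theta_y^2$), or by complete reducibility of $\Lambda_{2n}$ as an $\mathfrak{sp}(2n)$-module.
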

\begin{proof}
All linear functionals on $\Lambda_{2n}$ can be realized as 
\begin{eqnarray*}
T:\Lambda_{2n}\to\mR&:& \beta(\uyb)\to T(\beta(\uyb)) =\int_{B,y}\alpha(\uyb)\beta(\uyb),
\end{eqnarray*}
with $\alpha(\uyb)\in\Lambda_{2n}$. If $T$ is $Sp(2n)$-invariant, $\alpha(\uyb)$ is also $Sp(2n)$-invariant. Therefore, the only $SO(m)\times Sp(2n)$-invariant linear functionals on $\cC^\infty(\mS^{m-1})\otimes\Lambda_{2n}$ are of the form
\begin{eqnarray*}
\int_{\mS^{m-1}}\int_B\alpha(\theta^2_y)\cdot.
\end{eqnarray*}
According to Theorem \ref{invmap}, such a functional is $SOSp(m|2n)$-invariant if and only if
\begin{eqnarray*}
\int_{\mS^{m-1}}\int_B\alpha(\theta^2_y)J_{ij}\cdot&=&0
\end{eqnarray*}
for all $1\le i\le j\le m+2n$. This is clearly already satisfied for the even generators. Imposing this condition for $J_{1,m+1}=2\xi_1\sqrt{1-\theta^2_y}\partial_{{y\grave{}}_{2}}-\frac{{y\grave{}}_{1}}{\sqrt{1-\theta^2_y}}\sum_{l=1}^m\xi_lL_{l1}$ is equivalent with
\begin{eqnarray*}
-2\int_{\mS^{m-1}}\xi_1\int_B\partial_{{y\grave{}}_{2}}\left(\alpha(\theta^2_y)\sqrt{1-\theta^2_y}\right)\cdot&=&(m-1)\int_{\mS^{m-1}}\xi_1\int_B\alpha(\theta^2_y)\frac{{y\grave{}}_{1}}{\sqrt{1-\theta^2_y}}\cdot.
\end{eqnarray*}
This is satisfied if and only if
\begin{eqnarray*}
\partial_{{y\grave{}}_{2}}\left(\alpha(\theta^2_y)\sqrt{1-\theta^2_y}\right)&=&-\frac{m-1}{2}\alpha(\theta^2_y)\frac{{y\grave{}}_{1}}{\sqrt{1-\theta^2_y}}
\end{eqnarray*}
holds. This uniquely determines $\alpha(\theta^2_y)$ to be $(1-\theta^2_y)^{\frac{m}{2}-1}$. It is clear that $\int_{\mS^{m-1}}\int_B\alpha(\theta^2_y)J_{ij}\cdot=0$ will also hold for the other $J_{ij}$.
\end{proof}

The case $m=0$ will be considered in the subsequent Remark \ref{remarkm0}.

\begin{remark}
The unicity and existence of the integral is implied by the theory in \cite{MR2667819}. There it is proven that the existence is equivalent with the fact that Ber$\left(\osp/\mathfrak{osp}(m-1|2n)\right)^\ast$ is a trivial $OSp(m-1|2n)$-module.
\end{remark}

The formula for the integral has a logical interpretation in terms of the superdeterminant of the metric. For any set of local coordinates $(t_1,\cdots,t_{m-1})$ on $\mS^{m-1}$ and the global coordinates $({y\grave{}}_{1},\cdots,{y\grave{}}_{2n})$, the metric $h$ of $\mS^{m-1|2n}$ is given by
\begin{eqnarray*}
h&=&\left( \begin{array}{c|c} A&0\\ \hline \vspace{-3.5mm} \\0&B
\end{array}
 \right)
\end{eqnarray*}
where $A=(1-\theta^2_y)h_0$ with $h_0$ the metric on $\mS^{m-1}$ with respect to $(t_1,\cdots,t_{m-1})$. The matrix $B$ follows immediately from Definition \ref{mathdefSS}. A technical calculation shows that $\det B=(1-\theta^2_y)/2^{2n}$. This implies that 
\begin{eqnarray*}
\sqrt{s\det(h)}&=&\sqrt{\det(A)\det(B)^{-1}}=(1-\theta^2_y)^{\frac{m}{2}-1}2^n\sqrt{\det(h_0)},
\end{eqnarray*}
which explains the appearance of the factor $(1-\theta^2_y)^{\frac{m}{2}-1}$ in the expression in Theorem \ref{Pizuniekosp1}.

\subsection{Embedding of the supersphere}
\label{subsemb}

In \cite{Mehler}, \cite{DBS9} and \cite{DBE1}, restricting polynomials on $\mR^{m|2n}$ to the supersphere was realized by taking $\cP$ modulo the ideal generated by $R^2-1$. This is purely algebraical and therefore not applicable to general superfunctions. It is also only practical if $M\not\in-2\mN$, which is the case where a Fischer decomposition (\ref{superFischer}) exists. In this section we will construct a different and more rigorous interpretation of the embedding of the supersphere. 

To generalize the properties of the unit sphere integration in section \ref{classHarm}, we need to define a condition `$f(\bold{x})$ is zero on the supersphere' for $f\in \cC^\infty(\mR^m)\otimes\Lambda_{2n}$. The condition $R^2=L^2$ with $L\in\mR$ is equivalent with $r^2=L^2-\theta^2$. Define the superalgebra morphism $\phi^\sharp:\cC^\infty(\mR^{m})\otimes\Lambda_{2n}\to\cC^\infty(\mR^{m}_0)\otimes\Lambda_{2n}$ (with $\mR^m_0=\mR^m\backslash \{0\}$) as
\begin{eqnarray*}
\phi^\sharp(f(\bold{x}))&=&\sum_{j=0}^n\frac{(-1)^j\theta^{2j}}{j!}\left(\frac{\partial}{\partial r^2}\right)^jf(\bold{x}).\\
%&=&f(\sqrt{1-\frac{\theta^2}{r^2}}x_1,\cdots,\sqrt{1-\frac{\theta^2}{r^2}}x_m,{x\grave{}}_1,\cdots,{x\grave{}}_{2n}).
\end{eqnarray*}
This map $\phi^\sharp$ can be identified with a morphism of sheaves (which we also denote by $\phi^\sharp$) corresponding to the supermanifold diffeomorphism $\Phi=(id_{\mR^m},\phi^\sharp):\mR^{m|2n}_0=(\mR^m_0,\cC^\infty_{\mR^m_0}\otimes\Lambda_{2n})\to\mR^{m|2n}$. In the sense of Definition \ref{fTaylor}, $\phi^\sharp$ corresponds to substituting $r^2-\theta^2$ for $r^2$. Substituting $L^2-\theta^2$ for $r^2$ is therefore realized by $[\phi^\sharp(f(\bold{x}))]_{r=L}$. The condition $f=0$ for $R^2=L^2$ is thus expressed as $[\phi^\sharp(f(\bold{x}))]_{r=L}=0$. In particular for functions of the form in Corollary \ref{Ospfun}, this definition yields the expected result:
\begin{eqnarray*}
[\phi^\sharp(h(R^2))]_{r=L}&=&h(L^2).
\end{eqnarray*}

According to Definition \ref{fTaylor}, $1/R\in\cO(\mR^{m|2n}_0)$ is defined as
\begin{eqnarray*}
\frac{1}{R}&=&\sum_{j=0}^n(-1)^j\frac{\theta^{2j}}{r^{1+2j}}\frac{\Gamma(j+\frac{1}{2})}{\Gamma(\frac{1}{2})j!}.
\end{eqnarray*}

\begin{lemma}
\label{Alg}
Denote by $\tilde{\Lambda}_{2n}$ the subalgebra of $\Lambda_{2n}\otimes \mR\{\frac{1}{r}\}\subset \cO(\mR^{m|2n}_0)$ generated by $\{\frac{{x\grave{}}_j}{R}|j=1,\cdots, 2n\}$. The algebra $\tilde{\Lambda}_{2n}$ is isomorphic to $\Lambda_{2n}$.
\end{lemma}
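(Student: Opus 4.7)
The plan is to construct an explicit algebra homomorphism $\varphi: \Lambda_{2n} \to \tilde{\Lambda}_{2n}$ from the abstract Grassmann algebra with generators $e_1,\ldots,e_{2n}$ by setting $\varphi(e_j) = {x\grave{}}_j/R$, and then show that $\varphi$ is a bijection. The argument splits into three steps: well-definedness, surjectivity, and injectivity.

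For well-definedness I would verify that the images satisfy the defining Grassmann relations $\{e_j,e_k\}=0$. The crucial observation is that $1/R$ is an \emph{even} element of $\cO(\mR^{m|2n}_0)$: its expansion
\[
\frac{1}{R} = \sum_{j=0}^n (-1)^j \frac{\Gamma(j+\frac{1}{2})}{\Gamma(\frac{1}{2})\,j!}\,\frac{\theta^{2j}}{r^{1+2j}}
\]
contains only even powers of fermionic variables. Hence $1/R$ commutes with every fermionic element, so $({x\grave{}}_j/R)({x\grave{}}_k/R) = (1/R^2)\,{x\grave{}}_j{x\grave{}}_k$, which is manifestly antisymmetric in $j$ and $k$. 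Surjectivity is immediate from the definition of $\tilde{\Lambda}_{2n}$ as the subalgebra generated by the ${x\grave{}}_j/R$.

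The heart of the argument is injectivity. I would equip $\Lambda_{2n}\otimes \mR\{1/r\}$ with the $\mZ$-grading induced by the natural $\mZ$-grading of $\Lambda_{2n}$, placing $\mR\{1/r\}$ in degree zero. In this grading each ${x\grave{}}_j/R$ decomposes as ${x\grave{}}_j/r$ in degree one plus a sum of terms of odd fermionic degrees $\ge 3$. Consequently, for any subset $A\subseteq\{1,\ldots,2n\}$ of size $k$, the product $\prod_{j\in A}({x\grave{}}_j/R) = {x\grave{}}_A/R^k$ has lowest graded component $r^{-k}{x\grave{}}_A$ of Grassmann degree exactly $k$. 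If $\sum_A c_A\prod_{j\in A}({x\grave{}}_j/R) = 0$ with some $c_A\neq 0$, let $k$ be the smallest cardinality for which $c_A \neq 0$; projecting onto the degree-$k$ subspace yields $r^{-k}\sum_{|A|=k}c_A{x\grave{}}_A = 0$, contradicting the linear independence of $\{{x\grave{}}_A : |A|=k\}$ in $\Lambda_{2n}$.

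The step I expect to be the main obstacle is confirming that the lowest graded component of $\prod_{j\in A}({x\grave{}}_j/R)$ is genuinely $r^{-|A|}{x\grave{}}_A$ without any cancellation among the leading terms of the factors. This reduces to the routine but slightly delicate check that the expansion of $1/R^{|A|}$ in $\Lambda_{2n}\otimes \mR\{1/r\}$ has constant fermionic term equal to $1/r^{|A|}$, which follows immediately from expanding $(1+\theta^2/r^2)^{-|A|/2}$ through powers of the nilpotent element $\theta^2/r^2$.
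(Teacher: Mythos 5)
Your proof is correct, and it handles injectivity by a different device than the paper. The paper also starts from the surjective morphism $\xi:{x\grave{}}_A\mapsto {x\grave{}}_A/R^{|A|}$, but proves injectivity by explicitly constructing the inverse through an iterated limit procedure: $\chi_0(a)=\lim_{r\to\infty}a$ and $\chi_i(a)=\chi_{i-1}(a)+\lim_{r\to\infty}r^i\left[a-\xi(\chi_{i-1}(a))\right]$, showing by induction that $\chi_n\circ\xi=id$. This successive-approximation scheme is really extracting, one Grassmann degree at a time, the same leading coefficients that you isolate by projecting onto the lowest graded component. Both arguments rest on the identical triangularity fact, namely that $\xi({x\grave{}}_A)=r^{-|A|}{x\grave{}}_A+(\hbox{terms of strictly higher Grassmann degree})$, which you justify correctly via the nilpotent expansion of $(1+\theta^2/r^2)^{-|A|/2}$ and the evenness of $1/R$. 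Your leading-term argument is the more elementary of the two and avoids any analysis (no limits in $r$ are needed, only the $\mZ$-grading of $\Lambda_{2n}\otimes\mR\{1/r\}$ and the freeness of this module over $\mR\{1/r\}$); the paper's version buys an explicit closed procedure for computing the inverse $\chi_n$, which can be convenient when one actually needs to pull elements of $\tilde{\Lambda}_{2n}$ back to $\Lambda_{2n}$. Your preliminary check that the assignment $e_j\mapsto {x\grave{}}_j/R$ respects the Grassmann relations (because $1/R$ is even, hence central for the fermionic generators) is a point the paper passes over silently, and it is worth making explicit as you do.
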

\begin{proof}
The morphism $\xi:\Lambda_{2n}\to \tilde{\Lambda}_{2n}$ given by ${x\grave{}}_A\to {x\grave{}}_A/R^{|A|}$ is surjective. To prove the injectivity we construct the inverse morphism. Define $\chi_i: \tilde{\Lambda}_{2n}    \to \Lambda_{2n} $ for $i=0,\cdots,n$ by
\begin{eqnarray*}
\chi_0(a)&=&\lim_{r\to\infty}a\\
\chi_i(a)&=&\chi_{i-1}(a)+\lim_{r\to\infty}r^i\left[a-\xi(\chi_{i-1}(a))\right]\qquad\mbox{for } i=1,\cdots,n.
\end{eqnarray*}
Then we calculate for $g=\sum_Ag_A{x\grave{}}_A\in \Lambda_{2n}$, $\chi_0(\xi(g))=g_{0,\cdots,0}=\sum_{A,|A|\le0}g_A{x\grave{}}_A$. Assuming $\chi_i(\xi(g))=\sum_{A,|A|\le i}g_A{x\grave{}}_A$, yields
\begin{eqnarray*}
\chi_{i+1}(\xi(g))=\lim_{r\to\infty}r^{i+1}\left[\sum_{A,|A|> i}g_A\frac{{x\grave{}}_A}{r^{|A|}}\frac{1}{\left(\sqrt{1+\frac{\theta^2}{r^2}}\right)^{|A|}}\right]+\sum_{A,|A|\le i}g_A{x\grave{}}_A=\sum_{A,|A|\le i+1}g_A{x\grave{}}_A,
\end{eqnarray*}
so by induction, $\chi_n$ is the inverse of $\xi$.
\end{proof}

The isomorphism $\xi$ from the proof of this theorem can be trivially to an isomorphism on $\cO(\mR^{m|2n}_0)$ by letting it act trivially on bosonic functions.

\begin{definition}
\label{evalu}
Consider an open subset $U^\ast$ of $\mR^m$ and $U=\pi^{-1}_{\mS^{m-1}}(U^\ast)$. The evaluation on the supersphere $\pi_{\mS^{m-1|2n}}^\sharp$ is a superalgebra morphism $\cO^\infty_{\mR^{m|2n}}(U^\ast)\to \cC^\infty_{\mS^{m-1}}(U)\otimes \tilde{\Lambda}_{2n}$ given by $\pi_{\mS^{m-1|2n}}^\sharp=\xi\circ(\pi^\sharp_{\mS^{m-1}}\times id_{\Lambda_{2n}})\circ\phi^\sharp$.
\end{definition}
The evaluation on the supersphere for $f(\bold{x})=\sum_Af_A(\ux){x\grave{}}_A\in\cC^\infty(\mR^m)\otimes \Lambda_{2n}$ is therefore given by
\begin{eqnarray*}
\pi_{\mS^{m-1|2n}}^\sharp f&=&\sum_{A}\sum_{j=0}^{n}(-1)^j\left[\left(\frac{\partial}{\partial r^2}\right)^jf_A(\ux)\right]_{r=1}\frac{\theta^{2j}}{j!R^{2j}}\frac{{x\grave{}}_A}{R^{|A|}}.
\end{eqnarray*}
%&=&\xi\left(\sum_{j=0}^n(-1)^j\pi_{\mS^{m-1}}^\sharp\left[\left(\frac{\partial}{\partial r^2}\right)^jf(\bold{x})\right]\frac{\theta^{2j}}{j!}\right)\\

This definition behaves well with respect to the definition of `$f$ is zero on the supersphere' since $[\phi^\sharp]_{r=1}=0$ holds if and only if $\pi^\sharp_{\mS^{m-1|2n}}[f]=0$ holds. The following lemma proves Definition \ref{evalu} is a natural choice for the supersphere evaluation.
\begin{lemma}
\label{lemSSev}
The evaluation on the supersphere on $\cO(\mR^{m|2n})$ can be expressed using Definition \ref{fTaylor} as
\begin{eqnarray*}
\pi_{\mS^{m-1|2n}}^\sharp f&=&f(\frac{\bold{x}}{R})\,=\,\sum_{\underline{\alpha}}\sum_Af_A^{(\ua)}(\frac{\ux}{r})\frac{\ux^{\ua}}{(\ua !)}(\frac{1}{R}-\frac{1}{r})^{|\ua |}\frac{{x\grave{}}_A}{R^{|A|}}.
\end{eqnarray*}
\end{lemma}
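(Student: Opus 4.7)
The plan is to recognize both sides of the claimed identity as two instances of Definition \ref{fTaylor} computing $f(\bold{x}/R)$, one through the radial variable $r^2$ (coming from $\phi^\sharp$) and the other through the Cartesian variables $x_j$ directly. Writing $f=\sum_A f_A(\ux)\uxb_A$, both sides factor out $\uxb_A/R^{|A|}$ identically (on the left because $\xi$ sends $\uxb_A\mapsto\uxb_A/R^{|A|}$ while $\phi^\sharp$ and $\pi_{\mS^{m-1}}^\sharp$ leave the $\uxb$-variables alone, and on the right by inspection), so the identity reduces to the purely bosonic statement
\[
\sum_{j=0}^n \frac{(-1)^j\,\theta^{2j}}{j!\,R^{2j}}\bigl[(\partial/\partial r^2)^j f_A(\ux)\bigr]_{r=1}\;=\;f_A(\ux/R),
\]
with the right-hand side interpreted through Definition \ref{fTaylor}.

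For the left-hand side, I would pass to polar coordinates on $\mR^m_0$ and write $f_A(\ux)=\tilde f_A(\uom,u)$ with $\uom=\ux/r$ and $u=r^2$. By Definition \ref{fTaylor} applied to the one-variable substitution $u\to u-\theta^2$, the map $\phi^\sharp$ becomes $\tilde f_A(\uom,u)\mapsto\tilde f_A(\uom,u-\theta^2)$; evaluating at $r=1$ (i.e.\ at $u=1$) yields $\tilde f_A(\uom,1-\theta^2)$, and applying $\xi$ (which substitutes $\theta^2\to\theta^2/R^2$) gives $\tilde f_A(\uom,\,1-\theta^2/R^2)=\tilde f_A(\uom,\,r^2/R^2)$, using $r^2+\theta^2=R^2$. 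Setting $\tilde g(t):=f_A(t\uom)=\tilde f_A(\uom,t^2)$, this coincides with $\tilde g(r/R)$.

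For the right-hand side, I would rewrite $\ux^{\ua}(1/R-1/r)^{|\ua|}=\uom^\ua(r/R-1)^{|\ua|}$ and group the multi-index sum by total degree $k=|\ua|$. Using the chain-rule identity $\tilde g^{(k)}(1)/k!=\sum_{|\ua|=k}\uom^\ua f_A^{(\ua)}(\uom)/\ua!$ (which follows from $\tilde g^{(k)}(t)=(\uom\cdot\partial_\uy)^k f_A(\uy)\big|_{\uy=t\uom}$), the right-hand side becomes $\sum_{k\ge 0}\tilde g^{(k)}(1)(r/R-1)^k/k!$. By Definition \ref{fTaylor} for the one-variable substitution $t\to r/R$ (whose bosonic part is $1$ and nilpotent part is $r/R-1$), this is exactly $\tilde g(r/R)$, matching the left-hand side.

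The main obstacle is the functoriality of Definition \ref{fTaylor} under composition of substitutions: here we need that substituting $u\to r^2/R^2$ into $\tilde f_A(\uom,u)$ and substituting $t\to r/R$ into $\tilde g(t)=\tilde f_A(\uom,t^2)$ yield the same superfunction. Because the nilpotent parts $r/R-1$ and $\theta^2/R^2$ have nilpotency index at most $n+1$, both Taylor series truncate after finitely many terms, so the identity ultimately reduces to a finite algebraic equality that can be verified by expanding both sides and matching powers of the fermionic variables.
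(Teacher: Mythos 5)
Your proposal is correct and follows essentially the same route as the paper: both arguments reduce the identity to a one-variable radial statement via the multivariate chain rule (rewriting $\ux^{\ua}(1/R-1/r)^{|\ua|}$ as $\uom^{\ua}(r/R-1)^{|\ua|}$ and grouping by total degree), and both hinge on the same compatibility between the $\partial_{r^2}$-expansion in $-\theta^2/R^2$ and the $\partial_r$-expansion in $\sqrt{1-\theta^2/R^2}-1$, which the paper simply asserts as an identity and you flag as the ``functoriality'' obstacle to be checked by finite expansion. The only presentational difference is that you make the reduction to the purely bosonic coefficients $f_A$ explicit, which the paper leaves implicit.
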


\begin{proof}
By Definition \ref{fTaylor}, for $f\in\cC^\infty(\mR^m)$,
\[f\left(\frac{\ux}{R}\right)=f\left(\frac{\ux}{r}+\ux(\frac{1}{R}-\frac{1}{r})\right)=\sum_{\underline{\alpha}}f^{(\ua)}(\uom)\frac{\ux^{\ua}}{(\ua !)}(\frac{1}{R}-\frac{1}{r})^{|\ua |}
\]
holds since $\left(\frac{1}{R}-\frac{1}{r}\right)$ is nilpotent. Now we calculate
\begin{eqnarray*}
\frac{1}{k!}\left(\frac{\partial^k f}{\partial r^k}\right)(\uom)&=&\frac{1}{k!}\left[\sum_{\ua,|\ua|=k} f^{(\ua)}(\ux)\underline{\omega}^{\ua}\frac{|\ua| !}{(\ua!)}\right]_{r=1}=\sum_{\ua,|\ua|=k} f^{(\ua)}(\uom)\frac{\ux^{\ua}}{r^k}\frac{1}{(\ua!)},
\end{eqnarray*}
which implies
\begin{eqnarray*}
f\left(\frac{\ux}{R}\right)&=&\sum_{k=0}^{n}\frac{1}{k!}\left[\frac{\partial^k }{\partial r^k}f(\ux)\right]_{r=1}\left(\frac{r}{R}-1\right)^{k}.
\end{eqnarray*}

Now, since the relation
\[
\sum_{j=0}^n\left[\left(\frac{\partial}{\partial r^2}\right)^jf(\ux)\right]_{r=1}(-1)^j\frac{\theta^{2j}}{j!R^{2j}}=\sum_{j=0}^n\left[\left(\frac{\partial}{\partial r}\right)^jf(\ux)\right]_{r=1}\frac{1}{j!}\left(\sqrt{1-\frac{\theta^2}{R^2}}-1\right)^j
\]
holds, the lemma is proven.
\end{proof}

According to Lemma \ref{Alg} we can identify ${x\grave{}}_j/R$ with ${y\grave{}}_j$, where ${x\grave{}}_j$  are the generators of the Grassmann algebra for $\mR^{m|2n}$ and ${y\grave{}}_j$ those of the Grassmann algebra for $\mS^{m-1|2n}$. This implies that $\Pi_{\mS^{m-1|2n}}=(\pi_{\mS^{m-1}},\pi^\sharp_{\mS^{m-1|2n}})$ is a supermanifold morphism $\mS^{m-1|2n}\to\mR^{m|2n}$ and the supersphere is embedded in the Euclidean superspace. Because of the identification of ${x\grave{}}_j/R$ with ${y\grave{}}_j$ we find 
\begin{eqnarray}
\label{SSevsimple}
\pi^\sharp_{\mS^{m-1|2n}}=(\pi^\sharp_{\mS^{m-1}}\times \rho_{YX})\circ\phi^\sharp: & &\cO(\mR^{m|2n})\to\cO(\mS^{m-1|2n}),
\end{eqnarray}
with $\rho_{YX}$ the identity map between the Grassmann algebra generated by the ${x\grave{}}_j$ and ${y\grave{}}_j$.

Now we can study supersphere integration for functions on $\mR^{m|2n}$. First we need the following technical lemma.
\begin{lemma}
\label{OO}
The operator $\phi^\sharp:\cO(\mR^{m|2n})\to\cO(\mR_0^{m|2n})$ behaves with respect to the coordinates and derivatives as
\begin{eqnarray*}
\phi^\sharp x_j=x_j\sqrt{1-\frac{\theta^2}{r^2}}\phi^\sharp,& &\phi^\sharp{x\grave{}}_j={x\grave{}}_j\phi^\sharp,\\
\phi^\sharp\partial_{x_j}=\frac{1}{\sqrt{1-\frac{\theta^2}{r^2}}}\partial_{x_j}\phi^\sharp-\frac{x_j\theta^2}{r^3\sqrt{1-\frac{\theta^2}{r^2}}}\partial_{r}\phi^\sharp&\mbox{and} &\phi^\sharp\partial_{{x\grave{}}_{j}}=\partial_{{x\grave{}}_{j}}\phi^\sharp-\frac{{x\grave{}}^{j}}{r}\partial_{r}\phi^\sharp.
\end{eqnarray*}
\end{lemma}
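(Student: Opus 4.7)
The plan is to recognize $\phi^\sharp$ as the pullback along the supermanifold diffeomorphism that substitutes $\ux\mapsto\ux\sqrt{1-\theta^2/r^2}$ while leaving $\uxb$ fixed. The element $\sqrt{1-\theta^2/r^2}=\sum_{k=0}^{n}\binom{1/2}{k}(-\theta^2/r^2)^k$ is a well-defined even superfunction on $\mR^{m|2n}_0$ because $\theta^2/r^2$ is nilpotent. For any smooth function $g$ of $r^2$ one has $g(r^2-\theta^2)=\sum_{k=0}^n\frac{(-\theta^2)^k}{k!}g^{(k)}(r^2)=\phi^\sharp g(r^2)$ by Taylor expansion, and since the series defining $\phi^\sharp$ involves no angular derivative, the direction $\uom=\ux/r$ is preserved. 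Extending via Definition \ref{fTaylor} yields $\phi^\sharp f(\bold{x})=f(\ux\sqrt{1-\theta^2/r^2},\uxb)$, so that $\phi^\sharp$ is the superalgebra morphism determined by $\phi^\sharp(x_j)=x_j\sqrt{1-\theta^2/r^2}$ and $\phi^\sharp({x\grave{}}_j)={x\grave{}}_j$. The first two identities then follow at once.

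For the bosonic derivative identity, I would apply the chain rule to $\partial_{x_j}\phi^\sharp f=\partial_{x_j}f(\ux',\uxb)$ with $x'_k=x_k\sqrt{1-\theta^2/r^2}$. The key computation is
\[
\partial_{x_j}\sqrt{1-\theta^2/r^2}=\frac{x_j\theta^2}{r^4\sqrt{1-\theta^2/r^2}},\qquad \partial_{x_j}x'_k=\delta_{jk}\sqrt{1-\theta^2/r^2}+\frac{x_jx_k\theta^2}{r^4\sqrt{1-\theta^2/r^2}},
\]
which gives
\[
\partial_{x_j}\phi^\sharp f=\sqrt{1-\theta^2/r^2}\,\phi^\sharp(\partial_{x_j}f)+\frac{x_j\theta^2}{r^4\sqrt{1-\theta^2/r^2}}\sum_k x_k\phi^\sharp(\partial_{x_k}f).
\]
The morphism property of $\phi^\sharp$ yields $\sum_k x_k\phi^\sharp(\partial_{x_k}f)=\phi^\sharp(\mE_b f)/\sqrt{1-\theta^2/r^2}$. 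An analogous chain rule on $\partial_r\phi^\sharp f$, using $\partial_r(r\sqrt{1-\theta^2/r^2})=1/\sqrt{1-\theta^2/r^2}$, gives $\phi^\sharp(\partial_r f)=\sqrt{1-\theta^2/r^2}\,\partial_r\phi^\sharp f$, hence $\phi^\sharp(\mE_b f)=\phi^\sharp(r\partial_r f)=\frac{r^2-\theta^2}{r}\partial_r\phi^\sharp f$. Substituting and rearranging produces the third identity.

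For the fermionic derivative I would instead compute $[\partial_{{x\grave{}}_j},\phi^\sharp]$ directly from the defining series. Since $\partial_{{x\grave{}}_j}$ commutes with every $(\partial/\partial r^2)^k$, only the action on $\theta^{2k}$ contributes. Using the explicit form $\theta^2=-\sum_i {x\grave{}}_{2i-1}{x\grave{}}_{2i}$ together with the symplectic raising map to verify $\partial_{{x\grave{}}_j}\theta^2=-2{x\grave{}}^j$, graded Leibniz gives $\partial_{{x\grave{}}_j}(-\theta^2)^k=2k{x\grave{}}^j(-\theta^2)^{k-1}$. Reindexing $k\to k+1$ and extending the resulting sum back to the full range $0,\ldots,n$ (harmless because ${x\grave{}}^j\theta^{2n}=0$) yields
\[
[\partial_{{x\grave{}}_j},\phi^\sharp]=2{x\grave{}}^j(\partial/\partial r^2)\phi^\sharp=\frac{{x\grave{}}^j}{r}\partial_r\phi^\sharp,
\]
which rearranges to the fourth identity.

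The main obstacle is the first step: justifying the substitution interpretation rigorously for general non-radial superfunctions, rather than just for functions of $r^2$. Once this reading of $\phi^\sharp$ is in place, the bosonic identity is a routine chain rule manipulation, and the fermionic identity follows transparently from the direct commutator computation above, with the nilpotence of $\theta^{2(n+1)}$ doing all the index-shifting work.
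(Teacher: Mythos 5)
Your proof is correct, but for the key bosonic-derivative identity it takes a genuinely different computational route from the paper. The paper's proof never invokes the substitution form $\phi^\sharp f = f(\ux\sqrt{1-\theta^2/r^2},\uxb)$ for general $f$; instead it works purely at the operator level, decomposing $\partial_{x_j}=\frac{x_j}{r}\partial_r+\sum_{l=1}^m\frac{x_l}{r^2}L_{lj}$ into radial and angular parts, using that $\partial_{r^2}$ and the bosonic angular operators $L_{lj}$ commute with $\phi^\sharp$ (since $\phi^\sharp$ is built only from $\theta^2$ and $\partial_{r^2}$), and then pushing $\phi^\sharp$ through with the already-established relation $\phi^\sharp x_l=x_l\sqrt{1-\theta^2/r^2}\,\phi^\sharp$. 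Your chain-rule computation on the substitution $\ux\mapsto\ux\sqrt{1-\theta^2/r^2}$ reaches the same answer (I checked the coefficients: the $r^{-4}$ from $\partial_{x_j}\sqrt{1-\theta^2/r^2}$ combines with $\sum_kx_k\phi^\sharp(\partial_{x_k}f)=r\sqrt{1-\theta^2/r^2}\,\partial_r\phi^\sharp f$ to give exactly $\frac{x_j\theta^2}{r^3}\partial_r\phi^\sharp$ before dividing by $\sqrt{1-\theta^2/r^2}$), and the obstacle you flag -- justifying the substitution reading for non-radial $f$ -- is precisely the Taylor-expansion identity established in the proof of Lemma \ref{lemSSev}, which precedes this lemma, so invoking it is legitimate and not circular. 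The trade-off: the paper's operator argument needs only the commutation of $\phi^\sharp$ with $\partial_{r^2}$ and $L_{ij}$ and so sidesteps the substitution interpretation entirely, while your version makes the geometric content (a radial rescaling) explicit and, for the fermionic identity, supplies the commutator computation $[\partial_{{x\grave{}}_j},\phi^\sharp]=2{x\grave{}}^j\partial_{r^2}\phi^\sharp$ in full where the paper merely asserts it is straightforward. Both are valid proofs.
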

\begin{proof}
Since $\phi^\sharp$ is a superalgebra morphism, $\phi^\sharp( X_j f(\bold{x}))=\phi^\sharp (X_j)\phi^\sharp(f(\bold{x}))$ holds. The calculation of the expressions $\phi^\sharp (x_j)$ and $\phi^\sharp ({x\grave{}}_j)$ are straightforward. To calculate the third property we use the fact that $\partial_{r^2}$ and $L_{ij}$ for $1\le i,j\le m$ commute with $\phi^\sharp$ and the first property in the lemma, leading to
\begin{eqnarray*}
\partial_{x_j}\phi^\sharp&=&\left[\frac{x_j}{r}\partial_r+\sum_{l=1}^m\frac{x_l}{r^2}L_{lj}\right]\phi^\sharp\\
&=&\frac{x_j}{r}\partial_r\phi^\sharp+\sum_{l=1}^m\sqrt{1-\frac{\theta^2}{r^2}}\,\phi^\sharp \,\frac{x_l}{r^2}L_{lj}\\
&=&\frac{x_j}{r}\partial_r\phi^\sharp+\sqrt{1-\frac{\theta^2}{r^2}}\phi^\sharp \,\left[\partial_{x_j}-\frac{x_j}{r}\partial_r\right]\\
&=&\frac{x_j}{r}\left(1-(1-\frac{\theta^2}{r^2})\right)\partial_r\phi^\sharp+\sqrt{1-\frac{\theta^2}{r^2}}\phi^\sharp \,\partial_{x_j}.\\
\end{eqnarray*}
The last property follows from a straightforward calculation.
\end{proof}

Now we prove that the metric on $\mS^{m-1|2n}$ in Definition \ref{mathdefSS} is the induced metric from the embedding of $\mS^{m-1|2n}$ in $\mR^{m|2n}$ given by $\Pi_{\mS^{m-1|2n}}$.

\begin{theorem}
\label{achterosp}
The realizations of $\mathfrak{osp}(m|2n)$ in equations \eqref{ospgen} and \eqref{ospgen2} satisfy
\begin{eqnarray*}
(i)& &J_{ij}\circ\pi^\sharp_{\mS^{m-1|2n}}=\pi^\sharp_{\mS^{m-1|2n}}\circ L_{ij}.
\end{eqnarray*}
The generators $J_{ij}$ of $\mathfrak{osp}(m|2n)$ in \eqref{ospgen2} which span the tangent space of $\mS^{m-1|2n}$ and the metric $\langle,\rangle_{\mS^{m-1|2n}}$ in Definition \ref{mathdefSS} satisfy
\begin{eqnarray*}
(ii)& &\langle J_{ij},J_{kl}\rangle_{\mS^{m-1|2n}}=\pi^\sharp_{\mS^{m-1|2n}}(\langle L_{ij},L_{kl}\rangle)\qquad \mbox{for} \quad 1\le i,j,k,l\le m+2n.
%\langle \partial_{{y\grave{}}_k},\partial_{{y\grave{}}_l}\rangle_{\mS^{m-1|2n}}&=&\pi^\sharp_{\mS^{m-1|2n}}(\langle\partial_{{x\grave{}}_k}-\frac{{x\grave{}}^k}{r}\partial_r,\partial_{{x\grave{}}_l}-\frac{{x\grave{}}^l}{r}\partial_r\rangle).
\end{eqnarray*}

\end{theorem}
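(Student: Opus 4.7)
The plan is to prove (i) first, and then deduce (ii) by recognising $\pi^\sharp_{\mS^{m-1|2n}}$ as an embedding and the metric of Definition \ref{mathdefSS} as the induced metric. The driving observation, used throughout, is that every generator $L_{ij}\in\osp$ preserves the inner product, so $L_{ij}(R^2)=0$ and hence $L_{ij}(1/R)=0$; this means $L_{ij}$ commutes with every $R$-factor hidden inside $\pi^\sharp_{\mS^{m-1|2n}}$ in the sense of equation \eqref{SSevsimple} and Lemma \ref{lemSSev}.

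For (i), both maps $J_{ij}\circ\pi^\sharp_{\mS^{m-1|2n}}$ and $\pi^\sharp_{\mS^{m-1|2n}}\circ L_{ij}$ satisfy a common graded Leibniz rule over the algebra morphism $\pi^\sharp_{\mS^{m-1|2n}}$; hence so does their difference, and since both sides are first-order, it suffices to verify the identity on the coordinate generators $X_1,\cdots,X_{m+2n}$ of $\cO(\mR^{m|2n})$. A short computation from Definition \ref{evalu} and Lemma \ref{OO} gives $\pi^\sharp_{\mS^{m-1|2n}}(X_k)=Y_k$, where $Y_k=\sqrt{1-\theta^2_y}\,\xi_k$ for $k\le m$ and $Y_k={y \grave{}}_{k-m}$ for $k>m$. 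The claim then reduces to $J_{ij}(Y_k)=\pi^\sharp_{\mS^{m-1|2n}}(L_{ij}(X_k))$, which I would verify in the three branches of \eqref{ospgen2}: for $i,j\le m$ this is the classical intertwining of angular momentum with restriction to $\mS^{m-1}$; for $i,j>m$ it follows because these symplectic generators commute with $\phi^\sharp$, and act on ${x \grave{}}_k/R$ in the same way as they act on ${y \grave{}}_k$; for $i\le m<j$ it is a direct inspection using the explicit formula for the mixed generator.

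Given (i), part (ii) reduces to checking that Definition \ref{mathdefSS} produces precisely the metric induced by the embedding $\Pi_{\mS^{m-1|2n}}$; this I would do case by case, using bilinearity to reduce to the basic pairings of Definition \ref{mathdefSS}. For the purely bosonic pairings ($i,j,k,l\le m$), $\pi^\sharp_{\mS^{m-1|2n}}(x_ix_k)=(1-\theta^2_y)\xi_i\xi_k$ by Lemma \ref{OO}, matching the first line. For the purely fermionic pairings ($i,j,k,l>m$), expanding both sides shows that the correction ${y \grave{}}^k{y \grave{}}^l/(1-\theta^2_y)$ in the third line arises exactly from the action of $\pi^\sharp$ through the $R$-factors hidden in the identification ${x \grave{}}_k/R\leftrightarrow{y \grave{}}_k$. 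Mixed pairings with one purely bosonic and one purely fermionic generator vanish in the ambient space, matching the second line; the pairings involving the mixed branch of $J_{ij}$ then follow by bilinearity from these basic cases.

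The main obstacle is the mixed case of (i): for $i\le m<j$ the generator $L_{ij}=x_i\partial_{X^j}-{x \grave{}}_{j-m}\partial_{x_i}$ does not commute with $\phi^\sharp$, and the extra terms produced by Lemma \ref{OO} must be shown to reassemble into the correction factors $\sqrt{1-\theta^2_y}$ and $1/\sqrt{1-\theta^2_y}$ appearing in the mixed branch of \eqref{ospgen2}. The reduction to the finitely many coordinate generators $X_k$ keeps this a bounded algebraic verification, but the graded sign bookkeeping under $\phi^\sharp$ and $\xi$ requires care.
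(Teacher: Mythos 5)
Your proposal is correct in substance and arrives at the same conclusion, but part (i) is organized differently from the paper. The paper proves (i) as an \emph{operator identity}: using Lemma \ref{OO} it computes the full conjugation $\phi^\sharp\circ L_{i,m+j}=\bigl(x_i\sqrt{1-\theta^2/r^2}\,\partial_{{x\grave{}}^j}-\frac{{x\grave{}}_j}{r^2\sqrt{1-\theta^2/r^2}}\sum_k x_kL_{ki}\bigr)\circ\phi^\sharp$ for the odd (mixed) generators and then applies \eqref{SSevsimple}; the even generators are dismissed as trivial. You instead observe that $J_{ij}\circ\pi^\sharp_{\mS^{m-1|2n}}-\pi^\sharp_{\mS^{m-1|2n}}\circ L_{ij}$ is a derivation along the algebra morphism $\pi^\sharp_{\mS^{m-1|2n}}$ and check it on the coordinates $X_k$, where $\pi^\sharp_{\mS^{m-1|2n}}(X_k)=Y_k$. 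That is a legitimate and arguably lighter verification (the computation $J_{i,m+l}(Y_k)=-{y\grave{}}_l\delta_{ik}=\pi^\sharp_{\mS^{m-1|2n}}(L_{i,m+l}(x_k))$ does close up, using $\sum_s\xi_s^2=1$), but it buys its economy at the cost of one point you should state explicitly: $\cO(\mR^{m|2n})=\cC^\infty(\mR^m)\otimes\Lambda_{2n}$ is \emph{not} generated as an algebra by the coordinates, so ``it suffices to check on generators'' requires the standard locality/Hadamard argument (plus the finite Grassmann expansion) showing that a derivation along a morphism of such function superalgebras is determined by its values on a coordinate system. Also take care not to invoke Remark \ref{remosp} here, since the paper derives that remark's content from this very theorem; your stated plan of direct inspection of the three branches of \eqref{ospgen2} avoids the circularity. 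For (ii) your route coincides with the paper's; the one ingredient you leave implicit and should write down is the identity $\partial_{{y\grave{}}_j}\circ\pi^\sharp_{\mS^{m-1|2n}}=\pi^\sharp_{\mS^{m-1|2n}}\circ\bigl(\partial_{{x\grave{}}_j}+\frac{{x\grave{}}^j}{r}\partial_r\bigr)$ (a consequence of Lemma \ref{OO}), which is exactly what produces the correction term $J^{lk}+{y\grave{}}^k{y\grave{}}^l/(1-\theta_y^2)$ and the vanishing of the mixed pairings; the $J_{ij}$ pairings then follow by bilinearity as you say.
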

\begin{proof}
The first statement for the even generators of $\mathfrak{osp}(m|2n)$ is trivial. Lemma \ref{OO} implies
\begin{eqnarray*}
& &\phi^\sharp \circ L_{i,m+j}\\
&=&\left[x_i\sqrt{1-\frac{\theta^2}{r^2}}\partial_{{x\grave{}}^j} +2x_i\sqrt{1-\frac{\theta^2}{r^2}}\frac{{x\grave{}}_j}{2r}\partial_{r}\ -{x\grave{}}_j\frac{1}{\sqrt{1-\frac{\theta^2}{r^2}}}\partial_{x_i} +{x\grave{}}_j\frac{x_i\theta^2}{r^3\sqrt{1-\frac{\theta^2}{r^2}}}\partial_{r}\right]\circ\phi^\sharp \\
&=&x_i\sqrt{1-\frac{\theta^2}{r^2}}\partial_{{x\grave{}}^j}\circ\phi^\sharp -{x\grave{}}_j\frac{1}{\sqrt{1-\frac{\theta^2}{r^2}}}\partial_{x_i}\circ\phi^\sharp +\frac{{x\grave{}}_jx_i}{r\sqrt{1-\frac{\theta^2}{r^2}}}\partial_{r}\circ\phi^\sharp \\
&=&x_i\sqrt{1-\frac{\theta^2}{r^2}}\partial_{{x\grave{}}^j}\circ\phi^\sharp -\frac{{x\grave{}}_j}{r^2\sqrt{1-\frac{\theta^2}{r^2}}}\sum_{k=2}^mx_kL_{ki}\circ\phi^\sharp.
\end{eqnarray*}
Equation \eqref{SSevsimple} then yields
\begin{eqnarray*}
\pi_{\mS^{m-1|2n}}^\sharp\circ L_{i,m+j}&=&\xi_i\sqrt{1-\theta^2_y}\partial_{{y\grave{}}^j}\circ \pi^\sharp_{\mS^{m-1|2n}} -\frac{{y\grave{}}_j}{\sqrt{1-\theta^2_y}}\sum_{k=2}^m\xi_kL_{ki}\circ\pi_{\mS^{m-1|2n}}^\sharp,
\end{eqnarray*}
which proves the first part of the theorem. 

The second statement is straightforward for the generators of $\mathfrak{o}(m)$. Lemma \ref{OO} implies that $\partial_{{y\grave{}}_j}\circ \pi^\sharp_{\mS^{m-1|2n}}=\pi^\sharp_{\mS^{m-1|2n}}\circ \left(\partial_{{x\grave{}}_j}+\frac{{x\grave{}}^j}{r}\partial_r\right)$ holds. Combining this with
\begin{eqnarray*}
\langle L_{ij},\partial_{{y\grave{}}_k}\rangle_{\mS^{m-1|2n}}&=&0=\pi^\sharp_{\mS^{m-1|2n}}(\langle L_{ij},\left(\partial_{{x\grave{}}_k}+\frac{{x\grave{}}^k}{r}\partial_r\right)\rangle) \quad \mbox{and}\\
\langle \partial_{{y\grave{}}_k},\partial_{{y\grave{}}_l}\rangle_{\mS^{m-1|2n}}&=&\pi^\sharp_{\mS^{m-1|2n}}(\langle \left(\partial_{{x\grave{}}_k}+\frac{{x\grave{}}^k}{r}\partial_r\right),\left(\partial_{{x\grave{}}_l}+\frac{{x\grave{}}^l}{r}\partial_r\right)\rangle)
\end{eqnarray*}
for $1\le i,j\le m$ and $1\le k,l\le 2n$ then leads to the proposed equations.
\end{proof}

Now we can prove the uniqueness of the supersphere integration for functions on $\mR^{m|2n}$, thus solving problem \textbf{P3}. This gives a natural generalization of the characterization of the integral over the sphere in subsection \ref{classHarm}.
\begin{theorem}
\label{Pizuniekosp}
When $m\not=0$, the only (up to a multiplicative function) linear map
\begin{eqnarray*}
T: \cO(\mR^{m|2n}) \rightarrow \cC^\infty(\mR^+); & &f(\bold{x})\to T[f](L),
\end{eqnarray*}
satisfying the properties
\begin{itemize}
\item $T[R^2 f](L)=L^2 \, T[f](L)$, %if $\left[\phi^\sharp f\right]_{r=L}=0$ then $T[f](L)=0$,
\item $T$ is $SOSp(m|2n)$-invariant,
\end{itemize}
is given by
\begin{eqnarray*}
\int_{\mS^{m-1|2n}(L)}\cdot&=&\int_{\mS^{m-1}(L)}\int_B \left(1-\frac{\theta^2}{r^2}\right)^{\frac{m}{2}-1}\phi^\sharp\cdot\\
&=&\frac{1}{L^{m-2}}\int_{\mS^{m-1}(L)}\int_B \phi^\sharp\, r^{m-2}\,\cdot
\end{eqnarray*}
\end{theorem}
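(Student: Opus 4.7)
The strategy is to reduce to Theorem \ref{Pizuniekosp1} via the supersphere evaluation $\pi^\sharp_{\mS^{m-1|2n}(L)}$ from Definition \ref{evalu}, together with a super-division lemma that exploits the first property of $T$.

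First, I would verify that the proposed functional $\int_{\mS^{m-1|2n}(L)}$ satisfies both properties. For the first property, using that $\phi^\sharp$ is a superalgebra morphism with $\phi^\sharp(r^2)=r^2-\theta^2$ and $\phi^\sharp(\theta^2)=\theta^2$, one finds $\phi^\sharp(R^2)=r^2$, so the factor $r^2$ pulls out of $\int_{\mS^{m-1}(L)}$ as $L^2$. For the second property, I would identify the proposed functional (up to an $L$-dependent scalar coming from the rescaling ${x\grave{}}_j\leftrightarrow L{y\grave{}}_j$ and the sphere measure on $\mS^{m-1}(L)$) with $\int_{\mS^{m-1|2n}}\circ\pi^\sharp_{\mS^{m-1|2n}(L)}$. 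Theorem \ref{achterosp}, suitably transported to radius $L$ by scaling, shows that $\pi^\sharp_{\mS^{m-1|2n}(L)}$ intertwines the actions of $\mathfrak{osp}(m|2n)$ on $\cO(\mR^{m|2n})$ and $\cO(\mS^{m-1|2n})$; combined with Theorem \ref{Pizuniekosp1} and Theorem \ref{invmap}, the $SOSp(m|2n)$-invariance of the proposed functional then follows.

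For uniqueness, let $T$ satisfy both properties and fix $L>0$. The first property gives $T[(R^2-L^2)g](L)=0$ for all $g\in\cO(\mR^{m|2n})$, so $T(\cdot)(L)$ annihilates the ideal $(R^2-L^2)\cO(\mR^{m|2n})$. The crucial technical input is the super-division lemma: the kernel of $\pi^\sharp_{\mS^{m-1|2n}(L)}$ coincides with $(R^2-L^2)\cO(\mR^{m|2n})$. Given $f$ with $\pi^\sharp_{\mS^{m-1|2n}(L)}(f)=0$, Definition \ref{evalu} and Lemma \ref{Alg} imply $[\phi^\sharp(f)]_{r=L}=0$; writing $\phi^\sharp(f)=\sum_B F_B(\ux){x\grave{}}_B$, each coefficient $F_B\in\cC^\infty(\mR^m)$ vanishes on $\mS^{m-1}(L)$, so the classical smooth division lemma yields $F_B=(r^2-L^2)G_B$ with $G_B\in\cC^\infty(\mR^m)$. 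Hence $\phi^\sharp(f)=(r^2-L^2)G$ for $G=\sum_B G_B{x\grave{}}_B$, and since $\phi^\sharp$ is an isomorphism whose inverse substitutes $r^2+\theta^2$ for $r^2$, one obtains $f=(R^2-L^2)g$; the fact that $R^2-L^2$ does not vanish at the origin for $L>0$ ensures $g\in\cO(\mR^{m|2n})$. With the division lemma in hand, $T(\cdot)(L)$ descends to an $SOSp(m|2n)$-invariant functional on $\cO(\mS^{m-1|2n})$, which by Theorem \ref{Pizuniekosp1} equals $c(L)\int_{\mS^{m-1|2n}}$ for some scalar $c(L)$, so $T=c(L)\int_{\mS^{m-1|2n}(L)}$.

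The main obstacle is the super-division lemma; the remainder of the argument is routine bookkeeping of scaling factors, together with the intertwining and invariance facts already available from Theorems \ref{achterosp}, \ref{Pizuniekosp1}, and \ref{invmap}.
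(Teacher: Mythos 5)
Your proposal is correct and follows essentially the same route as the paper: both arguments use the first property to show that $T[\cdot](L)$ annihilates the ideal $(R^2-L^2)\cO(\mR^{m|2n})$ via the factorization $\phi^\sharp(f)=(r^2-L^2)h$ for functions vanishing on the supersphere of radius $L$, so that $T$ descends through $\pi^\sharp_{\mS^{m-1|2n}}$ and Theorems \ref{achterosp} and \ref{Pizuniekosp1} finish the job. Your write-up is in fact slightly more careful than the paper's, since you spell out the coefficientwise smooth division lemma and the extension of $g$ across the origin, both of which the paper leaves implicit.
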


\begin{proof}
If $[\phi^\sharp(f)]_{r=L}=0$ then $\phi^\sharp(f)=(r^2-L^2)h$ for some $h\in \cO(\mR^{m|2n}_0)$. Therefore $f=\left(\phi^\sharp\right)^{-1}(r^2-L^2)\left(\phi^\sharp\right)^{-1}(h)=(R^2-L^2)\left(\phi^\sharp\right)^{-1}(h)$ and $T[f](L)=0$.

This implies that $T[f](L)$ only depends on $\left[\phi^\sharp f\right]_{r=L}$. The theorem is then proven by observing that $\left[\phi^\sharp f\right]_{r=L}$ can be identified with $L^{-\mE_f}\pi^\sharp_{\mS^{m-1|2n}}[f(L\bold{x})]$ and applying theorems \ref{Pizuniekosp1} and \ref{achterosp}.
\end{proof}

Theorem \ref{invmap} implies that this integration is also $OSp(m|2n)$-invariant. 

\begin{remark}
\label{remarkm0}
In case $m=0$ the formulation of Theorem \ref{Pizuniekosp} does not make sense. It is however immediately clear that there is a unique (up to a multiplicative function) linear map $T:\Lambda_{2n}\to\cC^\infty(\mR^+)$, such that 
\begin{itemize}
\item $T[\theta^2 f](L)=L^2\, T[f](L)$,
\item $T$ is $Sp(2n)$-invariant. 
\end{itemize}
This is a direct consequence of the decomposition of $\Lambda_{2n}$ in Lemma \ref{superFischerLemma} which is the decomposition into irreducible pieces under the action of $Sp(2n)$, see \cite{MR1153249}.
\end{remark}

The expression \eqref{LBosp} and the invariance of the integral immediately imply
\begin{eqnarray}
\label{SSLB}
\int_{\mS^{m-1|2n}(L)}\circ\,\Delta_{LB}&=&0.
\end{eqnarray}

When $L=1$, the notation $\int_{\mS^{m-1|2n}}=\int_{\mS^{m-1|2n}(1)}$ is also used for supersphere integration on functions on $\mR^{m|2n}$. Evaluated on polynomials, $\int_{\mS^{m-1|2n}}$ corresponds to the Pizzetti integral in equation \eqref{Pizzetti}, see Theorem 8 in \cite{MR2539324}. By using $\phi^\sharp(f(L\bold{x}))=(\phi^\sharp f)(L\bold{x})$ and $\int_B \beta(L\uxb)=L^{2n}\int_B \beta(\uxb)$ we obtain
\begin{eqnarray*}
L^{M-1}\int_{\mS^{m-1|2n}}f(L\bold{x})&=&\int_{\mS^{m-1|2n}(L)}f(\bold{x})
\end{eqnarray*}
for the choice of normalization in Theorem \ref{Pizuniekosp}. The supersphere integral on $\mR^{m|2n}$ can also be written as
\begin{eqnarray*}
\int_{\mS^{m-1|2n}(L)}f(\bold{x})&=&2L^{2-m}\int_{\mR^{m|2n}}\delta(R^2-L^2)f(\bold{x}).
\end{eqnarray*}
by introducing the distribution
\begin{eqnarray*}
\delta(R^2-L^2)&=&\sum_{j=0}^n\delta^{(j)}(r^2-L^2)\frac{\theta^{2j}}{j!}\quad\in\;\cC^\infty(\mR^m)'\otimes\Lambda_{2n},
\end{eqnarray*}
see Lemma 10 in \cite{MR2539324}.

\subsection{The basic spherical mean}
In this section we introduce the basic spherical mean in superspace.

\begin{definition}
\label{defSS}
For $f\in \cO(\mR^{m|2n})$, the super basic spherical mean is given by
\begin{eqnarray*}
Mf(\bold{x},L)&=&\int_{\mS^{m-1|2n},\bold{y}}f(\bold{x}+L\bold{y}).
\end{eqnarray*}
\end{definition}

The basic spherical mean is a function in $m+1|2n$ variables, while the original function only has $m|2n$ variables. So we could expect $Mf$ to satisfy supplementary relations. This is the subject of the following theorem.

\begin{theorem}(Darboux)
For $f\in \cO(\mR^{m|2n})$, $Mf$ satisfies the super differential equation of Darboux:
\begin{eqnarray*}
\left[\nabla^2_{\bold{x}}-\frac{\partial^2}{\partial L^2}-\frac{M-1}{L}\frac{\partial}{\partial L}\right]Mf(\bold{x},L)&=&0.
\end{eqnarray*}
\label{Darboux}
\end{theorem}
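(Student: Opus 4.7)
My plan is to mimic the classical proof of Darboux's equation, in which one decomposes the ambient Laplacian in spherical coordinates on the $\bold{y}$-variable and then integrates away the spherical Laplace--Beltrami piece. In the super setting the right analogue of ``integrating the spherical Laplace--Beltrami gives zero'' is equation \eqref{SSLB}, $\int_{\mS^{m-1|2n}}\circ\Delta_{LB}=0$, combined with the identity \eqref{LB}, $\Delta_{LB}=R^2\nabla^2-\mE(M-2+\mE)$.

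The first step is to establish two chain-rule identities for the superfunction $g(\bold{y}):=f(\bold{x}+L\bold{y})\in\cO(\mR^{m|2n})_{\bold{y}}$. Working from the super-Taylor expansion of Definition \ref{fTaylor}, I expect to obtain $\partial_{Y_i}g=L(\partial_{X_i}f)(\bold{x}+L\bold{y})$ for every $i$, and hence
$$\nabla^2_{\bold{y}}\,g\;=\;L^2\,(\nabla^2 f)(\bold{x}+L\bold{y})\;=\;L^2\nabla^2_{\bold{x}}\,g,\qquad\mE_{\bold{y}}\,g\;=\;L\,\partial_L g.$$
The second identity is the infinitesimal expression of the fact that $f(\bold{x}+L\bold{y})$ depends on $L$ and $\bold{y}$ only through the product $L\bold{y}$. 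Since $L\partial_L$ commutes with $\mE_{\bold{y}}$, a short computation then gives $\mE_{\bold{y}}(M-2+\mE_{\bold{y}})g=L^2\partial_L^2 g+(M-1)L\partial_L g$.

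Next I will apply $\Delta_{LB,\bold{y}}$ to $g$ and use these identities to write
$$\Delta_{LB,\bold{y}}\,g\;=\;R^2_{\bold{y}}\,L^2\nabla^2_{\bold{x}}g\;-\;L^2\partial_L^2 g\;-\;(M-1)L\partial_L g,$$
and integrate both sides over the unit supersphere in $\bold{y}$. The left-hand side vanishes by \eqref{SSLB}. On the right-hand side, $R_{\bold{y}}^2$ may be dropped under $\int_{\mS^{m-1|2n},\bold{y}}$ thanks to the property $T[R^2h](1)=T[h](1)$ from Theorem \ref{Pizuniekosp} (i.e.\ $\int_{\mS^{m-1|2n}}\cdot$ kills multiples of $R^2-1$). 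Pulling $\nabla^2_{\bold{x}}$, $\partial_L^2$ and $\partial_L$ out of the $\bold{y}$-integral and dividing by $L^2$ then yields precisely
$$\bigl[\nabla^2_{\bold{x}}-\partial_L^2-\tfrac{M-1}{L}\partial_L\bigr]Mf(\bold{x},L)=0.$$

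The main obstacle I anticipate is the careful bookkeeping of the super chain rule, particularly for the fermionic derivatives in $\nabla^2$, to ensure the factor $L^2$ in $\nabla^2_{\bold{y}}g=L^2\nabla^2_{\bold{x}}g$ really comes out with the right signs from both the bosonic squares and the mixed odd products $\partial_{x'_{2j-1}}\partial_{x'_{2j}}$. Once these bookkeeping identities are secured, the remainder of the argument is purely formal: it is the standard Darboux proof, rephrased with the super Laplace--Beltrami operator playing the role of the spherical Laplacian and Theorem \ref{Pizuniekosp} playing the role of the classical invariance of the spherical measure.
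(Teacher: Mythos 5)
Your proposal is correct and follows essentially the same route as the paper: both rest on the chain-rule identities $\nabla^2_{\bold{y}}f(\bold{x}+L\bold{y})=L^2\nabla^2_{\bold{x}}f(\bold{x}+L\bold{y})$ and $\mE_{\bold{y}}f(\bold{x}+L\bold{y})=L\partial_Lf(\bold{x}+L\bold{y})$, the identity \eqref{LB} rewriting $R^2\nabla^2$ as $\Delta_{LB}+\mE(M-2+\mE)$, the vanishing \eqref{SSLB} of the supersphere integral of $\Delta_{LB}$, and the $R^2$-invariance of the integral from Theorem \ref{Pizuniekosp}. The only difference is presentational (you integrate $\Delta_{LB,\bold{y}}g$ and rearrange, whereas the paper starts from $\nabla^2_{\bold{x}}Mf$ and substitutes), which is a trivial reordering of the same computation.
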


\begin{proof}
The first condition in theorem \ref{Pizuniekosp} implies $\int_{\mS^{m-1|2n}}R^2\cdot=\int_{\mS^{m-1|2n}}\cdot$. We calculate $\nabla^2_{\bold{x}}Mf(\bold{x},L)$ using equations (\ref{LB}) and \eqref{SSLB},
\begin{eqnarray*}
\nabla^2_{\bold{x}}Mf(\bold{x},L)&=&\frac{1}{L^2}\int_{\mS^{m-1|2n},\bold{y}}\nabla^2_{\bold{y}}f(\bold{x}+L\bold{y})\\
&=&\frac{1}{L^2}\int_{\mS^{m-1|2n},\bold{y}}[\Delta_{LB,\bold{y}}+\mE_{\bold{y}}(M-2+\mE_{\bold{y}})]f(\bold{x}+L\bold{y})\\
&=&\frac{1}{L^2}\int_{\mS^{m-1|2n},\bold{y}}L\frac{\partial}{\partial L}(M-2+L\frac{\partial}{\partial L})f(\bold{x}+L\bold{y}),
%&=&[\frac{\partial^2}{\partial L^2}+\frac{M-1}{L}\frac{\partial}{\partial L}]\int_{\mS^{m-1|2n},\bold{y}}f(\bold{x}+L\bold{y}),
\end{eqnarray*}
proving the theorem.
\end{proof}

Using the Funk-Hecke theorem in superspace, see \cite{CDBS3}, we can prove that the Pizzetti formula \eqref{Pizzetti} for integration over the supersphere also holds for functions in the Schwartz space $\cS(\mR^{m})\otimes\Lambda_{2n}\subset\cO(\mR^{m|2n})$.

\begin{theorem}
If $\varphi\in\cS(\mR^{m})\otimes\Lambda_{2n}$ and $M>1$, then formally
\begin{eqnarray*}
M\varphi(\bold{x},L)&=& \sum_{k=0}^{\infty} \frac{2 \pi^{M/2}L^{2k}}{2^{2k} k!\Gamma(k+M/2)} \nabla^{2k} \varphi(\bold{x}).
\end{eqnarray*}
\end{theorem}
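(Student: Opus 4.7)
The plan is to pass to the frequency side via the super Fourier transform \eqref{Four} and reduce the theorem to evaluating the supersphere integral of a plane wave. For $\varphi\in\cS(\mR^m)\otimes\Lambda_{2n}$, super Fourier inversion yields $\varphi(\bold{z}) = (2\pi)^{-M/2}\int_{\mR^{m|2n},\bold{\xi}} \exp(-i\langle \bold{z},\bold{\xi}\rangle)\,\hat\varphi(\bold\xi)$ with $\hat\varphi = \cF^{+}_{m|2n}\varphi$. Setting $\bold z = \bold x + L\bold y$, integrating in $\bold y$ over $\mS^{m-1|2n}$, and interchanging the two integrations gives
\[
M\varphi(\bold x, L) \;=\; (2\pi)^{-M/2}\int_{\mR^{m|2n},\bold\xi} e^{-i\langle \bold x,\bold\xi\rangle}\,\hat\varphi(\bold\xi)\,I(L,\bold\xi),\qquad I(L,\bold\xi) \;:=\; \int_{\mS^{m-1|2n},\bold y} e^{-iL\langle\bold y,\bold\xi\rangle}.
\]

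Next I evaluate $I(L,\bold\xi)$. The integrand depends on $\bold y$ only through the inner product $\langle\bold y,\bold\xi\rangle$, so the super Funk--Hecke theorem of \cite{CDBS3} applied with the trivial spherical harmonic $1\in\cH_0$ shows that $I(L,\bold\xi)$ is a function of $R_\xi^{2}$ alone. To obtain the explicit series I formally expand $e^{-iL\langle\bold y,\bold\xi\rangle} = \sum_{k\ge 0}(-iL)^{k}\langle\bold y,\bold\xi\rangle^{k}/k!$ and apply the Pizzetti formula of Theorem \ref{SSintOxSp} term by term. Since $\langle\bold y,\bold\xi\rangle$ is linear in $\bold y$, the product rule gives $\nabla^{2}_{\bold y}\langle\bold y,\bold\xi\rangle^{k} = k(k-1)R_\xi^{2}\langle\bold y,\bold\xi\rangle^{k-2}$, which iterates to $\nabla^{2j}_{\bold y}\langle\bold y,\bold\xi\rangle^{k}\big|_{\bold y=0} = (2j)!\,R_\xi^{2j}\,\delta_{k,2j}$. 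Hence only the even powers $k=2j$ contribute, producing
\[
I(L,\bold\xi) \;=\; \sum_{j=0}^{\infty} \frac{2\pi^{M/2}\,(-L^{2})^{j}}{2^{2j}\,j!\,\Gamma(j+M/2)}\,R_\xi^{2j}.
\]

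Finally, I substitute this into the Fourier representation and use the relations \eqref{relFour}, which iterate to $R_\xi^{2j}\,\hat\varphi = (-1)^{j}(\nabla^{2j}\varphi)^{\wedge}$; Fourier inversion then gives $(2\pi)^{-M/2}\int e^{-i\langle\bold x,\bold\xi\rangle}R_\xi^{2j}\hat\varphi(\bold\xi) = (-1)^{j}\nabla^{2j}_{\bold x}\varphi(\bold x)$. The signs combine as $(-L^{2})^{j}(-1)^{j}=L^{2j}$ and yield precisely the stated coefficients. The principal obstacle is the legitimacy of the two formal manipulations, namely the exchange of $\int_{\mS^{m-1|2n}}$ with $\int_{\mR^{m|2n}}$ and the termwise application of Pizzetti to a non-polynomial integrand; this is exactly why the statement is qualified as \emph{formal}. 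The assumption $M>1$ ensures each value $\Gamma(k+M/2)$ lies away from the poles of $\Gamma$, so that the individual coefficients of the series are all finite and the right-hand side makes sense as a formal expansion in $L^{2}$ with values in $\cO(\mR^{m|2n})$.
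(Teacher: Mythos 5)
Your proposal is correct and follows essentially the same route as the paper: both pass to the super Fourier side, reduce the claim to the supersphere integral of the plane wave $\exp(\pm iL\langle \bold{y},\bold{\xi}\rangle)$, and then pull the resulting powers of $R_{\bold{\xi}}^2$ back to powers of $\nabla^2$ via \eqref{relFour}. The only difference is in how that inner integral is evaluated: the paper quotes the super Funk--Hecke theorem of \cite{CDBS3} to get the closed form $(2\pi)^{M/2}(\sqrt{R_{\bold{z}}^2}L)^{1-\frac{M}{2}}J_{\frac{M}{2}-1}(\sqrt{R_{\bold{z}}^2}L)$ and expands the Bessel function, whereas you expand the exponential and apply the Pizzetti formula termwise using $\nabla^{2j}_{\bold{y}}\langle\bold{y},\bold{\xi}\rangle^{2j}\big|_{\bold{y}=0}=(2j)!\,R_{\bold{\xi}}^{2j}$ --- the two series coincide.
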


\begin{proof}
Since the super Fourier transform \eqref{Four} is an isomorphism on $\cS(\mR^{m})\otimes\Lambda_{2n}$, see \cite{DBS9}, for every $\varphi\in\cS(\mR^{m})\otimes\Lambda_{2n}$ there is an $\phi\in\cS(\mR^{m})\otimes\Lambda_{2n}$ such that
\begin{eqnarray*}
\varphi(\bold{x})&=&\int_{\mR^{m|2n},\bold{z}}\exp(i\langle \bold{x},\bold{z}\rangle)\phi(\bold{z}).
\end{eqnarray*}
From this we obtain the basic spherical mean using theorem 8 in \cite{CDBS3},
\begin{eqnarray*}
M\varphi(\bold{x},L)&=&\int_{\mR^{m|2n},\bold{z}}\exp(i\langle \bold{x},\bold{z}\rangle)\phi(\bold{z})\int_{\mS^{m-1|2n},\bold{y}}\exp(iL\langle\bold{ y},\bold{z}\rangle)\\
&=&\int_{\mR^{m|2n},\bold{z}}\exp(i\langle \bold{x},\bold{z}\rangle)\phi(\bold{z})(2\pi)^{\frac{M}{2}}(\sqrt{R_{\bold{z}}^2}L)^{1-\frac{M}{2}}J_{\frac{M}{2}-1}(\sqrt{R_{\bold{z}}^2}L)\\
&=&2\pi^{M/2}\int_{\mR^{m|2n},\bold{z}}\exp(i\langle \bold{x},\bold{z}\rangle) \sum_{k=0}^{\infty} \frac{(-1)^kL^{2k}R^{2k}_{\bold{z}}}{2^{2k} k!\Gamma(k+M/2)}  \phi(\bold{z})
\end{eqnarray*}
where $J_{\alpha}$ represents the Bessel function of the first kind. This is formally equal to the proposed equation because of the properties \eqref{relFour} of the super Fourier transform.
\end{proof}

\section{Polynomials on the supersphere}

Similarly to Theorem \ref{invfunctions} and \ref{invmap}, we can prove that a corepresentation of $\cG$ is irreducible if and only if the corresponding representation \eqref{indHCrep} of $(\cG_0,\mathfrak{g})$ is irreducible. Since the theorem is not surprising and the proof uses very similar ideas and techniques as in the proof of Theorem \ref{invfunctions} we omit it.

\begin{theorem}
\label{equivirr}
Consider a Lie supergroup $\cG$ and a super vectorspace $\cV$. A corepresentation $\chi$ of $\cG$ (Definition \ref{defrep1}) on $\cV$ is irreducible if and only if the representation of the super Harish-Chandra pair of $\cG$ (Definition \ref{defrep2}) as constructed in equation \eqref{indHCrep} is irreducible.
\end{theorem}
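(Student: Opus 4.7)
The plan is to rephrase irreducibility of $\chi$ and of $\Pi=(\pi_0,\rho^\pi)$ in terms of stable subspaces, and then to transpose, into the setting of a vector space $\cV$ in place of the structure sheaf $\cO(\cM)$, the scheme used in the proof of Theorem~\ref{invfunctions}. Both irreducibility statements reduce to ``no proper graded subspace $\cW\subset\cV$ is stable under the relevant structure,'' so it is enough to show that a graded subspace $\cW$ is $\chi$-stable (in the sense $\chi(\cW)\subset\cO(\cG)\times\cW$) if and only if it is stable under every $\pi_0(g)$ and every $\rho^\pi(X)$.

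The ``only if'' direction is immediate from \eqref{indHCrep}: applying $(\delta_g^\sharp\times id_\cV)$ or $(\delta_{e_\cG}^\sharp\circ X\times id_\cV)$ to an element of $\cO(\cG)\times\cW$ keeps the result in $\cW$. For the converse, fix a graded complement $\cV=\cW\oplus\cW'$ with projection $P\colon\cV\to\cW'$ and set $\tilde\chi=(id_\cG^\sharp\times P)\circ\chi|_\cW\colon\cW\to\cO(\cG)\times\cW'$. Vanishing of $\tilde\chi$ is equivalent to the desired $\chi$-stability of $\cW$, so the goal becomes $\tilde\chi\equiv 0$. The two stability hypotheses translate, via \eqref{indHCrep}, into
\begin{eqnarray*}
(\delta_g^\sharp\times id_{\cW'})\circ\tilde\chi=0&\text{and}&(\delta_{e_\cG}^\sharp\circ X\times id_{\cW'})\circ\tilde\chi=0
\end{eqnarray*}
for all $g\in\cG_0$ and $X\in\mathfrak{g}$, which are the precise analogues of the identities \eqref{thmnew1} and \eqref{thmnew2}.

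From this point the argument follows Theorem~\ref{invfunctions} essentially verbatim. The coassociativity $(\tau\times id_\cV)\circ(id_\cG^\sharp\times\chi)\circ\chi=(\mu^\sharp\times id_\cV)\circ\chi$ plays the role of the action axiom \eqref{multactie}. Exactly as in Theorem~\ref{extUn}, $X\mapsto(\delta_{e_\cG}^\sharp\circ X\times id_{\cW'})\circ\tilde\chi$ extends multiplicatively to $\cU(\mathfrak{g})$, and the anti-isomorphism between $\mathfrak{g}$ and the Lie superalgebra $\mathfrak{g}'$ of left-invariant derivatives upgrades the vanishing to
\begin{eqnarray*}
(\delta_g^\sharp\circ Z\times id_{\cW'})\circ\tilde\chi&=&0\qquad\forall\,g\in\cG_0,\;\forall\,Z\in\cU(\mathfrak{g}')\setminus\mR.
\end{eqnarray*}
Writing $\tilde\chi(v)=\sum_\alpha h_\alpha\times w_\alpha$ for a basis $\{w_\alpha\}$ of $\cW'$ and $h_\alpha\in\cO(\cG)$, the technical lemma used at the end of the proof of Theorem~\ref{invfunctions}---which depends only on the global splitting \eqref{sheafsplit} and the normalisation $\beta^j_k(0)=\delta^j_k$ of the left-invariant basis---then forces $\delta_\cG^\sharp(h_\alpha)=h_\alpha$, so each $h_\alpha$ has no fermionic content. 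The first displayed relation additionally gives $\delta_g^\sharp(h_\alpha)=0$ for every $g\in\cG_0$, and together these force $h_\alpha=0$, hence $\tilde\chi\equiv 0$.

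The main obstacle is the sign bookkeeping introduced by the graded flip $\tau$ in the corepresentation axiom: one must verify that $\tau$ does not disrupt either the inductive extension of $\rho^\pi$ to $\cU(\mathfrak{g})$ as an algebra morphism or the passage to $\cU(\mathfrak{g}')$ as an anti-algebra morphism acting on the first tensor factor of $\tilde\chi(v)$. Once this sign-check is completed the remainder of the proof is structurally identical to that of Theorem~\ref{invfunctions}, which is presumably why the author omits the details.
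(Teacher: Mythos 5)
Your proposal is correct and follows exactly the route the paper intends: the paper omits the proof of Theorem~\ref{equivirr}, stating only that it uses the same ideas as Theorem~\ref{invfunctions}, and your reduction of both irreducibility notions to stability of graded subspaces followed by the transposition of the $\cU(\mathfrak{g}')$/global-splitting argument to the map $\tilde\chi=(id_\cG^\sharp\times P)\circ\chi|_{\cW}$ is precisely that adaptation. The only points you leave implicit (the sign bookkeeping from $\tau$, and the fact that the inner evaluation $(\delta_g^\sharp\times id_\cV)\circ\chi$ now returns $\pi_0(g)v\in\cW$ rather than $v$ itself, which is all the argument needs) are at the same level of detail the paper itself chooses to suppress.
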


The action $\Psi$ of a Lie supergroup on a supermanifold defines a corepresentation (which corresponds to $\psi^\sharp$) of the supergroup on the vector space of superfunctions $\cO(\cM)$. The definition of $\rho^\pi$ in equation \eqref{indHCrep} then coincides with the definition of $\gamma$ in Definition \ref{defgamma}. From the form of $\gamma$ for $\cG=OSp(m|2n)$ and $\cM=\mR^{m|2n}$ given in equation \eqref{ospgen} we find that this induces a representation of $\mathfrak{osp}(m|2n)$ on the space of polynomials $\cP\subset\cO(\mR^{m|2n})$. The same holds for the action of $GL(m|2n)$ on $\mR^{m|2n}$.

\subsection{Polynomials on super Euclidean space and on the supersphere}

The polynomials on $\mR^{m|2n}$ form a $\mathfrak{gl}(m|2n)$-module, the action of the basis $E_{ij}$ of $\mathfrak{gl}(m|2n)$ on $\mR^{m|2n}$ is given by
\begin{eqnarray*}
\gamma(E_{ij})&=&X_i\partial_{X_j}\qquad \mbox{for}\quad i,j=1\cdots,m+2n.
\end{eqnarray*}
The decomposition $\cP=\bigoplus_{k=0}^\infty\cP_k$ is the decomposition of the polynomials on $\mR^{m|2n}$ into irreducible pieces under the action of $\mathfrak{gl}(m|2n)$, where $\cP_k$ is the representation with highest weight $(k,0,\cdots,0)$. In what follows we prove the corresponding results for polynomials on the supersphere $\mS^{m-1|2n}$ and the Lie superalgebra $\mathfrak{osp}(m|2n)$. 

If $m-2n\not\in-2\mN$, the polynomials on the supersphere can be identified with the harmonic polynomials on $\mR^{m|2n}$,
\begin{eqnarray}
\label{decompH}
\pi_{\mS^{m-1|2n}}^\sharp\cP\cong\cH&=&\bigoplus_{k=0}^\infty \cH_k.
\end{eqnarray}
This immediately follows from the Fischer decomposition \eqref{superFischer} and the fact 
\begin{eqnarray*}
\pi^\sharp_{\mS^{m-1|2n}}(R^{2j}H_k)&=&H_k/R^k
\end{eqnarray*} 
for $H_k\in\cH_k$. The action of $\mathfrak{osp}(m|2n)$ on polynomials is given by the expressions in formula \eqref{ospgen}. Since the $L_{ij}$ commute with $\mE$ and $\nabla^2$, the blocks $\cH_k$ are $\mathfrak{osp}(m|2n)$-representations. In appendix A of \cite{MR2395482} the irreducibility of $\cH_k$ as an $\mathfrak{osp}(m|2n)$-representation was proven for $M=m-2n>1$. Some very specific examples were also proven in proposition $3.1$ in \cite{MR2441598}. In the next section we will generalize this result to all $M\not\in-2\mN$, which implies that the formula \eqref{decompH} is the decomposition of $\cH$ into irreducible blocks under the action of $\mathfrak{osp}(m|2n)$. This will also solve problem $\textbf{P1}$ and complete the interpretation of the Fischer decomposition \eqref{superFischer}. We will also consider the case $M\in-2\mN$ which leads to unexpected and interesting results.

The supersphere integration in theorem \ref{Pizuniekosp} or equation \eqref{Pizzetti} also generates a superhermitian bilinear $\langle\cdot|\cdot\rangle_{\mS^{m-1|2n}}$ form on each space $\cH_k$, (or on $\cH$ or $\pi_{\mS^{m-1|2n}}^\sharp\cP$), by defining
\begin{eqnarray*}
\langle f|g\rangle_{\mS^{m-1|2n}}&=&\int_{\mS^{m-1|2n}}f\overline{g}.
\end{eqnarray*}
This implies $\langle f|g\rangle_{\mS^{m-1|2n}}=(-1)^{|f||g|}\overline{\langle g|f\rangle}_{\mS^{m-1|2n}}$, for $f$ and $g$ homogeneous.

The representation of $\mathfrak{osp}(m|2n)$ on $\cH$ is an orthosymplectic representation (see definition in \cite{MR0922822}) with respect to this bilinear form. For $f$ and $g\in\cH_k$ with $f$ homogeneous,
\begin{eqnarray*}
\langle L_{ij}f|g\rangle_{\mS^{m-1|2n}}&=&\int_{\mS^{m-1|2n}}L_{ij}f\overline{g}-(-1)^{([i]+[j])|f|}\int_{\mS^{m-1|2n}}fL_{ij}\overline{g}\\
&=&-(-1)^{([i]+[j])|f|}\langle f|L_{ij}g\rangle_{\mS^{m-1|2n}}.
\end{eqnarray*}

\subsection{The $\cH_k$-representations}

First we need the following technical lemma.
\begin{lemma}
The functions introduced in Lemma \ref{polythm} satisfy the relation
\label{Lf}
\begin{eqnarray*}
L_{i,2j-1+m}f_{k,p,q}&=&2k(\frac{M}{2}+p+q+k-1)f_{k-1,p+1,q+1}x_i{x\grave{}}_{2j-1}
\end{eqnarray*}
for $i\le m$ and $j\le n$ and $L_{il}$ defined in equation \eqref{ospgen}.
\end{lemma}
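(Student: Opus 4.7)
The proof is by direct calculation. The first step is to make the operator $L_{i,m+2j-1}$ completely explicit. From equation \eqref{ospgen} one has $L_{i,m+2j-1}=x_i\partial_{X^{m+2j-1}}-{x\grave{}}_{2j-1}\partial_{x_i}$ (since $[i]=0$ while $[m+2j-1]=1$). Reading the metric off equation \eqref{inprod} gives $g^{m+2k,m+2k-1}=1/2$ and $g^{m+2k-1,m+2k}=-1/2$, so that $X^{m+2j-1}=\tfrac12 {x\grave{}}_{2j}$, and therefore $\partial_{X^{m+2j-1}}=2\partial_{{x\grave{}}_{2j}}$. A quick sanity check on $R^{2}$ confirms the signs. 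Thus
\[
L_{i,m+2j-1}\;=\;2x_i\,\partial_{{x\grave{}}_{2j}}\;-\;{x\grave{}}_{2j-1}\,\partial_{x_i}.
\]

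Next I would apply this operator term by term to $f_{k,p,q}=\sum_{s=0}^{k}a_{s}r^{2k-2s}\theta^{2s}$, using $\partial_{x_i}r^{2\ell}=2\ell x_{i}r^{2\ell-2}$ and (since $\theta^{2}$ is even) $\partial_{{x\grave{}}_{2j}}\theta^{2s}=s\,{x\grave{}}_{2j-1}\,\theta^{2s-2}$. Both contributions produce a common factor $x_i{x\grave{}}_{2j-1}$. After a shift $s\mapsto s+1$ in the fermionic term one obtains
\[
L_{i,m+2j-1}f_{k,p,q}\;=\;x_i{x\grave{}}_{2j-1}\sum_{s=0}^{k-1}\bigl[\,2(s+1)a_{s+1}-2(k-s)a_{s}\bigr]\,r^{2(k-1-s)}\theta^{2s}.
\]

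The remaining task is to identify the bracketed expression with $2k\bigl(\tfrac{M}{2}+p+q+k-1\bigr)b_{s}$, where $b_{s}$ denotes the explicit coefficient of $f_{k-1,p+1,q+1}$ from Lemma \ref{polythm}. A direct manipulation of the binomial coefficients and Gamma ratios in the definition of $a_{s}$ yields
\[
\frac{a_{s+1}}{a_{s}}=\frac{(k-s)\bigl(\tfrac{m}{2}+p+k-s-1\bigr)}{(s+1)(n-q-s)},\qquad b_{s}=\frac{k-s}{k(n-q-s)}\,a_{s}.
\]
Substituting the first of these into $2(s+1)a_{s+1}-2(k-s)a_{s}$ and factoring yields $2(k-s)a_{s}\cdot\frac{\tfrac{m}{2}+p+k-s-1-(n-q-s)}{n-q-s}$, and since $M=m-2n$ the numerator simplifies to $\tfrac{M}{2}+p+q+k-1$. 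Comparing with the expression $2k\bigl(\tfrac{M}{2}+p+q+k-1\bigr)b_{s}$ (after using the formula for $b_{s}$) gives equality.

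The argument is entirely mechanical; the only obstacle is bookkeeping, namely getting the prefactor $2$ in $L_{i,m+2j-1}$ right (equivalently, interpreting $g^{ij}$ correctly from equation \eqref{inprod}) and handling the cancellation $\tfrac{m}{2}+p+k-s-1-(n-q-s)=\tfrac{M}{2}+p+q+k-1$ cleanly. Once those two identities are in hand, the lemma follows by reading off coefficients.
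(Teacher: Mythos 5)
Your proof is correct and follows essentially the same route as the paper: both make $L_{i,m+2j-1}=2x_i\partial_{{x\grave{}}_{2j}}-{x\grave{}}_{2j-1}\partial_{x_i}$ explicit and apply it termwise to $f_{k,p,q}=\sum_s a_s r^{2k-2s}\theta^{2s}$, reading off coefficients. The only difference is that you spell out the final coefficient identification (via $a_{s+1}/a_s$ and $b_s=\frac{k-s}{k(n-q-s)}a_s$, with the cancellation $\frac{m}{2}+p+k-s-1-(n-q-s)=\frac{M}{2}+p+q+k-1$) which the paper simply asserts; your bookkeeping checks out.
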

\begin{proof}
This expression follows from a direct calculation. Since\\$L_{i,2j-1+m}=\left(2x_i\partial_{{x\grave{}}_{2j}}-{x\grave{}}_{2j-1}\partial_{x_i}\right)$, the relation
\begin{eqnarray*}
L_{i,2j-1+m}f_{k,p,q}
%&=&\left[2x_i\partial_{{x\grave{}}_{2j}}-{x\grave{}}_{2j-1}\partial_{x_i}\right]\sum_{s=0}^k\binom{k}{s}\frac{(n-q-s)!}{\Gamma (\frac{m}{2}+p+k-s)}\frac{\Gamma(\frac{m}{2}+p+k)}{(n-q-k)!}r^{2k-2s}\theta^{2s}\\
&=&2x_i{x\grave{}}_{2j-1}\sum_{s=1}^k\frac{k!}{(k-s)!(s-1)!}\frac{(n-q-s)!}{\Gamma (\frac{m}{2}+p+k-s)}\frac{\Gamma(\frac{m}{2}+p+k)}{(n-q-k)!}r^{2k-2s}\theta^{2s-2}\\
&-&{x\grave{}}_{2j-1}2x_i\sum_{s=0}^{k-1}\frac{k!}{(k-s-1)!s!}\frac{(n-q-s)!}{\Gamma (\frac{m}{2}+p+k-s)}\frac{\Gamma(\frac{m}{2}+p+k)}{(n-q-k)!}r^{2k-2s-2}\theta^{2s}\\
%&=&2kx_i{x\grave{}}_{2j-1}\sum_{s=0}^{k-1}\binom{k-1}{s}\left[\frac{(n-q-s-1)!}{\Gamma (\frac{m}{2}+p+k-s-1)}-\frac{(n-q-s)!}{\Gamma (\frac{m}{2}+p+k-s)}\right]\frac{\Gamma(\frac{m}{2}+p+k)}{(n-q-k)!}r^{2k-2s}\theta^{2s-2}\\
%&=&2kx_i{x\grave{}}_{2j-1}\sum_{s=0}^{k-1}\binom{k-1}{s}\frac{(n-q-s-1)!}{\Gamma (\frac{m}{2}+p+k-s)}\left[\frac{M}{2}+p+q+k-1\right]\frac{\Gamma(\frac{m}{2}+p+k)}{(n-q-k)!}r^{2k-2s}\theta^{2s-2}\\
&=&2k(\frac{M}{2}+p+q+k-1)x_i{x\grave{}}_{2j-1}f_{k-1,p+1,q+1}
\end{eqnarray*}
holds, which proves the lemma.
\end{proof}

The decomposition of $\cH_k$ in Theorem \ref{decompintoirreps} can be represented by the following diagram, where we use the notation $(l,k-2l-j,j)\leftrightarrow f_{l,k-2l-j,j}\cH_{k-2l-j}^b\otimes\cH_j^f$.
\[
\xymatrix@=10pt@!C{
(0,k,0)&(0,k-1,1)&(0,k-2,2)&(0,k-3,3)&\cdots\\
&(1,k-2,0)&(1,k-3,1)\ar[u]_{\textbf{(1)}}\ar[d]_{\textbf{(4)}}\ar[r]_{\textbf{(3)}}\ar[l]_{\textbf{(2)}}&(1,k-4,2)&\cdots\\
&&(2,k-4,0)&(2,k-5,1)&\cdots\\
}
\]
The arrows represent the actions of the elements of $\cU(\mathfrak{osp}(m|2n))$ constructed in the following lemma, where we will use the notation $(l,p,q)\leftrightarrow f_{l,p,q}\cH_{p}^b\cH_q^f$ again. It is important for the sequel to note that the decomposition above is finite, which is clear from the fact that $\cH_k$ is finite dimensional.

\begin{lemma}
\label{arrows}
Consider the spherical harmonics on $\mR^{m|2n}$ with $m\not=0\not= n$. For $j=1,\cdots,4$, there are spherical harmonics $e_j\in (l,p,q)\not= \emptyset$ (so $l+q\le n$ and $p\le 1$ when $m=1$) and elements $u_j\in \cU(\mathfrak{osp}(m|2n))$ such that $u_je_j$ is non-zero and
\[
u_1 e_1\in(l-1,p+1,q+1)\qquad \mbox{if} \qquad p+q+l\not=1-\frac{M}{2},
\]
\[
u_2 e_2\in (l,p+1,q-1), \quad u_3 e_3\in (l,p-1,q+1) \quad \mbox{and}\quad  u_4 e_4\in (l+1,p-1,q-1)\]
if the resulting subspace of spherical harmonics is different from the empty set.
\end{lemma}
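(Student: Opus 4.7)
The plan is to construct each operator $u_j$ as a composition $\mQ_{r_j,s_j}^k\circ L_{i,m+2s-1}$ of an odd generator $L_{i,m+2s-1}=2x_i\partial_{{x\grave{}}_{2s}}-{x\grave{}}_{2s-1}\partial_{x_i}$ of $\osp$ with one of the projection operators from \eqref{projpiecesSphHarm}. Since $\mQ_{r,s}^k$ is a polynomial in the Laplace--Beltrami operators $\Delta_{LB,b}$ and $\Delta_{LB,f}$ (which are quadratic Casimir-type elements of $\cU(\osp)$ through \eqref{LBosp}), each candidate $u_j$ lies in $\cU(\osp)$ as required.

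The computational core is the Leibniz expansion
\begin{eqnarray*}
L_{i,m+2s-1}(f_{l,p,q}H_p^bH_q^f)
&=& 2l\bigl(\tfrac{M}{2}+p+q+l-1\bigr)\, f_{l-1,p+1,q+1}\,(x_iH_p^b)({x\grave{}}_{2s-1}H_q^f) \\
&& {}+\,2 f_{l,p,q}\,(x_iH_p^b)(\partial_{{x\grave{}}_{2s}}H_q^f) - f_{l,p,q}\,(\partial_{x_i}H_p^b)({x\grave{}}_{2s-1}H_q^f),
\end{eqnarray*}
where the first term uses Lemma~\ref{Lf}. I would then expand the factors $x_iH_p^b$ and ${x\grave{}}_{2s-1}H_q^f$ via the classical bosonic and fermionic Fischer decompositions into their purely harmonic and ``$r^2$-'' (respectively ``$\theta^2$-'') times harmonic pieces, and sort the resulting terms according to the block decomposition of Theorem~\ref{decompintoirreps}. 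For arrow~1 into $(l-1,p+1,q+1)$, only the first summand contributes (via the purely upward Fischer parts), with scalar factor proportional to $2l(\tfrac{M}{2}+p+q+l-1)$; this vanishes exactly when $l=0$ (the target block is then empty) or when $p+q+l=1-M/2$, matching the stated exception. Arrows~2 and~3 are produced by the second and third summands respectively, projected onto $(l,p+1,q-1)$ and $(l,p-1,q+1)$; arrow~4 is produced by the combined ``downward'' parts of all three summands, projected onto $(l+1,p-1,q-1)$.

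The main obstacle is controlling accidental cancellation among the three summands for arrows~2--4, where more than one of them can in principle feed the same target block through the auxiliary Fischer projections. I would handle this by selecting $e_j$ of highest-weight type, namely $H_q^f=({x\grave{}}_1)^q$ and, when $m\ge 2$, $H_p^b=(x_1+ix_2)^p$ in the complexification, and by choosing the indices $i,s$ so that all but one of the derivatives $\partial_{x_i}H_p^b$, $\partial_{{x\grave{}}_{2s-1}}H_q^f$, $\partial_{{x\grave{}}_{2s}}H_q^f$ vanishes identically. The remaining single contribution can be compared with the explicit coefficients of $f_{l,p,q}$ and $f_{l-1,p+1,q+1}$ from Lemma~\ref{polythm}, and is readily verified to be non-zero whenever the target block is non-empty. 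The degenerate case $m=1$, where no such complex highest-weight trick is available because $\cH_p^b=0$ for $p\ge 2$, is exactly the one singled out by the extra hypothesis $p\le 1$ in the statement.
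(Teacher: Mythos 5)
Your overall skeleton coincides with the paper's: act with the odd generator $L_{1,1+m}$ on a product $f_{l,p,q}H_p^bH_q^f$, use Lemma~\ref{Lf} for the radial factor, and extract the four arrows with the projections $\mQ_{r,s}^k$, which indeed lie in $\cU(\osp)$ because they are polynomials in the Casimirs $\Delta_{LB,b}$, $\Delta_{LB,f}$. Your Leibniz expansion is correct and is exactly the paper's starting point. But your non-vanishing argument has two genuine problems. First, the choice $H_q^f=({x\grave{}}_1)^q$ is identically zero for $q\ge 2$, since ${x\grave{}}_1$ is a Grassmann generator; a highest weight vector of $\cH_q^f$ is a product of $q$ \emph{distinct} odd variables such as ${x\grave{}}_1{x\grave{}}_3\cdots{x\grave{}}_{2q-1}$. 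The paper instead takes $H_q^f$ inside the Grassmann algebra generated by $\{{x\grave{}}_j : j>2\}$, precisely so that ${x\grave{}}_1H_q^f\ne0$ and $\partial_{{x\grave{}}_2}H_q^f=0$ simultaneously.

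Second, and more seriously, the strategy of making ``all but one of the derivatives vanish so that a single summand survives'' cannot isolate the contributions to arrows 2--4. The individual summands in your expansion are not harmonic: for arrow 3, say, the third summand is $f_{l,p,q}(\partial_{x_1}H_p^b)({x\grave{}}_1H_q^f)$, whose radial factor is $f_{l,p,q}$ rather than $f_{l,p-1,q+1}$, so it lies in no single block of Theorem~\ref{decompintoirreps}; only its sum with the $r^2$-part of the first summand is harmonic and can be compared with $f_{l,p-1,q+1}$. Since the first summand, with coefficient $2l(\frac{M}{2}+p+q+l-1)$, involves none of the derivatives you propose to kill and is present whenever $l\ge 1$, the cross-contamination is unavoidable and must be excluded by an explicit identity. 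That is the real content of the paper's proof: the combined coefficient is $\lambda_1 f_{l,p-1,q+1}=\frac{l}{p+m/2-1}(\frac{M}{2}+p+q+l-1)\,r^2f_{l-1,p+1,q+1}-f_{l,p,q}$, and $\lambda_1\ne 0$ because $\lambda_1=0$ would force $f_{l,p,q}\equiv 0$ mod $r^2$, contradicting the explicit form in Lemma~\ref{polythm} (the coefficient of $\theta^{2l}$ is nonzero); the analogous arguments ``$f_{l,p,q}\not\equiv 0$ mod $\theta^2$'' and ``$r^2f_{l,p,q}\not\equiv 0$ mod $\theta^2$'' give $\lambda_2,\lambda_3\ne 0$ for arrows 2 and 4. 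Your proposal gestures at comparing coefficients but, as written, relies on an isolation of terms that does not occur; without the $\lambda_j\ne0$ step the non-vanishing of $u_je_j$ for $j=2,3,4$ is not established.
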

\begin{proof}
The Laplace-Beltrami operators $\Delta_{LB,b}$ and $\Delta_{LB,f}$ are Casimir operators of respectively $\mathfrak{o}(m)$ and $\mathfrak{sp}(2n)$, see formula \eqref{LBosp} for the limit cases $n=0$ and $m=0$. In particular they are elements of $\cU(\osp)$. This means the projection operators in equation \eqref{projpiecesSphHarm} are elements of $\cU(\mathfrak{osp}(m|2n))$. 

First, consider for $q<n$, a spherical harmonic $H_q^f\in\cH_q^f$ which does not contain ${x\grave{}}_1$ or ${x\grave{}}_2$. This exists since it can be taken as a spherical harmonic of degree $q$ in the Grassmann algebra $\Lambda_{2n-2}$ generated by the $\{{x\grave{}}_j,j>2\}$ (and $q\le n-1$). This also implies ${x\grave{}}_1H_q^f\in\cH_{q+1}^f$ and different from zero. Next, consider a spherical harmonic $H_p^b\in \cH_p^b$ (with $p\le 1$ when $m=1)$. Define $H_{p+1}^b\in \cH_{p+1}^b$ by
\begin{eqnarray*}
H_{p+1}^b&=&x_1H_p^b-\frac{r^2}{2p+m-2}\partial_{x_1}H_p^b.
\end{eqnarray*}
Lemma \ref{Lf} then implies
\begin{eqnarray*}
L_{1,1+m}f_{l,p,q}H_p^bH_q^f&=&2l(\frac{M}{2}+p+q+l-1)\,f_{l-1,p+1,q+1}\, H_{p+1}^b\, {x\grave{}}_{1}H_q^f\\
&+&\frac{l}{p+\frac{m}{2}-1}(\frac{M}{2}+p+q+l-1)\,r^2\,f_{l-1,p+1,q+1}\,\partial_{x_1}H_{p}^b\, {x\grave{}}_{1}H_q^f\\
&-&f_{l,p,q}\,\partial_{x_1}H_{p}^b\, {x\grave{}}_{1}H_q^f.
\end{eqnarray*}
The sum of the last two lines is of the form $g(r^2,\theta^2)\partial_{x_1}H_{p}^b{x\grave{}}_{1}H_q^f$ and is an element of $\cH_{2l+p+q}$ since the left-hand and right-hand side of the first line are. Lemma \ref{polythm} then yields $g(r^2,\theta^2)=\lambda_1 f_{l,p-1,q+1}$ for some $\lambda_{1}\in\mR$, with $\lambda_1\not=0$ since otherwise $f_{l,p,q}\equiv 0$ mod $r^2$. The identity $\frac{l}{p+\frac{m}{2}-1}(\frac{M}{2}+p+q+l-1)r^2f_{l-1,p+1,q+1}-f_{l,p,q}=\lambda_1 f_{l,p-1,q+1}$ also follows from a direct calculation, showing that $\lambda_1=-(1+\frac{l}{p+\frac{m}{2}-1})$. Summarizing, we obtain
\begin{eqnarray*}
L_{1,1+m}f_{l,p,q}H_p^bH_q^f&=&2l(\frac{M}{2}+p+q+l-1)f_{l-1,p+1,q+1}H_{p+1}^b{x\grave{}}_{1}H_q^f\\
&+&\lambda_1 f_{l,p-1,q+1}(\partial_{x_1}H_{p}^b)({x\grave{}}_{1}H_q^f)
\end{eqnarray*}
for some $\lambda_1\not=0$. Using this we can prove the arrows \textbf{(1)} and \textbf{(3)}. Both arrows start from spaces with $q<n$ since otherwise $\cH_{q+1}^f=\emptyset$.

\textbf{(1)}: Take $H_p^b$ and $H_q^f$ as defined above and assume $x_1H_p^b\not=\frac{r^2}{2p}\partial_{x_1}H_p^b$. This is always possible, the only non-trivial case is if $m=1$, but then $p$ has to be zero since otherwise $\cH_{p+1}^b=\emptyset$. The previous calculations then imply
\begin{eqnarray*}
\mQ^{2l+p+q}_{l-1,q+1}\left(L_{1,1+m}f_{l,p,q}H_p^bH_q^f\right)&=&2l(\frac{M}{2}+p+q+l-1)f_{l-1,p+1,q+1}H_{p+1}^b{x\grave{}}_{1}H_q^f\\
&\in& f_{l-1,p+1,q+1}\cH_{p+1}^b\otimes\cH_{q+1}^f
\end{eqnarray*}
which is different from zero since $l>0$ (otherwise $f_{l-1,p+1,q+1}\cH_{p+1}^b\otimes\cH_{q+1}^f=\emptyset$) and since we assumed $p+q+l\not=1-\frac{M}{2}$ for this arrow.

\textbf{(3)}: Take again $H_p^b$ and $H_q^f$ as defined above but now assume $\partial_{x_1}H_p^b\not=0$ (which is always possible since $p>0$ for this arrow). We then obtain
\begin{eqnarray*}
\mQ^{2l+p+q}_{l,q+1}\left(L_{1,1+m}f_{l,p,q}H_p^bH_q^f\right)&=&\lambda_1 f_{l,p-1,q+1}(\partial_{x_1}H_{p}^b)({x\grave{}}_{1}H_q^f)\\
&\in& f_{l,p-1,q+1}\cH_{p-1}^b\otimes\cH_{q+1}^f,
\end{eqnarray*}
which is different from zero.

The arrows \textbf{(2)} and \textbf{(4)} can be proven similarly. Consider for $1\le q\le n$ a spherical harmonic $\cH_{q-1}^f\in \cH_{q-1}^f$ which does not contain ${x\grave{}}_{1}$ or  ${x\grave{}}_{2}$, define $H_{q+1}^f\in\cH_{q+1}^f$ (not necessarily different from zero) by
\begin{eqnarray*}
H_{q+1}^f&=&{x\grave{}}_{1}{x\grave{}}_{2}H_{q-1}^f-\frac{1}{q-n-1}\theta^2H_{q-1}^f.
\end{eqnarray*}
Then we calculate
\begin{eqnarray*}
L_{1,1+m}f_{l,p,q}H_p^b{x\grave{}}_2H_{q-1}^f&=&2l(\frac{M}{2}+p+q+l-1)f_{l-1,p+1,q+1}x_1H_p^b{x\grave{}}_{1}{x\grave{}}_{2}H_{q-1}^f\\
&+&2f_{l,p,q}x_1H_p^bH_{q-1}^f-f_{l,p,q}\partial_{x_1}H_p^b{x\grave{}}_{1}{x\grave{}}_{2}H_{q-1}^f.
\end{eqnarray*}

This leads to
\begin{eqnarray*}
\mQ^{2l+p+q}_{l,q-1}\left(L_{1,1+m}f_{l,p,q}H_p^b{x\grave{}}_2H_{q-1}^f\right)=\lambda_2 f_{l,p+1,q-1}H_{p+1}^bH_{q-1}^f\in f_{l,p+1,q-1}\cH_{p+1}^b\cH_{q-1}^f.
\end{eqnarray*}
with $\lambda_2 f_{l,p+1,q-1}=2f_{l,p,q}-\frac{2l}{q-n-1}(\frac{M}{2}+p+q+l-1)\theta^2f_{l-1,p+1,q+1}$ with $\lambda_2\not=0$ since otherwise $f_{l,p,q}\equiv 0$ mod $\theta^2$.

The calculation above also implies
\begin{eqnarray*}
& &\mQ^{2l+p+q}_{l+1,q-1}\left(L_{1,1+m}f_{l,p,q}H_p^b{x\grave{}}_2H_{q-1}^f\right)\\
&=&\lambda_3 f_{l+1,p-1,q-1}(\partial_{x_1}H_{p}^b)H_{q-1}^f\in f_{l+1,p-1,q-1}\cH_{p-1}^b\cH_{q-1}^f,
\end{eqnarray*}
with
\begin{eqnarray*}
\lambda_3 f_{l+1,p-1,q-1}&=&\frac{l(\frac{M}{2}+p+q+l-1)}{(p+\frac{m}{2}-1)(q-n-1)}f_{l-1,p+1,q+1}\theta^2r^2+\frac{r^2}{p+\frac{m}{2}-1}f_{l,p,q}-\frac{\theta^2}{q-n-1}f_{l,p,q}.
\end{eqnarray*}

Again, $\lambda_3\not=0$, since otherwise $r^2f_{l,p,q}\equiv 0$ mod $\theta^2$. This implies arrows \textbf{(2)} and \textbf{(4)} by identifying $e_2=\mQ^{2l+p+q}_{l,q-1}\circ L_{1,1+m}$ and $e_4=\mQ^{2l+p+q}_{l+1,q-1}\circ L_{1,1+m}$.
\end{proof}

\begin{remark}
\label{irrEnv}
A representation $(\mathfrak{g},V)$ of a finite dimensional Lie superalgebra $\mathfrak{g}$ on a finite dimensional vector space is irreducible if and only if for each two vectors $(u,v)\in V$ there is an element $X$ in the universal enveloping algebra $\cU(\mathfrak{g})$ such that $Xu=v$.
\end{remark}

\begin{theorem}
\label{irrH}
When $M=m-2n\not\in-2\mN$, the space $\cH_k$ is an irreducible $\mathfrak{osp}(m|2n)$-module. When $M\in-2\mN$, $\cH_k$ is irreducible if and only if 
\begin{eqnarray*}
k>2-M &\mbox{or}& k< 2-\frac{M}{2}.
\end{eqnarray*}
\end{theorem}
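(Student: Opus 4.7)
The plan is to combine the $(\mathfrak{o}(m)\oplus\mathfrak{sp}(2n))$-decomposition of $\cH_k$ from Theorem \ref{decompintoirreps} with Lemma \ref{arrows} to determine when the $\osp$-action connects all the blocks. Writing $(l,p,q):=f_{l,p,q}\cH_{p}^{b}\otimes\cH_{q}^{f}$, each such block is irreducible under the even subalgebra (by the classical harmonic theory for $\cH_p^b$ and its symplectic analogue for $\cH_q^f$), so by Remark \ref{irrEnv} any proper $\osp$-submodule must be a direct sum of complete blocks. It therefore suffices to study the reachability graph on the indices $(l,p,q)$ with $2l+p+q=k$. The arrows \textbf{(2)}, \textbf{(3)}, \textbf{(4)} succeed whenever the target block is nonempty, so only arrow \textbf{(1)}, which decreases $l$ by one, can fail; its failure condition $l+p+q=1-M/2$ simplifies, via $l+p+q=k-l$, to the single equation $l=k-1+M/2$.

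For $M\not\in -2\mN$, I would check that this failure level never lies in the allowed range of $l$. If $M/2$ is non-integer the expression $k-1+M/2$ is not an integer and the failure is vacuous; if $M/2$ is a positive integer then $k-1+M/2\ge k$, while in the decomposition $l\le\lfloor k/2\rfloor$. Either way arrow \textbf{(1)} is available whenever needed, every pair of blocks becomes linked by a sequence of arrows, and $\cH_k$ is an irreducible $\osp$-module.

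For $M=-2t\in -2\mN$ with $t\ge 0$, arrow \textbf{(1)} fails precisely at $l_0:=k-1-t$. If $l_0>\lfloor k/2\rfloor$ (equivalently $k>2-M$) no valid block sits at $l=l_0$, and if $l_0\le 0$ (equivalently $k<2-M/2$) the putative target $l_0-1$ lies outside the decomposition anyway; in both cases the failure is vacuous and the connectivity argument yields irreducibility. In the remaining range $2-M/2\le k\le 2-M$, I would exhibit the candidate submodule
\[
\cW \;=\; \bigoplus_{l\ge l_0}\; f_{l,p,q}\,\cH^{b}_{p}\otimes\cH^{f}_{q},
\]
which is nonzero (the block $(l_0,k-2l_0,0)$ is present) and proper (blocks with $l<l_0$ exist). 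Closure of $\cW$ under $\mathfrak{o}(m)\oplus\mathfrak{sp}(2n)$ is immediate, while closure under the odd part requires, for every $Y\in\osp_1$ and every element of $(l_0,p,q)$, that the component of $Y\cdot(l_0,p,q)$ in $(l_0-1,p+1,q+1)$ vanish.

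This last verification is the main obstacle, because Lemma \ref{arrows} handles only the single operator $L_{1,m+1}$. I would resolve it by an equivariance argument: the action $\osp_1\otimes(l,p,q)\to\cH_k$ is $(\mathfrak{o}(m)\oplus\mathfrak{sp}(2n))$-equivariant, and the Pieri decompositions of the natural representations tensored with $\cH^{b}_{p}$ and $\cH^{f}_{q}$ show that the isotype $\cH^{b}_{p+1}\otimes\cH^{f}_{q+1}$ occurs with multiplicity one in $\osp_1\otimes\cH^{b}_{p}\otimes\cH^{f}_{q}$. By Schur's lemma the projection onto $(l-1,p+1,q+1)$ is therefore a scalar multiple of a universal equivariant map, and Lemma \ref{arrows} identifies this scalar (up to a nonzero factor) as $2l(M/2+p+q+l-1)=-2l(k-l-(1-M/2))$, which vanishes identically at $l=l_0$. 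Hence the arrow \textbf{(1)} failure propagates across all of $\osp_1$, $\cW$ is a genuine $\osp$-submodule, and the equivalence of irreducibility with $k>2-M$ or $k<2-M/2$ follows.
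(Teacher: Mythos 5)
Your irreducibility half is essentially the paper's argument: decompose $\cH_k$ into the blocks $(l,p,q)=f_{l,p,q}\cH_p^b\otimes\cH_q^f$ (which are pairwise non-isomorphic $\mathfrak{o}(m)\oplus\mathfrak{sp}(2n)$-modules, so every $\osp$-submodule is a sum of complete blocks), observe that only arrow \textbf{(1)} carries the obstruction $p+q+l=1-M/2$, i.e.\ $l=k-1+M/2$, check this level is never occupied when $M\notin-2\mN$ or $k\notin[2-\tfrac M2,2-M]$, and conclude by connectivity. That matches the paper, including the range analysis; just note that for $m=1$ the blocks with $p\ge 2$ are empty, so the connectivity graph degenerates to the chain the paper draws separately (your ``arrows succeed whenever the target is nonempty'' formulation does still cover it).

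Where you genuinely diverge is the reducibility half, and there your route is both heavier and not yet closed. The paper gets reducibility in one line: the $\mathfrak{sl}_2$ relations give $\nabla^2(R^{2j}H)=2j(M+2j+2\deg H-2)R^{2j-2}H$, so $R^{2k+M-2}\cH_{2-M-k}\subset\cH_k$ exactly in the range $2-\tfrac M2\le k\le 2-M$; since $R^2$ commutes with all $L_{ij}$ this is automatically a proper $\osp$-submodule (it coincides with your $\cW=\bigoplus_{l\ge l_0}$, cf.\ Corollary \ref{notdecomp}). Your alternative — proving directly that $\cW$ is closed under $\osp_1$ via a multiplicity-one/Schur argument — can be made to work, but the step ``Lemma \ref{arrows} identifies this scalar up to a nonzero factor'' hides the real issue. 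Writing the projection $\Phi:\osp_1\otimes(l_0,p,q)\to(l_0-1,p+1,q+1)$ as $\lambda u$ for a universal equivariant $u$, the computation in Lemma \ref{arrows} gives $\Phi(\xi)=c_{l_0}\cdot w$ with $c_{l_0}=0$ on the test element $\xi=L_{1,1+m}\otimes f_{l_0,p,q}H_p^bH_q^f$; to conclude $\lambda=0$ rather than merely $\lambda u(\xi)=0$ you must also verify $u(\xi)\ne0$, i.e.\ that $\xi$ has nonzero component in the (multiplicity-one) isotype $\cH_{p+1}^b\otimes\cH_{q+1}^f$ of $\osp_1\otimes\cH_p^b\otimes\cH_q^f$. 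This does hold — the harmonic projection of $x_1H_p^b$ and the element ${x\grave{}}_1H_q^f$ are the nonzero vectors Lemma \ref{arrows} constructs — and the Pieri multiplicity-one claim is correct for the relevant $m\ge2$ even, but both points need to be spelled out; as written the scalar identification is circular precisely at the value of $l$ where you need it. Given that the $R^2$-submodule argument is available and immediate, I would replace this half of your proof by it.
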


\begin{proof}
For $M\in-2\mN$, the $\mathfrak{sl}_2$ relations for $\nabla^2$, $R^2$ and $\mE+\frac{M}{2}$, see \eqref{sl2rel} (or Lemma 3 in \cite{DBE1}), imply that for $2-\frac{M}{2}\le k\le 2-M$
\begin{eqnarray}
\label{submodule}
R^{2k+M-2}\cH_{2-M-k}\subset \cH_k.
\end{eqnarray}
This is a submodule (since $\dim\cH_{2-M-k}< \dim \cH_{k}$) which proves the reducibility for $2-\frac{M}{2}\le k\le 2-M $. 

For the other cases, $M\not\in-2\mN$ or $M\in-2\mN$ and $k$ not in the interval $[2-\frac{M}{2},2-M]$, we consider $f_{l,p,q}\cH_p^b\otimes\cH_q^f\subset\cH_k$, so $k=2l+p+q$. If $M\not\in-2\mN$, $p+q+l+\frac{M}{2}-1$ is never zero. If $M\in-2\mN$, $p+q+l=1-\frac{M}{2}$ implies  (since $l>0$ and $p+q\ge0$)
\begin{eqnarray*}
k=2l+p+q\ge 2-\frac{M}{2}&\mbox{and}&k=2l+p+q\le 2-M.
\end{eqnarray*}
This implies that the arrows in Lemma \ref{arrows} will always exist for $M\not\in-2\mN$ and for $M\in-2\mN$ with $k\not\in[2-\frac{M}{2},2-M]$. 

We prove the theorem using Remark \ref{irrEnv}. First we assume $m\not=1$ and take one fixed $H_k^b\in\cH_k^b\subset\cH_k$. It suffices to prove that for every element of the form\begin{eqnarray*}
u&=&f_{l,k-2l-j}H_{k-2l-j}^bH_j^f\in f_{l,k-2l-j}\cH_{k-2l-j}^b\otimes\cH_j^f\\
\end{eqnarray*}
there are elements $X,Y\in \cU(\mathfrak{osp}(m|2n))$ for which $XH_k^b=u$ and $Yu=H_k^b$ . Since $(\mathfrak{o}(m),\cH_k^b)$ is irreducible there is an element $X_1\in \cU(\mathfrak{o}(m))$ such that $X_1H_k^b$ is an element in $\cH_k^b$ for which there is an element $X_2$ of $\cU(\mathfrak{osp}(m|2n))$ determined in Lemma \ref{arrows} (arrow \textbf{(3)}) such that
\begin{eqnarray*}
X_2X_1H_k^b&=&H_{k-1}^bH_1^f\in\cH_{k-1}^b\otimes\cH_1^f.
\end{eqnarray*}
This procedure can be repeated until we find an $X'\in \cU(\mathfrak{osp}(m|2n))$ and an $H_{k-j}^bH_j^f\in\cH_{k-j}^b\otimes\cH_j^f$ such that
\begin{eqnarray*}
X'H_k^b&=&H_{k-j}^bH_j^f.
\end{eqnarray*}
By the same arguments and by using arrow \textbf{(4)} in Lemma \ref{arrows} we find an $X\in \cU(\mathfrak{osp}(m|2n))$ such that $XH_k^b=u$. The proof for $Y$ is similar, by using arrows \textbf{(1)} and \textbf{(2)}.

Now consider the case $m=1$, when $k\le n$. Taking into account that $H_p^b=\emptyset$ when $p>1$, the decomposition diagram above Lemma \ref{arrows} looks like
\[
\xymatrix@=7pt@!C{
&&(0,1,k-1)\ar[d]_{\textbf{(4)}}&(0,k,0)\ar[l]_{\textbf{(2)}}\\
&\cdots&(1,0,k-2)\\
&(\lfloor\frac{k}{2}\rfloor-1,1,\nu(k)+1)\ar[d]_{\textbf{(4)}}&(\lfloor\frac{k}{2}\rfloor-1,0,\nu(k)+2)\ar[l]_{\textbf{(2)}}&\\
\left[(\frac{k-1}{2},1,0)\right]&(\lfloor\frac{k}{2}\rfloor,0,\nu(k))\ar[l]_{\textbf{(2)}}\\
}
\]

where $\nu(k)$ is equal to $0$ if $k$ is even and equal to $1$ if $k$ is odd. The last part is between square brackets since it only exists if $k$ is odd. From this diagram it is immediately clear that $\cH_k$ will be irreducible. The case $n<k\le 2n+1$ can be treated completely similarly.
\end{proof}

\begin{remark}
The results Lemma \ref{arrows} and the technique in the proof of Theorem \ref{irrH} show that $\cH_k$ (for all values of $k$ and $m|2n$) always corresponds to a Verma module or the quotient of a Verma module. This implies that even in case $\cH_k$ is not irreducible, it will not be decomposable. This will be explained in more detail in Corollary \ref{notdecomp}.
\end{remark}

The results in Theorem \ref{irrH} and Theorem \ref{equivirr} yield the following corollary about the irreducibility of corepresentation (Definition \ref{defrep1}) and representation (Definition \ref{defrep2}) of $SOSp(m|2n)$ (or $OSp(m|2n)$) on the spaces of spherical harmonics.
\begin{corollary}
\label{OSprep}
The space $\cH_k$ of spherical harmonics of degree $k$ on $\mR^{m|2n}$ is an irreducible $SOSp(m|2n)$-(co)representation if $M\not\in-2\mN$.
\end{corollary}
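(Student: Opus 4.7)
The plan is to derive this corollary by directly combining Theorem \ref{irrH} with Theorem \ref{equivirr}. Theorem \ref{equivirr} reduces the claim about the corepresentation of the Lie supergroup $SOSp(m|2n)$ to the corresponding claim about the representation of the super Harish-Chandra pair $(SO(m)\times Sp(2n),\mathfrak{osp}(m|2n))$ in the sense of Definition \ref{defrep2}. So the task reduces to checking irreducibility of this pair-representation on $\cH_k$.

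First I would identify the relevant representation. The action of $OSp(m|2n)$ on $\mR^{m|2n}$ of Theorem \ref{isomR} yields, via $\psi^\sharp$, a corepresentation on $\cO(\mR^{m|2n})$, which restricts to a corepresentation on the (finite-dimensional, hence closed under $\psi^\sharp$) subspace $\cH_k$ thanks to the $OSp(m|2n)$-invariance of $\nabla^2$ and $\mE$ noted after Corollary \ref{Ospfun}. Passing from this corepresentation to the associated representation of $(SO(m)\times Sp(2n),\mathfrak{osp}(m|2n))$ via \eqref{indHCrep}, the Lie superalgebra part $\rho^\pi$ coincides with the $\gamma$-action, which by the Remark following Theorem \ref{isomR} is precisely the action of $\mathfrak{osp}(m|2n)$ through the differential operators $L_{ij}$ of \eqref{ospgen}.

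Next I would invoke Theorem \ref{irrH}: under the hypothesis $M=m-2n\notin -2\mN$, the space $\cH_k$ has no nontrivial $\mathfrak{osp}(m|2n)$-invariant subspace. Any subspace $\cW\subset\cH_k$ invariant under the pair $(SO(m)\times Sp(2n),\mathfrak{osp}(m|2n))$ is in particular invariant under $\mathfrak{osp}(m|2n)$, so $\cW$ must be either $0$ or $\cH_k$. Hence the pair-representation is irreducible.

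Finally, Theorem \ref{equivirr} translates this back to irreducibility of the original $SOSp(m|2n)$-corepresentation on $\cH_k$, and the same reasoning applies verbatim to $OSp(m|2n)$ since its Harish-Chandra pair merely enlarges the group part to $O(m)\times Sp(2n)$ without changing the superalgebra. No step presents a real obstacle: the content of the corollary is entirely carried by Theorems \ref{irrH} and \ref{equivirr}, and the only bookkeeping to do is the identification of $\rho^\pi$ with the differential-operator action \eqref{ospgen}, which was already recorded in the Remark after Theorem \ref{isomR}.
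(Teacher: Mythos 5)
Your proposal is correct and follows exactly the route the paper takes: the paper derives Corollary \ref{OSprep} by combining Theorem \ref{irrH} (irreducibility of $\cH_k$ as an $\mathfrak{osp}(m|2n)$-module for $M\not\in-2\mN$) with Theorem \ref{equivirr} (equivalence of irreducibility for the supergroup corepresentation and the induced Harish-Chandra pair representation). Your additional bookkeeping identifying $\rho^\pi$ with the $L_{ij}$-action of \eqref{ospgen} is consistent with the remark after Theorem \ref{isomR} and fills in the only implicit step.
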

This solves problem \textbf{P1}.

We introduce the notation $L_\lambda^{m|2n}$ for the unique irreducible $\osp$-module with highest weight $\lambda$, see e.g. \cite{MR051963}.  The irreducible $\mathfrak{sl}_2$-module with lowest weight $j$ is denoted by $L^{(j)}$.

\subsection{The case $m=1$}
For the simple root system of $\mathfrak{osp}(1|2n)$ we choose the standard simple root system of $\mathfrak{sp}(2n)$. The only non-zero spaces of spherical harmonics are $\cH_k$ for $k\le 2n+1$, since $\cH_p^b=\emptyset$ for $p>1$. If $k\le n$, the highest weight vector of $\cH_k$ for $\mathfrak{osp}(1|2n)$ is the highest weight vector of $\cH_k^f$ for $\mathfrak{sp}(2n)$. For $k> n$ (and $k\le 2n+1$), the highest weight vector is the highest weight vector of 
\begin{eqnarray*}
f_{k-n-1,1,2n-k+1}\,x_1\,\cH_{2n-k+1}^f
\end{eqnarray*}
for $\mathfrak{sp}(2n)$. This can be seen from the fact that $(k-n-1,2n-k+1)$ is the element of all pairs $(j,l)$ subject to $j+l\le n$ and $2j+l=k$ or $2j+l+1=k$ for which $l$ is maximal. This implies that the highest weight of $\cH_k$ is $(1,\cdots,1,0,\cdots,0)$ where the integer $1$  is repeated $k$ times for $k\le n$ and $2n-k+1$ times for $k>n$. Hence, we obtain
\begin{eqnarray*}
\cH_k&\cong&L^{1|2n}_{(1,\cdots,1,0,\cdots,0)}
\end{eqnarray*}
with $L^{1|2n}_{(1,\cdots,1,0,\cdots,0)}$ the irreducible $\mathfrak{osp}(1|2n)$ representation with highest weight $(1,\cdots,1,0,\cdots,0)$, where the integer $1$ has to be repeated as described above. For the polynomials on the supersphere this implies
\begin{eqnarray*}
\pi^\sharp_{\mS^{0|2n}}\cP\cong\bigoplus_{k=0}^{2n+1}\cH_k\cong\bigoplus_{k=0}^n L^{1|2n}_{(\underline{1}_k,\underline{0}_{n-k})}\oplus\bigoplus_{k=n+1}^{2n+1}L^{1|2n}_{(\underline{1}_{2n-k+1},\underline{0}_{k-1-n})}.
\end{eqnarray*}

The previous results also determine the representation $S(V)_k$ of supersymmetric tensors of degree $k$ for the natural module $V$ for $\mathfrak{osp}(1|2n)$. This representation can be identified with $\cP_k$. For convenience we consider the cases $k$ even and odd separately. The Fischer decomposition \eqref{superFischer} and the fact $\cH_k=\emptyset$ for $k>2n+1$ imply
\begin{eqnarray*}
S(V)_{2p}\cong \cP_{2p}&=&\bigoplus_{j=0}^{\min(n,p)}R^{2p-2j}\cH_{2j}\\
&\cong&\bigoplus_{j=0}^{\min(\lfloor \frac{n}{2}\rfloor,p)}L^{1|2n}_{(\underline{1}_{2j},\underline{0}_{n-2j})}\oplus\bigoplus_{j=\lfloor \frac{n}{2}\rfloor+1}^{\min( n,p)}L^{1|2n}_{(\underline{1}_{2n-2j+1},\underline{0}_{2j-1-n})}
\end{eqnarray*}
and
\begin{eqnarray*}
S(V)_{2p+1}\cong \cP_{2p+1}&=&\bigoplus_{j=0}^{\min(n,p)}R^{2p-2j}\cH_{2j+1}\\
&\cong&\bigoplus_{j=0}^{\min(\lfloor \frac{n-1}{2}\rfloor,p)}L^{1|2n}_{(\underline{1}_{2j+1},\underline{0}_{n-2j-1})}\oplus\bigoplus_{j=\lfloor \frac{n+1}{2}\rfloor}^{\min( n,p)}L^{1|2n}_{(\underline{1}_{2n-2j},\underline{0}_{2j-n})}.
\end{eqnarray*}

Finally, decomposition \eqref{superFischer} also leads to the following conclusion. Under the joint action of $\mathfrak{sl}_2\times\mathfrak{osp}(1|2n)$, the space $\cP=S(V)$ is isomorphic to the multiplicity free irreducible direct sum decomposition
\begin{eqnarray*}
\cP&\cong &\bigoplus_{k=0}^{n} L^{(k+1/2-n)}\otimes L^{1|2n}_{(\underline{1}_k,\underline{0}_{n-k})}\oplus \bigoplus_{k=n+1}^{2n+1} L^{(k+1/2-n)}\otimes L^{1|2n}_{(\underline{1}_{2n-k+1},\underline{0}_{k-1-n})}.
\end{eqnarray*}

\subsection{The case $m-2n\not\in-2\mN$ with $m>1$}

The simple root system of $\mathfrak{osp}(m|2n)$ as chosen in \cite{MR2395482} is used. This differs from the standard simple root system (see e.g. \cite{MR1773773}) except for $\mathfrak{osp}(2|2n)$, but is more logical for the type of representations we will study. The positive odd roots are $\epsilon_j+\delta_i$, $\delta_i$ (in case $m$ is odd) and $\epsilon_j-\delta_i$ (instead of $\delta_i-\epsilon_j$ as in \cite{MR1773773, MR051963}). The connection with the standard root system will be made in remark \ref{stanroot}.

The highest weight vector of $\cH_k$ for $\mathfrak{osp}(m|2n)$ is therefore the highest weight vector of $\cH_k^b$ for $\mathfrak{o}(m)$, which has weight $(k,0,\cdots,0)$, where $0$ is repeated $\lfloor m/2\rfloor -1$ times. This leads to the highest weight $(k,0,\cdots,0)$ for $\cH_k$ as an $\mathfrak{osp}(m|2n)$-representation, where $0$ is repeated $\lfloor m/2\rfloor +n-1$ times. Theorem \ref{irrH} therefore implies
\begin{eqnarray*}
\cH_k&\cong&L^{m|2n}_{(k,0,\cdots,0)}.
\end{eqnarray*}
A similar reasoning as in the previous section yields
\begin{eqnarray*}
\pi^\sharp_{\mS^{m-1|2n}}\cP\cong\bigoplus_{k=0}^{\infty}\cH_k\cong\bigoplus_{k=0}^\infty L^{m|2n}_{(k,0,\cdots,0)}
\end{eqnarray*}
and 
\begin{eqnarray*}
S(V)_{k}\cong \cP_k&=&\bigoplus_{j=0}^{\lfloor\frac{k}{2}\rfloor}R^{2j}\cH_{k-2j}\cong\bigoplus_{j=0}^{\lfloor k/2\rfloor}L^{m|2n}_{k-2j,0,\cdots,0}.
\end{eqnarray*}
Finally, under the joint action of $\mathfrak{sl}_2\times\mathfrak{osp}(m|2n)$, the space $\cP=S(V)$ is isomorphic to the multiplicity free irreducible direct sum decomposition
\begin{eqnarray*}
\cP&\cong &\bigoplus_{k=0}^\infty L^{(k+M/2)}\otimes L^{m|2n}_{(k,0,\cdots,0)}.
\end{eqnarray*}

\subsection{The case $m-2n\in-2\mN$}

Contrary to finite dimensional Lie algebras, finite dimensional Lie superalgebras do not possess the complete reducibility property. The representations $\cH_k$ for $M\in-2\mN$ and $2-\frac{M}{2}\le k\le 2-M$ are example of representations of $\mathfrak{osp}(m|2n)$ which is not completely reducible.
\begin{corollary}
\label{notdecomp}
If $M=m-2n\in-2\mN$ and $2-\frac{M}{2}\le k\le 2-M$, the space $\cH_k$ is not a completely reducible $\mathfrak{osp}(m|2n)$-module.
\end{corollary}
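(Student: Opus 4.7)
The idea is to exhibit a proper nonzero $\osp$-submodule of $\cH_k$ that admits no complement. The embedding \eqref{submodule}, together with the dimension formula \eqref{dimHk}, provides a proper nonzero submodule $U:=R^{2k+M-2}\cH_{2-M-k}\subsetneq \cH_k$. Since $R^2$ is $\osp$-invariant, $U$ is isomorphic as an $\osp$-module to $\cH_{2-M-k}$; moreover $2-M-k\ge 0$ lies outside the non-irreducibility interval of Theorem \ref{irrH}, so $\cH_{2-M-k}\cong L^{m|2n}_{(2-M-k,0,\ldots,0)}$ and $U$ has highest weight $(2-M-k,0,\ldots,0)$.

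Suppose for contradiction that $\cH_k=U\oplus U'$ for some $\osp$-submodule $U'$. With the simple root system fixed earlier, the weight analysis carried out for $m-2n\not\in-2\mN$ is purely combinatorial and applies unchanged: the highest weight of $\cH_k$ is $(k,0,\ldots,0)$ and its weight space is one-dimensional, arising from the $\mathfrak{o}(m)$-highest weight space of $\cH_k^b$ inside the block $(0,k,0)$ of Theorem \ref{decompintoirreps}, since every other block $(l,p,q)$ with $l>0$ or $q>0$ has top $\epsilon_1$-component $p<k$. Since $k\ge 2-M/2>1-M/2$ we have $k>2-M-k$, so the highest weight of $U$ is strictly lower than that of $\cH_k$; consequently the entire $(k,0,\ldots,0)$-weight space sits in $U'$. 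Let $v\in U'$ be a nonzero highest weight vector of $\cH_k$.

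It suffices to show $\cU(\osp)\cdot v=\cH_k$, for then $U'\supseteq\cH_k$ contradicts $U\ne0$. Since $\cH_k^b=(0,k,0)$ is irreducible as an $\mathfrak{o}(m)$-module, $\cU(\mathfrak{o}(m))\cdot v$ is the full block $(0,k,0)$. Arrows \textbf{(3)} and \textbf{(4)} of Lemma \ref{arrows} then transport a suitable vector in a block $(l,p,q)$ to a nonzero vector in $(l,p-1,q+1)$, respectively $(l+1,p-1,q-1)$; because each block $f_{l,p,q}\cH_p^b\otimes\cH_q^f$ is irreducible under $\mathfrak{o}(m)\oplus\mathfrak{sp}(2n)$, applying $\cU(\mathfrak{o}(m)\oplus\mathfrak{sp}(2n))$ after each transition spans the entire new block. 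Starting from $(0,k,0)$, performing $j+l$ consecutive arrows \textbf{(3)} to reach $(0,k-j-l,j+l)$ and then $l$ consecutive arrows \textbf{(4)} to land in $(l,k-2l-j,j)$ exhausts every non-empty summand of Theorem \ref{decompintoirreps}; all intermediate blocks are non-empty whenever the final target is, so no obstruction appears along the path.

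The central obstacle — and precisely why the argument succeeds where irreducibility fails — is that arrows \textbf{(3)} and \textbf{(4)} carry no arithmetic side condition, whereas the problematic arrow \textbf{(1)} requires $p+q+l\ne 1-M/2$. The degenerate locus where arrow \textbf{(1)} breaks is exactly the locus responsible for the existence of the proper submodule $U$, but because our generation path from the highest weight vector uses only \textbf{(3)} and \textbf{(4)}, this obstruction never enters, allowing $\cU(\osp)\cdot v$ to exhaust $\cH_k$ and producing the required contradiction.
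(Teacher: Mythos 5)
Your proof is correct, and it reaches the contradiction by a different mechanism than the paper, even though both arguments start from the same proper submodule $U=R^{2k+M-2}\cH_{2-M-k}$ of \eqref{submodule} and both ultimately rest on Lemma \ref{arrows}. The paper first pins down the hypothetical complement exactly: because $\mathfrak{o}(m)\oplus\mathfrak{sp}(2n)$ acts completely reducibly and the blocks of Theorem \ref{decompintoirreps} are pairwise non-isomorphic, the complement must be the sum of the blocks with $l\le k+\frac{M}{2}-2$ (equation \eqref{Hkaccent}); a single application of arrow \textbf{(4)} to the block $(k+\frac{M}{2}-2,\,3-M-k,\,1)$ then lands inside $U$, so this complement is not $\osp$-stable. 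You avoid identifying the complement altogether: a weight count (the $k\epsilon_1$-weight space of $\cP_k$ is one-dimensional, purely bosonic, and cannot occur in $U\cong\cH_{2-M-k}$ since $k>2-M-k$) forces the highest weight vector into $U'$, and you then generate all of $\cH_k$ from the block $(0,k,0)$ using only the unconditional arrows \textbf{(3)} and \textbf{(4)} — which is precisely the observation the paper isolates immediately afterwards in the proof of Corollary \ref{quotirr}, namely that any submodule containing $\cH_k^b$ is all of $\cH_k$. Your route trades the paper's even-part complete-reducibility step for a weight argument and correctly identifies why the construction is insensitive to the degenerate condition $p+q+l=1-\frac{M}{2}$ of arrow \textbf{(1)}; the paper's version is shorter once \eqref{Hkaccent} is in place and has the advantage of exhibiting the complement explicitly. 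The only points worth tightening are small: you should note explicitly that $U\neq 0$ (e.g.\ because it contains the nonzero block $f_{k+\frac{M}{2}-1,2-M-k,0}\cH^b_{2-M-k}$, as in Corollary \ref{quotirr}), and your check that all intermediate blocks along the path $(0,k,0)\to(0,k-j-l,j+l)\to(l,k-2l-j,j)$ are non-empty is correct since $m$ is even and $\ge 2$ here, so the $m=1$ restriction in Lemma \ref{arrows} never intervenes.
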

\begin{proof}
Since the inequality $2-M-k\le -\frac{M}{2}<2-\frac{M}{2}$ holds, Theorem \ref{irrH} implies that $R^{2k+M-2}\cH_{2-M-k}$ is an irreducible $\mathfrak{osp}(m|2n)$-module. The space $R^{2k+M-2}\cH_{2-M-k}$ corresponds to all the subspaces of $\cH_k$ of the form 
\begin{eqnarray*}
f_{l,k-2l-j,j}\cH_{k-2l-j}^b\otimes\cH_j^f
\end{eqnarray*}
with $l\ge k+\frac{M}{2}-1$. This follows from equation \eqref{submodule} and the unicity of $f_{l,p,q}$ in Lemma \ref{polythm} which implies 
\begin{eqnarray*}
R^{2k+M-2}f_{i,2-k-M-2i-j,j}\sim f_{k+\frac{M}{2}-1+i,2-k-M-2i-j,j}.
\end{eqnarray*}
If $\cH_k$ were completely reducible, then
\begin{eqnarray*}
\cH_k&=&R^{2k+M-2}\cH_{2-M-k}\oplus \cH_k'
\end{eqnarray*}
should hold for $\cH_k'$ some $\mathfrak{osp}(m|2n)$-representation. The complete reducibility property of finite dimensional Lie algebras implies that $\cH_k'$ is the direct sum of irreducible $\mathfrak{o}(m)\oplus\mathfrak{sp}(2n)$-representations. Theorem \ref{decompintoirreps} therefore yields
\begin{eqnarray}
\label{Hkaccent}
\cH_k'&=&\bigoplus_{j=0}^{\min(n, k)} \bigoplus_{l=0}^{\min(n-j,\lfloor \frac{k-j}{2} \rfloor, k+\frac{M}{2}-2)} f_{l,k-2l-j,j} \cH^b_{k-2l-j} \otimes \cH^f_{j}.
\end{eqnarray}
In particular, for $H_{3-M-k}^b\in\cH^b_{3-M-k}$ and $H_1^f\in\cH_1^f$, the spherical harmonic $f_{k+\frac{M}{2}-2,3-M-k,1}H_{3-M-k}^bH_1^f$ is an element of $\cH_k'$. Lemma \ref{arrows} implies that there is such a spherical harmonic and an $X\in \cU(\mathfrak{osp}(m|2n))$ (corresponding to Arrow \textbf{(4)}), such that
\begin{eqnarray*}
Xf_{k+\frac{M}{2}-2,3-M-k,1}H_{3-M-k}^bH_1^f&\in&f_{k+\frac{M}{2}-1,2-M-k,0}\cH_{2-M-k}^b\subset R^{2k+M-2}\cH_{2-M-k}.
\end{eqnarray*}
Therefore $\cH_k'$ is not an $\mathfrak{osp}(m|2n)$ representation.
\end{proof}

In these cases we find that $\cH_k$ is not equal to the irreducible highest weight module $L_{(k,0,\cdots,0)}^{m|2n}$. This representation can however still be realized by polynomials as will be shown in the subsequent Theorem \ref{Lkrep}. The following corollary will be useful for the interpretation of the representations $L_{(k,0,\cdots,0)}^{m|2n}$.
\begin{corollary}
\label{quotirr}
If $M=m-2n\in-2\mN$ and $2-\frac{M}{2}\le k\le 2-M$, the space $R^{2k+M-2}\cH_{2-M-k}$ is the maximal $\osp$-submodule of $\cH_k$. This implies that in that case,
\begin{eqnarray}
\label{voordimLk}
R^2\cP_{k-2}\cap\cH_k&=&R^{2k+M-2}\cH_{2-M-k}
\end{eqnarray}
holds and the representation $\cH_k/(R^2\cP_{k-2}\cap\cH_k)$ is irreducible.
\end{corollary}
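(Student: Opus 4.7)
The plan is to reduce all three assertions to a single maximality statement: $R^{2k+M-2}\cH_{2-M-k}$ is the unique maximal proper $\osp$-submodule of $\cH_k$. Once this is shown, irreducibility of the quotient $\cH_k/R^{2k+M-2}\cH_{2-M-k}$ is immediate. The equality $R^2\cP_{k-2}\cap\cH_k=R^{2k+M-2}\cH_{2-M-k}$ then follows because the left-hand side is an $\osp$-submodule of $\cH_k$ (since $R^2$ is $\osp$-invariant by Corollary \ref{Ospfun}, and $\cP_{k-2}$ is $\osp$-stable), contains the right-hand side (as $2k+M-2\ge 2$ in our range), and is proper since the highest-weight vector of $\cH_k^b$---a purely bosonic polynomial of the form $(x_1+ix_2)^k$---is not a multiple of $r^2$ and hence not of $R^2$.

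For the maximality, let $V\subsetneq\cH_k$ be a proper $\osp$-submodule with $V\not\subseteq R^{2k+M-2}\cH_{2-M-k}$. By Theorem \ref{decompintoirreps} and the computation from the proof of Corollary \ref{notdecomp}, $R^{2k+M-2}\cH_{2-M-k}$ consists precisely of the summands $f_{l,k-2l-j,j}\cH^b_{k-2l-j}\otimes\cH^f_j$ with $l\ge k+\tfrac{M}{2}-1$. The projection operators $\mQ^k_{r,s}$ of \eqref{projpiecesSphHarm}, being polynomials in $\Delta_{LB,b}$ and $\Delta_{LB,f}$, lie in $\cU(\osp)$ and therefore stabilise $V$. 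Applied to a witness element of $V$ outside the submodule, they yield a nonzero vector of $V$ in some summand $(l_0,k-2l_0-j_0,j_0)$ with $l_0\le k+\tfrac{M}{2}-2$; and since each summand is irreducible as an $\mathfrak{o}(m)\oplus\mathfrak{sp}(2n)$-module, $V$ actually contains this entire summand.

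Next I migrate this summand into $\cH_k^b$ via iterated applications of the operators from Lemma \ref{arrows}. First, $j_0$ applications of arrow \textbf{(2)} bring us to the summand $(l_0,k-2l_0,0)$; then $l_0$ cycles of arrow \textbf{(1)} followed by arrow \textbf{(2)} decrease $l$ step by step down to zero, ending at $(0,k,0)$. The essential verification is that arrow \textbf{(1)}'s exclusion condition $p+q+l=1-M/2$ translates to $l=k+M/2-1$, which lies inside the submodule and is never visited inside the complement. At each step the source summand is fully contained in $V$ and the arrow operator is nonzero on that summand, so its image is a nonzero $\mathfrak{o}(m)\oplus\mathfrak{sp}(2n)$-submodule of the target, hence (by irreducibility) the full target summand. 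Thus $V\supseteq\cH_k^b$. A symmetric migration starting from $(0,k,0)$ and using arrows \textbf{(3)} and \textbf{(4)}---neither of which has any exclusion condition beyond non-emptiness of the target---reaches every summand of $\cH_k$, including those in the submodule, so $V=\cH_k$, contradicting properness.

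The main obstacle is precisely to check that arrow \textbf{(1)}'s exclusion condition $p+q+l=1-M/2$ coincides with the boundary $l=k+M/2-1$ separating the submodule and its complement, so that migrations within the complement are never blocked, while arrow \textbf{(4)} freely carries vectors across this boundary in the opposite direction. The remaining bookkeeping---verifying that the intermediate summands visited by the migration satisfy $l+q\le n$ and $p\ge 0$---is immediate from the starting constraints $l_0+j_0\le n$ and $k=2l_0+p_0+j_0$.
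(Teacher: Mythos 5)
Your proof is correct and follows essentially the same route as the paper's: any $\osp$-submodule meeting the complement of $R^{2k+M-2}\cH_{2-M-k}$ absorbs a full $\mathfrak{o}(m)\oplus\mathfrak{sp}(2n)$-summand via the projections $\mQ^k_{r,s}$, is migrated to $\cH_k^b$ with arrows \textbf{(1)} and \textbf{(2)} after checking that the exclusion $p+q+l=1-\frac{M}{2}$ (i.e.\ $l=k+\frac{M}{2}-1$) never occurs in the complement, and then fills all of $\cH_k$ by arrows \textbf{(3)} and \textbf{(4)} exactly as in the proof of Theorem \ref{irrH}. You additionally make explicit the deduction of \eqref{voordimLk} (that $R^2\cP_{k-2}\cap\cH_k$ is a proper submodule containing $R^{2k+M-2}\cH_{2-M-k}$, properness coming from a bosonic harmonic not divisible by $r^2$), which the paper leaves implicit.
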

\begin{proof}
As mentioned in equation \eqref{submodule}, in the given case $R^{2k+M-2}\cH_{2-M-k}$ is a submodule. If there would exist a larger submodule it would include one of the $\mathfrak{o}(m)\oplus \mathfrak{sp}(2n)$-modules in decomposition \eqref{Hkaccent}. For the spaces in that decomposition arrow \textbf{(1)} in Lemma \ref{arrows} always exists. Such a submodule would therefore include $\cH_k^b$ (by using arrows \textbf{(1)} and \textbf{(2)} consecutively). From the proof of theorem \ref{irrH} it is clear that a submodule of $\cH_k$ containing $\cH_k^b$ is always equal to $\cH_k$ for any values of $(m,n,k)$. 
\end{proof}

In case $m-2n\in-2\mN$ there is no Fischer decomposition as in Lemma \ref{superFischerLemma}. This also implies that the spherical harmonics $\cH_k$ do not necessarily correspond to the space $\pi^\sharp_{\mS^{m-1|2n}}(\cP_k)$. The spaces which do correspond to the polynomials on the supersphere are given by $\cP_k/(R^2\cP_{k-2})$. These spaces also form $\osp$-modules, since $\osp$ commutes with $R^2$.

These spaces $\cP_k/(R^2\cP_{k-2})$ differ from $\cH_k$ when $2-\frac{M}{2}\le k\le 2-M$ since then $R^{2k+M-2}\cH_{2-M-k}\subset \cH_k$, see equation \eqref{submodule}, and therefore $R^2\cP_{k-2}\cap \cH_k\not=\emptyset$. When $k>2-M $ or $k< 2-\frac{M}{2}$ the spaces do satisfy $\cP_k/(R^2\cP_{k-2})\cong \cH_k$ as $\osp$-modules since then $\cH_k\cap R^2\cP_{k-2}=\emptyset$ (Lemma 5.6 in \cite{Mehler} or a direct consequence of the irreducibility of $\cH_k$) and $\dim\cH_k=\dim\left(\cP_{k}/(R^2\cP_{k-2})\right)$ see formula \eqref{dimHk}. 

For $2-\frac{M}{2}\le k\le 2-M$ also the spaces $\cP_k/(R^2\cP_{k-2})$ are not irreducible since $\cH_k/(R^2\cP_{k-2}\cap\cH_k)$ is a submodule, with
\begin{eqnarray*}
\dim \cH_k/(R^2\cP_{k-2}\cap\cH_k) &<& \dim \cH_k=\dim \cP_k/(R^2\cP_{k-2}).
\end{eqnarray*}

Similarly as in the approach to $\cH_k$ in Theorem \ref{decompintoirreps} we can decompose $\cP_k/(R^2\cP_{k-2})$ into irreducible representations of $\mathfrak{o}(m)\oplus\mathfrak{sp}(2n)$ as 
\begin{eqnarray*}
\cP_k/(R^2\cP_{k-2})&=&\bigoplus_{j,l}\left(\theta^{2j}\,\cH_{k-2j-l}^b\otimes\cH^f_l+R^2\cP_{k-2}\right),
\end{eqnarray*}
which is a result of the purely bosonic and fermionic Fischer decompositions in Lemma \ref{superFischerLemma}. Again we can construct four arrows from which it will follow that the subspace $\cH_k/(R^2\cP_{k-2}\cap\cH_k)$ does not admit a compliment which is an $\osp$-representation. These results are summarized in the following lemma.
\begin{lemma}
\label{repPP}
The $\osp$-representation $\cP_k/(R^2\cP_{k-2})$ is always indecomposable. It is isomorphic to $\cH_k$ when $\cH_k$ is irreducible. When $\cH_k$ is not irreducible, also $\cP_k/(R^2\cP_{k-2})$ is not.
\end{lemma}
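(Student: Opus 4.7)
The argument naturally splits according to whether $\cH_k$ is irreducible, and in both regimes the key tool is the canonical $\osp$-module map
\[
\iota:\;\cH_k\hookrightarrow\cP_k\twoheadrightarrow \widetilde{\cP}_k:=\cP_k/(R^2\cP_{k-2}),
\]
whose kernel is $\cH_k\cap R^2\cP_{k-2}$. If $\cH_k$ is irreducible this kernel is a proper submodule of an irreducible, hence zero (for instance $x_1^k\in\cH_k$ is not in $R^2\cP_{k-2}$), so $\iota$ is injective; the dimension equality $\dim\cH_k=\dim\widetilde{\cP}_k$ already recorded in the paragraph preceding the lemma then forces $\iota$ to be an isomorphism, and $\widetilde{\cP}_k$ inherits irreducibility (and trivially indecomposability) from $\cH_k$.

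Now assume we are in the reducible regime $M\in -2\mN$ and $2-M/2\le k\le 2-M$. Corollary \ref{quotirr} identifies $\ker\iota=R^{2k+M-2}\cH_{2-M-k}$ and shows the image $W:=\operatorname{Im}\iota\cong\cH_k/(R^2\cP_{k-2}\cap\cH_k)$ is an irreducible $\osp$-submodule of $\widetilde{\cP}_k$. To prove non-irreducibility of $\widetilde{\cP}_k$ I would show $W\subsetneq \widetilde{\cP}_k$ by matching $\mathfrak{o}(m)\oplus\mathfrak{sp}(2n)$-decompositions. Using the decomposition of $\widetilde{\cP}_k$ displayed in the excerpt and the $(l,p,q)$-indexing of $\cH_k$ from Theorem \ref{decompintoirreps}, the image $W$ consists exactly of those irreducible $\mathfrak{o}(m)\oplus\mathfrak{sp}(2n)$-summands with $l<k+M/2-1$, whereas $\widetilde{\cP}_k$ also contains summands with $l\ge k+M/2-1$ (such as the trivial summand spanned by $\theta^{2(k+M/2-1)}$ suitably interpreted modulo $R^2\cP_{k-2}$, which is non-zero precisely because of the $r^2\equiv -\theta^2$ identification in the quotient). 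This gives at least one $\mathfrak{o}(m)\oplus\mathfrak{sp}(2n)$-isotypic component of $\widetilde{\cP}_k$ lying outside $W$, so $W$ is a proper submodule.

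For indecomposability I would run the argument of Corollary \ref{notdecomp} in reverse. Suppose $\widetilde{\cP}_k=V_1\oplus V_2$ as $\osp$-modules; by semisimplicity of $\mathfrak{o}(m)\oplus\mathfrak{sp}(2n)$ each $\mathfrak{o}(m)\oplus\mathfrak{sp}(2n)$-isotypic summand lies wholly in $V_1$ or $V_2$. Since $W$ is $\osp$-irreducible, all of its summands sit in one factor, say $V_1$, while at least one of the "extra" summands (with $l\ge k+M/2-1$) must sit in $V_2$. I would then lift the four arrows of Lemma \ref{arrows} to the ambient $\cP_k$ and push them to $\widetilde{\cP}_k$: on any summand $\theta^{2j}\cH_{a}^b\otimes\cH_c^f+R^2\cP_{k-2}$ with $l=(k-a-c)/2\ge k+M/2-1$, arrow \textbf{(1)} (applied to a representative in $\cP_k$ and projected) produces a non-zero element in a summand with strictly smaller $l$, hence inside $W\subset V_1$. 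This would contradict $V_1\cap V_2=0$, proving indecomposability.

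\emph{Main obstacle.} The delicate point is step~3: verifying that the Lemma \ref{arrows} arrows, which were constructed inside $\cH_k$, continue to give \emph{non-zero} maps between the $\mathfrak{o}(m)\oplus\mathfrak{sp}(2n)$-summands of the quotient $\widetilde{\cP}_k$. Concretely one must work with representatives in $\cP_k$, compute the action of the same $\cU(\osp)$-elements used in Lemma \ref{arrows}, apply the projections $\mQ^\bullet_{\bullet,\bullet}$ after reducing modulo $R^2\cP_{k-2}$, and use the relation $r^2\equiv -\theta^2\pmod{R^2\cP_{k-2}}$ to control the residues. Showing that non-vanishing coefficients $\lambda_i$ analogous to those in the proof of Lemma \ref{arrows} survive the passage to the quotient is the core computation; once this is done, the connectivity argument above closes the proof.
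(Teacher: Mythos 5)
Your overall strategy coincides with the paper's: the irreducible case is handled by injectivity of $\cH_k\to\cP_k/(R^2\cP_{k-2})$ plus the dimension equality; non-irreducibility in the range $2-\frac{M}{2}\le k\le 2-M$ follows because the image $W$ of $\cH_k$ is a proper submodule (the paper gets this by a one-line dimension count, you by matching $\mathfrak{o}(m)\oplus\mathfrak{sp}(2n)$-summands -- equivalent, if slightly heavier); and indecomposability is to come from four arrows between the summands $\theta^{2j}\,\cH^b_{k-2j-l}\otimes\cH^f_l+R^2\cP_{k-2}$ of the quotient. The paper itself only asserts the existence of these arrows (``Again we can construct four arrows\ldots''), so your level of detail matches the source, and you correctly isolate the arrow construction as the remaining work.

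However, your description of that remaining work is misstated in a way that would derail the computation if taken literally. You propose to check that the non-vanishing coefficients $\lambda_i$ of Lemma \ref{arrows} ``survive the passage to the quotient''. But the decisive arrow for indecomposability is the one lowering the $\theta^2$-exponent from $j=k+\frac{M}{2}-1$ (the lowest layer of the complement of $W$) into $j=k+\frac{M}{2}-2$ (the top layer of $W$): arrows \textbf{(2)}, \textbf{(3)}, \textbf{(4)} preserve or raise $j$ and so never leave the extra region, so only the arrow-\textbf{(1)} analogue can produce the contradiction. In Lemma \ref{arrows} the corresponding coefficient is $2l(\frac{M}{2}+p+q+l-1)$, which vanishes exactly at $l=k+\frac{M}{2}-1$ -- that vanishing is the very reason $R^{2k+M-2}\cH_{2-M-k}$ is a submodule of $\cH_k$. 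So there is no nonzero coefficient to ``survive'': the summands with $j\ge k+\frac{M}{2}-1$ are not images of summands of $\cH_k$ (those images are zero in the quotient), and the arrow must be recomputed on $\theta^{2j}$-representatives, where $L_{1,1+m}\theta^{2j}H^b_pH^f_q$ reduced modulo $R^2\cP_{k-2}$ and projected yields a coefficient proportional to $j$ alone, hence nonzero at the boundary. That fresh computation, not a transport of Lemma \ref{arrows}, is what closes the proof. (A minor slip: your sample extra summand $\theta^{2(k+M/2-1)}$ has degree $2k+M-2$, which equals $k$ only for $k=2-M$; in general the extra summands are $\theta^{2j}\cH^b_{k-2j-l}\otimes\cH^f_l$ with $j\ge k+\frac{M}{2}-1$.)
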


\subsection{The representations $L_{(k,0,\cdots,0)}^{m|2n}$}
In case $m=1$ the representation $\cH_k\cong L^{1|2n}_{(1,\cdots,1,0,\cdots,0)}$ is typical since all $\mathfrak{osp}(1|2n)$-representations are, see \cite{MR051963}. Their dimension is therefore well-known, see e.g. \cite{MR1773773}.

In case $m>1$ the representation $L_{(k,0,\cdots,0)}^{m|2n}$ is always atypical except for $m=2$ and $k=n$ or $k>2n$. This can be concluded from the atypicality conditions in chapter $36$ of \cite{MR1773773}. So the dimensions of most of the representations $L_{(k,0,\cdots,0)}^{m|2n}$ do not follow from the standard formula, they are obtained in the following theorem.

\begin{theorem}
\label{Lkrep}
For every $(m,n,k)\in\mN^3$ with $m>1$ the relation 
\begin{eqnarray*}
L_{(k,0,\cdots,0)}^{m|2n}&\cong& \cH_k/\left(\cH_k\cap R^2\cP_{k-2}\right)
\end{eqnarray*}
holds for $L_{(k,0,\cdots,0)}^{m|2n}$ the simple highest module of $\osp$ with highest weight $(k,0,\cdots,0)$ and $\cH_k$ the spherical harmonics on $\mR^{m|2n}$ of homogeneous degree $k$. 

In case $M=m-2n\in-2\mN$ and $2-\frac{M}{2}\le k\le 2-M$, the dimension of $L_{(k,0,\cdots,0)}^{m|2n}$ is given by
\begin{eqnarray*}
\dim L_{(k,0,\cdots,0)}^{m|2n}&=&\sum_{i=0}^{\min(k,2n)}\binom{2n}{i}\binom{k-i+m-1}{m-1}-\sum_{i=0}^{\min(k-2,2n)}\binom{2n}{i}\binom{k-i+m-3}{m-1}\\
&+&\sum_{i=0}^{\min(-M-k,2n)}\binom{2n}{i}\binom{2n-k-i-1}{m-1}-\sum_{i=0}^{\min(2-M-k,2n)}\binom{2n}{i}\binom{2n-k-i+1}{m-1}.
\end{eqnarray*}
In the other cases $\cH_k/\left(\cH_k\cap R^2\cP_{k-2}\right)=\cH_k$ and
\begin{eqnarray*}
\dim L_{(k,0,\cdots,0)}^{m|2n}&=&\sum_{i=0}^{\min(k,2n)}\binom{2n}{i}\binom{k-i+m-1}{m-1}-\sum_{i=0}^{\min(k-2,2n)}\binom{2n}{i}\binom{k-i+m-3}{m-1}
\end{eqnarray*}
holds.
\end{theorem}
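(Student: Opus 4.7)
My proof plan is to split into two cases according to whether $\cH_k$ is irreducible, identify the quotient $\cH_k/(\cH_k\cap R^2\cP_{k-2})$ in each case with $L_{(k,0,\ldots,0)}^{m|2n}$, and then read off the dimension. In both cases the main input is Theorem \ref{irrH}; in the non-irreducible range the further input is Corollary \ref{quotirr}, which pins down $\cH_k\cap R^2\cP_{k-2}$ as the maximal proper $\osp$-submodule of $\cH_k$.

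If $\cH_k$ is irreducible -- that is, $M\notin-2\mN$, or $M\in-2\mN$ with $k<2-M/2$ or $k>2-M$ -- the discussion preceding Lemma \ref{repPP} recalls that $\cH_k\cap R^2\cP_{k-2}=\emptyset$, so the quotient equals $\cH_k$ itself. As in the analysis in the preceding subsection, the highest weight vector of $\cH_k^b$ remains a highest weight vector for $\osp$ in the chosen simple root system, with weight $(k,0,\ldots,0)$, so $\cH_k\cong L_{(k,0,\ldots,0)}^{m|2n}$. The dimension is then given directly by \eqref{dimHk}.

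When $M\in-2\mN$ and $2-M/2\le k\le 2-M$, Corollary \ref{quotirr} identifies
\[
\cH_k\cap R^2\cP_{k-2}=R^{2k+M-2}\cH_{2-M-k}
\]
as the unique maximal proper $\osp$-submodule of $\cH_k$; hence the quotient is irreducible. The highest weight vector of $\cH_k^b\subset\cH_k$ lives in the summand indexed $(l,p,q)=(0,k,0)$ of Theorem \ref{decompintoirreps}, whereas (as noted in the proof of Corollary \ref{notdecomp}) every $\mathfrak{o}(m)\oplus\mathfrak{sp}(2n)$-summand occurring in the submodule $R^{2k+M-2}\cH_{2-M-k}$ has $l\ge k+M/2-1\ge 1$. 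Thus this vector descends to a nonzero highest weight vector of weight $(k,0,\ldots,0)$ in the quotient, and the quotient is isomorphic to $L_{(k,0,\ldots,0)}^{m|2n}$.

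For the dimensions the only nontrivial step is the identity $\dim R^{2k+M-2}\cH_{2-M-k}=\dim\cH_{2-M-k}$. Since $2-M-k\le -M/2<2-M/2$, Theorem \ref{irrH} guarantees that $\cH_{2-M-k}$ is irreducible as an $\osp$-module. As $R^{2k+M-2}$ commutes with the $\osp$-action, its kernel on $\cH_{2-M-k}$ is an $\osp$-submodule, hence either $0$ or the whole space; but the image is nonzero, since by Theorem \ref{decompintoirreps} it contains for instance $f_{k+M/2-1,2-M-k,0}\cH_{2-M-k}^b$, so the kernel must be $0$ and $R^{2k+M-2}$ is injective on $\cH_{2-M-k}$. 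Consequently $\dim L_{(k,0,\ldots,0)}^{m|2n}=\dim\cH_k-\dim\cH_{2-M-k}$. Substituting $p=2-M-k$ into \eqref{dimHk}, and using the elementary identities $p-i+m-1=2n-k-i+1$ and $p-i+m-3=2n-k-i-1$, produces precisely the correction terms in the stated formula. The only place requiring care is the injectivity step above, but it follows at once from Schur-lemma reasoning on the irreducible $\osp$-module $\cH_{2-M-k}$ combined with the $\osp$-invariance of $R^{2k+M-2}$.
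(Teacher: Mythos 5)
Your proposal is correct and follows essentially the same route as the paper: the paper's proof simply cites Theorem \ref{irrH}, Corollary \ref{quotirr}, equation \eqref{voordimLk} and the dimension formula \eqref{dimHk}, and your argument supplies exactly the details left implicit there (the identification of the highest weight vector in the quotient, the injectivity of multiplication by $R^{2k+M-2}$ on the irreducible module $\cH_{2-M-k}$, and the substitution $p=2-M-k$ into \eqref{dimHk}). No gaps; the extra care you take with the Schur-lemma injectivity step is a worthwhile addition rather than a deviation.
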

\begin{proof}
The relation $L_{(k,0,\cdots,0)}^{m|2n}\cong \cH_k/\left(\cH_k\cap R^2\cP_{k-2}\right)$ is a direct consequence of Theorem \ref{irrH} and Corollary \ref{quotirr}. The dimensions then immediately follow from equation \eqref{voordimLk} and the dimensions of the spaces $\cH_k$ in equation \eqref{dimHk}.
\end{proof}

In \cite{MR0621253} the representations $L_{(k,0,\cdots,0)}^{m|2n}$ were constructed as the Cartan product inside tensor products of the form $(L_{(1,0,\cdots,0)}^{m|2n})^{\otimes k}$. Here, the Cartan product corresponds to the traceless supersymmetric part, which can be identified with $\cP_k/(R^2\cP_{k-2})$. However as we have seen this construction only holds when $m-2n\not\in-2\mN$, due to the non-completely reducibility, see lemma \ref{repPP}. This was overlooked in the formal approach in \cite{MR0621253} were for instance the number $m-2n$ appears as a pole in equation $(4.21)$. The correct construction of $L_{(1,0,\cdots,0)}^{\otimes k}$ inside the supersymmetric tensor products of $V=L_{(1,0,\cdots,0)}^{m|2n}$ is given in theorem \ref{Lkrep}.

\begin{remark}
\label{stanroot}
In the standard choice of positive and negative roots with the corresponding distinguished basis (see e.g. \cite{MR1773773}) would have been used the highest weight of the representation $\cH_k$ would be given by 
\[\delta_1+\cdots+\delta_{\min(k,n)}+(k-\min(k,n))\epsilon_1\]
in stead of $k\epsilon_1$. This shows it is more elegant to use the choice of positive roots made in \cite{MR2395482} for these kind of representations.
\end{remark}

In \cite{MR2395482} it was proved that for $M=m-2n>2$ the following branching rule holds:
\begin{eqnarray}
\label{branching}
L_{(k,0,\cdots,0)}^{m|2n}&\cong&\bigoplus_{l=0}^kL_{(l,0,\cdots,0)}^{m-1|2n} \qquad\mbox{as an $\mathfrak{osp}(m-1|2n)$-module.}
\end{eqnarray}
This can immediately be extended to $m-2n=2$, but not to the case $m-2n\le 1$ because of the appearance of not completely reducible representations. The method applied in \cite{MR2395482} does however lead to the relation
\begin{eqnarray*}
\cP_k/(R^2\cP_{k-2})&\cong&\bigoplus_{l=0}^k\left(\cP'_l/(R_1^2\cP'_{l-2})\right)\qquad\mbox{as an $\mathfrak{osp}(m-1|2n)$-module, for general $M$,}
\end{eqnarray*}
where $\cP'$ denotes the polynomials on $\mR^{m-1|2n}$ and $R_1^2$ the generalized norm squared on $\mR^{m-1|2n}$. This leads to the following two conclusions based on Theorem \ref{Lkrep}, Lemma \ref{repPP} and Corollary \ref{notdecomp}:
\begin{itemize}
\item If $m-2n\le 1$ but $m-2n\not\in-2\mN$, $L_{(k,0,\cdots,0)}^{m|2n}$ is not completely reducible as an $\mathfrak{osp}(m-1|2n)$-representation if $k\ge 2+\frac{1-M}{2}$. Relation \eqref{branching} still holds if $k< 2+\frac{1-M}{2}$.
\item If $m-2n\in-2\mN$, the space $\cP_k/(R^2\cP_{k-2})$ which is not necessarily completely reducible as an $\osp$-representation decomposes into irreducible $\mathfrak{osp}(m-1|2n)$-representations as 
\begin{eqnarray}
\label{PLk}
\cP_k/(R^2\cP_{k-2})&\cong& \bigoplus_{l=0}^kL_{(l,0,\cdots,0)}^{m-1|2n}.
\end{eqnarray}
Therefore, relation \eqref{branching} still holds if $k<2-\frac{M}{2}$ or $k>2-M$.
\end{itemize}

We summarize these results in the following theorem and calculate the branching rule for the other cases when $m-2n\in-2\mN$. Therefore we obtained all the branching rules for when $L_{(k,0,\cdots,0)}^{m|2n}$ is a completely reducible $\mathfrak{osp}(m-1|2n)$-representation.

\begin{theorem}
\label{branchingThm}
In case $m-2n>1$, or $m-2n\in 1-2\mN$ with $k<2+\frac{1-m}{2}+n$, or $m-2n\in-2\mN$ with $k<2-\frac{m}{2}+n$ or $k>2-m+2n$, the branching rule
\begin{eqnarray*}
L_{(k,0,\cdots,0)}^{m|2n}&\cong&\bigoplus_{l=0}^kL_{(l,0,\cdots,0)}^{m-1|2n} \qquad\mbox{as an $\mathfrak{osp}(m-1|2n)$-module},
\end{eqnarray*}
holds. In case $m-2n\in-2\mN$ and $2-\frac{m}{2}+n\le k\le 2-m+2n$, the branching rule
\begin{eqnarray*}
L_{(k,0,\cdots,0)}^{m|2n}&\cong&\bigoplus_{l=3-m+2n-k}^kL_{(l,0,\cdots,0)}^{m-1|2n} \qquad\mbox{as an $\mathfrak{osp}(m-1|2n)$-module},
\end{eqnarray*}
holds. In the other cases ($m-2n\in1-2\mN$ with $k\ge 2+\frac{1-m}{2}+n$), $L_{(k,0,\cdots,0)}^{m|2n}$ is not completely reducible as an $\mathfrak{osp}(m-1|2n)$-representation.
\end{theorem}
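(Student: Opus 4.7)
The plan is to handle the three regimes of the statement separately, using as common tool the $\mathfrak{osp}(m-1|2n)$-decomposition $\cP_k/(R^2\cP_{k-2})\cong\bigoplus_{l=0}^k\cP'_l/(R_1^2\cP'_{l-2})$ (the extension of the argument of \cite{MR2395482} to all $M$ already noted in the discussion), together with Theorem \ref{Lkrep} and Lemma \ref{repPP}. In the first (regular) regime the hypotheses on $(M,k)$ force via Theorem \ref{irrH} that $\cH_k$ is simple as an $\osp$-module and every $\cH'_l$ with $l\in\{0,\ldots,k\}$ simple as an $\mathfrak{osp}(m-1|2n)$-module, so both sides of the displayed isomorphism become direct sums of simples $L^{m|2n}_{(k,0,\ldots,0)}$ respectively $L^{m-1|2n}_{(l,0,\ldots,0)}$, and the full branching rule is immediate.

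For the second regime, assume $m-2n=-2q\in-2\mN$ and $2+q\le k\le 2+2q$. Since $m-1-2n$ is then odd, every $\cH'_l$ is still simple, so $\cP_k/(R^2\cP_{k-2})\cong\bigoplus_{l=0}^k L^{m-1|2n}_{(l,0,\ldots,0)}$ is a completely reducible, multiplicity-free $\mathfrak{osp}(m-1|2n)$-module. By Theorem \ref{Lkrep} the simple module $L^{m|2n}_{(k,0,\ldots,0)}=\cH_k/(\cH_k\cap R^2\cP_{k-2})$ injects into it as an $\osp$-submodule, hence equals $\bigoplus_{l\in S}L^{m-1|2n}_{(l,0,\ldots,0)}$ for some $S\subseteq\{0,\ldots,k\}$. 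To pin down $S$ I would identify the $\osp$-quotient $Q:=\cP_k/(\cH_k+R^2\cP_{k-2})$ with the simple module $L^{m|2n}_{(2-M-k,0,\ldots,0)}$; applying the first-regime branching to it (the weight $2-M-k\in[0,-M/2]$ lies safely in the good range) then gives $Q\cong\bigoplus_{l=0}^{2-M-k}L^{m-1|2n}_{(l,0,\ldots,0)}$, and comparing forces $\{0,\ldots,k\}\setminus S=\{0,\ldots,2-M-k\}$, i.e. $S=\{3-M-k,\ldots,k\}$, as claimed.

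The key technical step, and the main obstacle, is the isomorphism $Q\cong L^{m|2n}_{(2-M-k,0,\ldots,0)}$. My intended construction is the $\osp$-equivariant map $\pi_\cH\circ\nabla^{2a}\colon\cP_k\to\cH_{2-M-k}$ with $a:=k+M/2-1$, where $\pi_\cH$ is the projection onto the harmonic summand in Fischer's decomposition of $\cP_{2-M-k}$ (legitimate because $2-M-k\le -M/2<2-M/2$ keeps us clear of the Fischer obstruction range). Both constituents are surjective — iterated surjectivity of $\nabla^2$ between polynomial spaces follows from the identity $\dim\cH_j=\dim\cP_j-\dim\cP_{j-2}$ read off \eqref{dimHk} — so the composition surjects onto the simple $\cH_{2-M-k}\cong L^{m|2n}_{(2-M-k,0,\ldots,0)}$. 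The kernel trivially contains $\cH_k$; for $R^2\cP_{k-2}$ the plan is to exploit the operator identity $[\nabla^{2a},R^2]=4a(\mE+M/2+a-1)\nabla^{2(a-1)}$, observing that $(\mE+M/2+a-1)$ acts by $(2-M-k+M/2+a-1)=0$ on $\cP_{2-M-k}=\nabla^{2(a-1)}(\cP_{k-2})$, while the residual $R^2\nabla^{2a}f$ contribution lies in the $R^2$-part of Fischer's decomposition of $\cP_{2-M-k}$ and is annihilated by $\pi_\cH$. The dimension identity $\dim(\cH_k+R^2\cP_{k-2})=\dim\cP_k-\dim\cH_{2-M-k}$, obtained from $\dim(\cH_k\cap R^2\cP_{k-2})=\dim\cH_{2-M-k}$ in Corollary \ref{quotirr}, then forces equality of the kernel with $\cH_k+R^2\cP_{k-2}$ and hence the desired isomorphism.

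The third regime, $m-2n\in 1-2\mN$ with $k\ge 2+(1-M)/2$, is essentially resolved by the discussion preceding the theorem: $M$ odd makes $\cH_k$ simple and $L^{m|2n}_{(k,0,\ldots,0)}\cong\cP_k/(R^2\cP_{k-2})$, while $m-1-2n=M-1\in-2\mN$ combined with Lemma \ref{repPP} forces the summand $\cP'_l/(R_1^2\cP'_{l-2})$ to fail to be completely reducible over $\mathfrak{osp}(m-1|2n)$ precisely for $l\in[(5-M)/2,3-M]$. The hypothesis $k\ge(5-M)/2$ places this bad range inside $\{0,\ldots,k\}$, so $\bigoplus_{l=0}^k\cP'_l/(R_1^2\cP'_{l-2})\cong L^{m|2n}_{(k,0,\ldots,0)}$ contains a non-completely-reducible summand and is itself not completely reducible, completing the proof.
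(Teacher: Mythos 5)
Your argument is correct, and for the only genuinely delicate part of the theorem --- pinning down the index set in the regime $m-2n\in-2\mN$, $2-\frac{M}{2}\le k\le 2-M$ --- it takes a different route from the paper. Both proofs start identically: the first and third regimes are handled exactly as in the discussion preceding the theorem, and in the middle regime both embed $L^{m|2n}_{(k,0,\cdots,0)}\cong\cH_k/(\cH_k\cap R^2\cP_{k-2})$ into $\cP_k/(R^2\cP_{k-2})\cong\bigoplus_{l=0}^kL^{m-1|2n}_{(l,0,\cdots,0)}$, which is semisimple and multiplicity free over $\mathfrak{osp}(m-1|2n)$ since $m-1-2n$ is odd, so that $L^{m|2n}_{(k,0,\cdots,0)}\cong\bigoplus_{l\in S}L^{m-1|2n}_{(l,0,\cdots,0)}$ for some $S\subset\{0,\cdots,k\}$. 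The paper then determines $S$ by a combinatorial multiplicity count: it extracts the $\mathfrak{sp}(2n)$-trivial constituents of both sides via Theorem \ref{decompintoirreps} and equation \eqref{Hkaccent}, branches them to $\mathfrak{o}(m-1)$, and matches multiplicities of $L^{m-1|0}_{(j,0,\cdots,0)}$. You instead identify the complementary quotient $\cP_k/(\cH_k+R^2\cP_{k-2})$ with the simple module $L^{m|2n}_{(2-M-k,0,\cdots,0)}$ by exhibiting the explicit $\osp$-equivariant surjection $\pi_{\cH}\circ\nabla^{2a}$, $a=k+M/2-1$, onto $\cH_{2-M-k}$ and computing its kernel by the $\mathfrak{sl}_2$ commutator identity together with the dimension count from Corollary \ref{quotirr} and \eqref{dimHk}; the already-established regular-regime branching applied to this quotient then forces $S=\{3-M-k,\cdots,k\}$. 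I checked the technical steps: the well-definedness of $\pi_{\cH}$ (since $2-M-k\le-M/2<2-\frac{M}{2}$ gives $\cP_{2-M-k}=\cH_{2-M-k}\oplus R^2\cP_{-M-k}$), the vanishing of $(\mE+M/2+a-1)$ on $\cP_{2-M-k}$, and the surjectivity of $\nabla^{2}$ from $\dim\cH_j=\dim\cP_j-\dim\cP_{j-2}$ all hold. Your approach is arguably more structural and yields as a by-product the isomorphism $\cP_k/(\cH_k+R^2\cP_{k-2})\cong L^{m|2n}_{(2-M-k,0,\cdots,0)}$ (making the Loewy structure of $\cP_k/(R^2\cP_{k-2})$ explicit), whereas the paper's count is more elementary in that it needs no new operator identities, only the decomposition data already recorded in Theorem \ref{decompintoirreps}.
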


\begin{proof}
The first cases were already proved. Theorem \ref{Lkrep} and equation \eqref{PLk} imply that in the second case
\begin{eqnarray*}
L_{(k,0,\cdots,0)}^{m|2n}\cong \cH_k/\left(\cH_k\cap R^2\cP_{k-2}\right) \subset \cP_k/(R^2\cP_{k-2}) \cong \bigoplus_{l=0}^kL_{(l,0,\cdots,0)}^{m-1|2n}.
\end{eqnarray*}
This implies that 
\begin{eqnarray}
\label{knownformbranch}
L_{(k,0,\cdots,0)}^{m|2n}&\cong&\bigoplus_{p\in I}L_{(p,0,\cdots,0)}^{m-1|2n} \qquad\mbox{as an $\mathfrak{osp}(m-1|2n)$-module,}
\end{eqnarray}
with $I\subset\{0,\cdots,k\}$. We look at the decomposition of $L_{(k,0,\cdots,0)}^{m|2n}$ into simple $\mathfrak{o}(m)\oplus \mathfrak{sp}(2n)$-modules. Theorem \ref{Lkrep} and the proof of Corollary \ref{notdecomp} (in particular equation \eqref{Hkaccent}) imply that $\mathfrak{sp}(2n)$-trivial the part of this decomposition is given by
\begin{eqnarray*}
L_{(k,0,\cdots,0)}^{m|2n}&\rightarrow&\bigoplus_{l=0}^{k+\frac{M}{2}-2}L_{(k-2l,0,\cdots,0)}^{m|0},
\end{eqnarray*}
since $\min(n,\lfloor \frac{k}{2}\rfloor,k+\frac{M}{2}-2)=k+\frac{M}{2}-2$. Branched to $\mathfrak{o}(m-1)$ this gives $\bigoplus_{l=0}^{k+\frac{M}{2}-2}\bigoplus_{j=0}^{k-2l}L_{(k-2l-j,0,\cdots,0)}^{m-1|0}$. Theorem \ref{decompintoirreps} and Theorem \ref{Lkrep} imply the $\mathfrak{sp}(2n)$-trivial part of $L_{(p,0,\cdots,0)}^{m-1|2n}$ for $p\le k$ is
\begin{eqnarray*}
L_{(p,0,\cdots,0)}^{m-1|2n}&\to& \bigoplus_{l=0}^{\lfloor \frac{p}{2}\rfloor}L_{(p-2l,0,\cdots,0)}^{m-1|0}
\end{eqnarray*}
since $\min(n,\lfloor \frac{p}{2}\rfloor)=\lfloor \frac{p}{2}\rfloor$ for $p\le k$. If equation \eqref{knownformbranch} holds,  the equation
\begin{eqnarray}
\label{eqforbranching}
\bigoplus_{l=0}^{k+\frac{M}{2}-2}\bigoplus_{j=0}^{k-2l}L_{(k-2l-j,0,\cdots,0)}^{m-1|0}&=&\bigoplus_{p\in I}\bigoplus_{l=0}^{\lfloor \frac{p}{2}\rfloor}L_{(p-2l,0,\cdots,0)}^{m-1|0}
\end{eqnarray}
must hold as well. It turns out that this equation is enough to determine $I$. We introduce the shorthand notation $(q)=L_{(q,0,\cdots,0)}^{m-1|0}$. In the left-hand side of equation \eqref{eqforbranching} the module $(j)$ appears $1+\lfloor\frac{k-j}{2}\rfloor$ times for $j\ge -k-M+3$. To obtain all these in the right-hand side of equation \eqref{eqforbranching}, $\{3-M-k,\cdots,k\}\subset I$ must hold. Since the module $(j)$ appears $k+M/2-1$ times for $j\le -k-M+2$ we find  $\{3-M-k,\cdots,k\}= I$, because otherwise there would be too many $\mathfrak{o}(m-1)$-modules in the right-hand side.
\end{proof}

In the statements above we assumed $m\not=2$ and $m\not=1$. The equivalent statements for those dimensions are straightforward.

\subsection{The algebra $\mathfrak{osp}(4n+1|2m)$}

In the theory of Howe dual pairs the dual algebras are each others' centralizers inside a bigger algebra, such that the relevant representation (which has a multiplicity free decomposition into irreducible pieces under the action of the dual pair) constitutes an irreducible representation of the big algebra.

For the Fischer decomposition of polynomials on $\mR^m$ this is $\mathfrak{o}(m)\times \mathfrak{sl}_2\subset \mathfrak{sp}(2m)$, since Howe \cite{MR0986027} originally considered dual pairs inside the symplectic group. The polynomials $\mR[x_1,\cdots, x_m]$ fall apart into two irreducible $\mathfrak{sp}(2m)$-representations, corresponding to the even and odd polynomials. Therefore there are two non-isomorphic representations corresponding to the Howe dual pair $\mathfrak{sp}(2m)\supset \mathfrak{o}(m)\times \mathfrak{sl}_2$. To obtain the polynomials as one irreducible representation, $\mathfrak{sp}(2m)$ needs to be embedded inside the superalgebra $\mathfrak{osp}(1|2m)$.

In superspace this construction of the bigger algebra such that $\cP$ is an irreducible representation is as follows. The Lie superalgebra spanned by the differential operators $X_i$, $\partial_{X_i}$, $X_iX_j$, $\partial_{X_i}\partial_{X_j}$ and $X_i\partial_{X_j}+\frac{(-1)^{[i]}\delta_{ij}}{2}$ where the generators $X_i$ and $\partial_{X_i}$ have gradation $1-[i]$, is $\mathfrak{osp}(4n+1|2m)$. This is closely related to the oscillator realization of the orthosymplectic Lie superalgebra, see chapter 29 in \cite{MR1773773}, or \cite{MR1272070}. Clearly the polynomials $\cP$ form an irreducible representation with action of $\mathfrak{osp}(4n+1|2m)$ given above. This corresponds to the irreducible highest weight representation
\begin{eqnarray*}
L^{4n+1|2m}_{(\frac{1}{2},\cdots, \frac{1}{2},-\frac{1}{2},\cdots,-\frac{1}{2})},
\end{eqnarray*}
where $1/2$ is repeated $2n$ times and $-1/2$ $m$ times, the highest weight vector is the constant $1\in\cP$.

Also the realizations of $\osp$ and $\mathfrak{sl}_2$ as differential operators on $\cP$ are clearly embedded in this realization of $\mathfrak{osp}(4n+1|2m)$. So we have obtained the bigger algebra in which $\osp$ and $\mathfrak{sl}_2$ are each others' centralizers:
\begin{eqnarray*}
\osp\times \mathfrak{sl}_2&\subset& \mathfrak{osp}(4n+1|2m).
\end{eqnarray*}
If we would leave out the differential operators $X_i$ and $\partial_{X_i}$ we would obtain the superalgebra $\mathfrak{osp}(4n|2m)$. For this algebra, the representation $\cP$ decomposes into two irreducible representations corresponding to the polynomials of even and odd degree, as in the classical case. So we obtain
\begin{eqnarray*}
\mathfrak{osp}(4n|2m)\supset\mathfrak{osp}(m|2n)\times \mathfrak{sl}_2&\mbox{ as a generalization of}& \mathfrak{sp}(2m)\supset\mathfrak{o}(m)\times \mathfrak{sl}_2.
\end{eqnarray*}

\subsection*{Acknowledgment}
The author would like to thank prof. Ruibin Zhang, prof. Joris Van der Jeugt and prof. Bent \O rsted for helpful suggestions and comments.

%%   No changes. Try to follow several easy rules while typesetting your
%% bibliography: 1) Paper name should be itacised, 2) use comma (,) as
%% separator, 3) issue number should be boldface.


\begin{thebibliography}{ASM}


\bibitem{MR2667819}
{A. Alldridge, J. Hilgert},
\newblock{\it Invariant Berezin integration on homogeneous supermanifolds,}
\newblock {J. Lie Theory} {\bf 20} (2010), 65--91.

\bibitem{MR0621253}
A. Baha Balantekin, I. Bars,
\newblock{\it Dimension and character formulas for Lie supergroups.}
\newblock{J. Math. Phys. 22 (1981), 1149--1162. }

\bibitem{MR2555976}
L. Balduzzi, C. Carmeli, G. Cassinelli,
\newblock{\it Super $G$-spaces},
\newblock{ Symmetry in mathematics and physics}, 
{Contemp. Math.}, {\bf490}, Amer. Math. Soc., Providence, RI, 2009. 

\bibitem{Berezin}
F.~A. Berezin, 
\newblock {\it Introduction to superanalysis 
  variables},
\newblock (A. A. Kirillov, ed.),
\newblock Kluwer Academic Publishers, 1987.

\bibitem{MR0998351}
C.P. Boyer, O.A. S\'anchez-Valenzuela,
\newblock{\it Lie supergroup actions on supermanifolds,}
\newblock{Trans. Amer. Math. Soc.} {\bf323} (1991), 1, 151--175.

\bibitem{MR2207328}
C. Carmeli, G. Cassinelli, A. Toigo, V.S. Varadarajan,
\newblock{\it Unitary representations of super Lie groups and applications to the classification and multiplet structure of super particles},
\newblock{Comm. Math. Phys.} {\bf263} (2006), 1, 217--258.

\bibitem{MR2441598}
S.J. Cheng, W. Wang,
\newblock{\it Remarks on modules of the ortho-symplectic Lie superalgebras.}
\newblock{Bull. Inst. Math. Acad. Sin. (N.S.)} {\bf3} (2008), 353--372. 

\bibitem{MR2028498}
S.J. Cheng, R.B. Zhang, 
\newblock{\it Howe duality and combinatorial character formula for orthosymplectic Lie superalgebras},
\newblock{Adv. Math.} {\bf182} (2004), 1, 124--172.

\bibitem{MR2539324}
K. Coulembier, H. De~Bie, F. Sommen,
\newblock {\it Integration in superspace using distribution theory},
\newblock {J. Phys. A: Math. Theor.} {\bf 42} (2009), 395206.

\bibitem{CDBS3}
K. Coulembier, H. De Bie, F. Sommen,
\newblock {\it Orthosymplectically invariant functions in superspace},
\newblock {J. Math. Phys.} {\bf 51} (2010), 083504.

\bibitem{Mehler}
{K. Coulembier, H. De~Bie, F. Sommen},
\newblock {\it Orthogonality of Hermite polynomials in superspace and Mehler type formulae.}
\newblock {Proc. London Math. Soc.} (2011), doi: 10.1112/plms/pdr014. 



\bibitem{DBS9}
H. De~Bie,
\newblock {\it Fourier transform and related integral transforms in superspace},
\newblock {J. Math. Anal. Appl.} {\bf345} (2008), 147--164.

\bibitem{DBE1}
H. De~Bie, D. Eelbode, F. Sommen, 
\newblock {\it Spherical harmonics and integration in superspace II},
\newblock {J. Phys. A: Math. Theor.} {\bf 42} (2009), 245204.


\bibitem{MR1827871}
{C. Dunkl, Y. Xu,}
\newblock{\it Orthogonal polynomials of several variables,}
\newblock{Cambridge University Press, Cambridge, 2001. }

\bibitem{MR1773773}
{L. Frappat, A. Sciarrino, P. Sorba},
\newblock {\it Dictionary on {L}ie algebras and superalgebras},
\newblock Academic Press Inc., San Diego, CA, 2000.
  
\bibitem{MR1153249}
{ W. Fulton, J. Harris},
\newblock {\it Representation theory}, 
\newblock Springer-Verlag, New York, 1991.
  
\bibitem{MR2434470}
O. Goertsches,
\newblock{\it Riemannian supergeometry},
\newblock{Math. Z.} {\bf260} (2008), 3, 557--593.

\bibitem{MR0986027}
{R. Howe},
\newblock {\it Remarks on classical invariant theory},
\newblock {Trans. Amer. Math. Soc.} {\bf313} (1989), 2, 539--570.

\bibitem{MR0546778}
{P.D. Jarvis, H.S. Green},
\newblock{\it Casimir invariants and characteristic identities for generators of the general linear, special linear and orthosymplectic graded Lie algebras},
\newblock{J. Math. Phys.} {\bf20} (1979), 10, 2115--2122.

\bibitem{MR051963}
{V. Kac,}
\newblock{\it Representations of classical Lie superalgebras}, 
\newblock{Lecture Notes in Math.} {\bf676}, Springer, Berlin, 1978.

\bibitem{MR0580292}
B. Kostant,
\newblock {\it Graded manifolds, graded Lie theory, and prequantization}, 
{ Lecture Notes in Math.} {\bf 570}, Springer, Berlin, 1977.

\bibitem{MR0760837}
J.L. Koszul, 
\newblock {\it Graded manifolds and graded Lie algebras},
\newblock Proceedings of the international meeting on geometry and physics (Florence, 1982), 71Ð84, Pitagora, Bologna, 1983.

\bibitem{Leites}
D.A. Leites,
\newblock{\it Introduction to the theory of supermanifolds,}
\newblock{Russian Math. Surveys} {\bf 35} (1980), 1--64.

\bibitem{MR1272070}
K. Nishiyama,
\newblock{\it Super dual pairs and highest weight modules of orthosymplectic algebras},
\newblock{Adv. Math.} {\bf104} (1994), 1, 66--89.

\bibitem{MR1897209}
H. Saleur, B. Wehefritz-Kaufmann,
\newblock{\it Integrable quantum field theories with ${\rm OSp}(m/2n)$ symmetries},
\newblock{Nuclear Phys. B} {\bf628} (2002), 3, 407--441.

\bibitem{MR1845224}
{M. Scheunert, R.B. Zhang,}
\newblock {\it Invariant integration on classical and quantum Lie supergroups},
\newblock{J. Math. Phys.} {\bf42} (2001), 8, 3871--3897.

\bibitem{MR2172158}
{M. Scheunert, R.B. Zhang,}
\newblock{\it Integration on Lie supergroups: a Hopf superalgebra approach},
\newblock{J. Algebra} {\bf292} (2005),  2, 324--342.

\bibitem{MR2125586}
{A.F. Schunck, C. Wainwright},
\newblock{\it A geometric approach to scalar field theories on the supersphere},
\newblock{J. Math. Phys}. {\bf46} (2005), 3, 033511.

\bibitem{MR1827078}
{A. Sergeev},
\newblock{\it An analog of the classical invariant theory for Lie superalgebras},
\newblock{Michigan Math. J.} {\bf49} (2001), 1, 113--168.

\bibitem{MR0922822}
J. Van der Jeugt, 
\newblock{\it Orthosymplectic representations of Lie superalgebras},
\newblock{Lett. Math. Phys.} {\bf14} (1987), 4, 285--291. 

\bibitem{MR2069561}
V.S. Varadarajan,
\newblock{\it Supersymmetry for mathematicians: an introduction,}
\newblock{Courant Lecture Notes in Mathematics, 11.; American Mathematical Society, Providence, RI, 2004.}

\bibitem{Vishnyakova}
E.G. Vishnyakova,
\newblock{\it On complex Lie supergroups and split homogeneous supermanifolds,}
\newblock{Transfomation groups}, {\bf16} (2011), 265--285

\bibitem{MR2395482}
R.B. Zhang, 
\newblock {\it Orthosymplectic Lie superalgebras in superspace analogues of quantum  Kepler problems},
\newblock {Comm. Math. Phys.} {\bf 280} (2008), 545--562.




\end{thebibliography}
\end{document}